\DeclareMathAlphabet{\mathpzc}{OT1}{pzc}{m}{it}
\newtheorem{propo}{Proposition}[section]
\newtheorem{lemma}[propo]{Lemma}
\newtheorem{definition}[propo]{Definition}
\newtheorem{thm}[propo]{Theorem}
\newcommand{\sign}{\text{sign}}
\def\E{\mathds E}
\def\1{\mathds 1}
\def\npi{{\rm npi}}
\begin{document}
%
\title{Asymptotic Analysis of Complex LASSO via Complex Approximate Message Passing (CAMP)}
%
%
%

\author{Arian~Maleki,
        Laura~Anitori,
        Zai Yang,
        and~ Richard~Baraniuk,~\IEEEmembership{Fellow,~IEEE}
\thanks{Arian Maleki is with the Department of Statistics, Columbia University, New York city, NY (e-mail:arian.maleki@rice.edu).}
\thanks{Laura Anitori is with TNO, The Hague, The Netherlands (e-mail: laura.anitori@tno.nl)}
\thanks{Zai Yang is EXQUISITUS, Center for E-City, School of Electrical and Electronic Engineering, Nanyang Technological University, Singapore (e-mail: yang0248@e.ntu.edu.sg)}
\thanks{Richard Baraniuk is with the Department of Electrical and Computer Engineering, Rice University, Houston, TX (e-mail: richb@rice.edu).}
\thanks{Manuscript received August 1, 2011; revised July 1, 2012.}}

%
%

\markboth{IEEE Transactions on Information Theory,~Vol.~.., No.~.., ..}%
{}
%



\maketitle

\begin{abstract}
\noindent{Recovering a sparse signal from an undersampled set of
random linear measurements is the main problem of interest in
compressed sensing. In this paper, we consider the case where both
the signal and the measurements are complex-valued. We study the popular
recovery method of $\ell_1$-regularized least squares or
LASSO. While several studies have shown that LASSO
provides desirable solutions under certain conditions, the precise
asymptotic performance of this algorithm in the complex setting is
not yet known. In this paper, we extend the approximate
message passing (AMP) algorithm to solve the complex-valued LASSO problem and 
obtain the complex approximate message passing
algorithm (CAMP). We then generalize the state evolution framework
recently introduced for the analysis of AMP to the complex setting.
Using the state evolution, we derive accurate formulas for the phase
transition and noise sensitivity of both LASSO and CAMP. Our theoretical results are concerned with the case of i.i.d. Gaussian sensing
matrices. Simulations confirm that our results hold for a larger class of random matrices.  }
\end{abstract}

\begin{IEEEkeywords}
compressed sensing, complex-valued LASSO, approximate message passing, minimax analysis.
\end{IEEEkeywords}

%
\IEEEpeerreviewmaketitle

\section{Introduction}\label{sec:intro}

Recovering a sparse signal from an undersampled set of random linear
measurements is the main problem of interest in compressed sensing (CS). In
the past few years many algorithms have been proposed for signal
recovery, and their performance has been analyzed both analytically
and empirically \cite{TrWr10,ChDoSa98,DoMaMo09,BeFr08,YaZhDeLu11,MaDo09sp}. However, whereas most of the
theoretical work has focussed on the case of real-valued signals and
measurements, in many applications, such as magnetic resonance imaging and radar,
the signals are more easily representable in the complex domain \cite{Lustig, AnMaBa12, BaSt07, HeSt09}.
In such applications, the real and imaginary
components of a complex signal are often either zero or non-zero simultaneously.
Therefore, recovery algorithms may benefit from this prior
knowledge. Indeed the results presented in this paper confirm this
intuition.

Motivated by this observation, we investigate the performance
of the complex-valued LASSO in the case of noise-free and noisy
measurements. The derivations are based on the state
evolution (SE) framework, presented previously in \cite{DoMaMo09}.
Also a new algorithm, complex approximate message passing
(CAMP), is presented to solve the complex LASSO problem. This
algorithm is an extension of the AMP algorithm
\cite{DoMaMo09,DoMaMoNSPT}. However, the extension of AMP and its
analysis from the real to the complex setting
is not trivial; although CAMP shares
 some interesting features with AMP, it is substantially more challenging to establish the characteristics of CAMP. Furthermore, some important features of CAMP are specific to complex-valued signals and the relevant optimization problem. Note that the extension of the Bayesian-AMP algorithm to complex-valued signals has been considered elsewhere \cite{ViSc12, SoPoSc10} and is not the main focus of this work.      

In the next section, we briefly review some
of the existing algorithms for sparse signal recovery in the
real-valued setting and then focus on recovery algorithms for the
complex case, with particular attention to the AMP and CAMP
algorithms.  We then introduce two criteria which we use as
measures of performance for various algorithms in noiseless and noisy settings.
Based on these criteria, we establish the novelty of our results compared to the existing
work. An overview of the organization of the rest of the paper is provided in
Section \ref{sec:organization}.

\subsection{Real-valued sparse recovery algorithms}
Consider the problem of recovering a sparse vector $s_o \in
\mathds{R}^N$ from a noisy undersampled set of linear measurements
$y \in \mathds{R}^n$, where $y = A s_o+ w$ and $w$ is the noise. Let
$k$ denote the number of nonzero elements of $s_o$. The measurement
matrix  $A$ has i.i.d. elements from a given distribution on
$\mathds{R}$. Given $y$ and $A$, we seek an approximation to $s_o$. 

Many recovery
algorithms have been proposed, ranging from convex relaxation techniques to
greedy approaches to iterative thresholding schemes. See \cite{TrWr10} and the
references therein for an exhaustive list of algorithms. \cite{MaDo09sp} has compared several different
recovery algorithms and concluded that among the algorithms compared in that paper
the $\ell_1$-regularized least squares, a.k.a. LASSO or BPDN \cite{TibLASSO, ChDoSa98} that seeks the minimizer of
$\min_x \frac{1}{2} \|y-Ax\|_2^2+ \lambda \|x\|_1$ provides the best performance in the
sense of the sparsity/measurement tradeoff. Recently, several iterative thresholding algorithms have
been proposed for solving LASSO using few computations per-iteration; this enables the use of the
LASSO in high-dimensional problems. See \cite{MaBa11} and the references therein for an exhaustive list of these algorithms.

In this paper, we are particularly interested in
AMP \cite{DoMaMo09}. Starting from $x^0 =0$ and $z^0= y$, AMP uses the following iterations:
\begin{eqnarray*}
x^{t+1} &=& \eta_{\circ}\big(x^t+ A^T z^t;\tau_t\big),\\
z^t &=& y- Ax^t + \frac{|I^t|}{n}z^{t-1},
\end{eqnarray*}
where $\eta_{\circ}(x; \tau) =  (|x|- \tau)_+\sign(x)$ is the soft thresholding function, $\tau_t$ is the threshold parameter, and $I^t$ is the active set of $x^t$, i.e., $I^t =\{i \ | \ x^t_i \neq 0 \}$. The notation $|I^t|$ denotes the cardinality of $I^t$. As we will describe later, the strong connection between AMP and LASSO and the ease of predicting the performance of AMP has led to an accurate performance analysis of LASSO \cite{DoMaMoNSPT}, \cite{BaMo10}.

\subsection{Complex-valued sparse recovery algorithms}
Consider the complex setting, where the signal $s_o$, the measurements $y$, and the matrix $A$
are complex-valued. The success of LASSO
has motivated researchers to use similar techniques in this setting as well. We consider the following two schemes that have
been used in the signal processing literature:
\begin{itemize}
\item r-LASSO: The simplest extension of the LASSO to the complex setting is to consider the complex
signal and measurements as a $2N$ dimensional
real-valued signal and $2n$ dimensional real-valued measurements, respectively. Let
the superscript ${R}$ and $I$ denote the real and imaginary parts of
a complex number. Define $\tilde{y} \triangleq [(y^R)^T, (y^I)^T]^T$ and
$\tilde{s_o} \triangleq [(s_o^R)^T, (s_o^I)^T]^T$, where the superscript $T$
denotes the transpose operator. We have
\[ \tilde{y} =\underbrace{\left( \begin{array}{cccc}
A^R &  -A^I\\
A^I & A^R
\end{array} \right)}_{\tilde{A} \triangleq} \tilde{s}_o.\]
We then search for an approximation of $\tilde{s}_o$ by
solving $\arg \min_{\tilde{x}} \frac{1}{2}\|\tilde{y}-\tilde{A}\tilde{x}\|_2^2+
\lambda  \|\tilde{x}\|_1$ \cite{TaHl08,PiWe09}. We call this algorithm
r-LASSO. The limit of the solution as $\lambda \rightarrow 0$ is
\[
\arg \min_{\tilde x} \|\tilde{x}\|_1, \ \  {\rm s.t.} \  \tilde{y} =
\tilde{A}{\tilde{x}},
\]
which is called the basis pursuit problem, or r-BP in this paper. It is
straightforward to extend the analyses of
LASSO and BP for the real-valued signals to r-LASSO and r-BP.\footnote{The
asymptotic theoretical results on LASSO and BP consider i.i.d. Gaussian
measurement matrices \cite{DoTa09}. However, it has been conjectured that the results
are universal and hold for a ``larger" class of random matrices
\cite{DoTa09b,DoMaMoNSPT}.} 

r-LASSO ignores the
information about any potential grouping of the real and imaginary parts. But, in many applications the real and imaginary components tend to
be either zero or non-zero simultaneously. Considering this extra information in the recovery stage may improve the overall performance of a CS system.

\item c-LASSO: Another natural extension of the LASSO to the complex setting is the following optimization problem that we term c-LASSO
\begin{eqnarray*}
&& \min \frac{1}{2}\|y-Ax\|_2^2+\lambda \|x\|_1,
\end{eqnarray*}
where the complex $\ell_1$-norm is defined as $\|x\|_1 \triangleq \sum_i |x_i| =\sum_i
\sqrt{(x_i^R)^2 + (x_i^I)^2}$
\cite{BeFr08,FiNoWr07,WrNoFi08,Boyd,YaZhDeLu11}. The limit of the solution as $\lambda \rightarrow 0$ is
\[
\arg \min_{ x} \|x\|_1, \ \  {\rm s.t.} \  {y} = {A}{{x}},
\]
which we refer to as c-BP.
\end{itemize}
An important question we address in this paper is: can
we measure how much the grouping of the real and the imaginary parts
improves the performance of c-LASSO compared to r-LASSO? Several papers have considered
similar problems  \cite{huang2009benefit,  peng2010regularized,
duarte2011performance, van2010theoretical, chen2006theoretical,
eldar2010block, lvgroup, stojnic2009reconstruction, stojnic2010ell,
stojnic2009block, ji2009multi, BakinThesis, MeVaBu08, YuLi06,
BachConsistency, NaRi08, LoPoTsGe09, MaCeWi05} and have provided
guarantees on the performance of c-LASSO. However, the
results are usually inconclusive because of the loose constants
involved in the analyses.
This paper addresses the above questions with an analysis that does not involve any loose constants and therefore provides accurate comparisons.\\

 Motivated by the recent results in the asymptotic analysis of the
LASSO \cite{DoMaMo09}, \cite{DoMaMoNSPT}, we first derive the
complex approximate message passing algorithm (CAMP) as a fast and
efficient algorithm for solving the c-LASSO problem. We then extend
the state evolution (SE) framework introduced in \cite{DoMaMo09} to
predict the performance of the CAMP algorithm in the asymptotic
setting. Since the CAMP algorithm solves c-LASSO, such predictions
are accurate for c-LASSO as well for $N \rightarrow \infty$. The analysis carried out in this paper provides new
information and insight on the performance of the c-LASSO that was
not known before such as the least favorable distribution and the
noise sensitivity of c-LASSO and CAMP. A more detailed description
of the contributions of this paper is summarized in Section
\ref{sec:contribution}.

 \subsection{Notation}\label{sec:notation}
 Let $|\alpha|$, $\measuredangle \alpha$, $\alpha^*$, $\mathcal{R}(\alpha)$, $\mathcal{I}(\alpha)$ denote the amplitude, phase, conjugate, real part, and imaginary part of $\alpha \in \mathds{C}$ respectively. Furthermore, for the matrix $A \in \mathds{C}^{n \times N}$, $A^*$, $A_\ell $, $A_{\ell j}$ denote the conjugate transpose, $\ell^{\rm th}$ column and $\ell j^{\rm th}$ element of matrix $A$. We are interested in approximating a sparse signal $s_o \in \mathds{C}^N$ from an undersampled set of noisy linear measurements $y = As_o+w$. $A \in \mathds{C}^{n \times N}$ has i.i.d. random elements (with independent real and imaginary parts) from a given distribution that satisfies $\E A_{\ell j} =0 $ and $\E|A_{\ell j}|^2 = \frac{1}{n}$, and $w \in \mathds{C}^N$ is the measurement noise. Throughout the paper, we assume that the noise is i.i.d. $CN(0, \sigma^2)$, where $CN$ stands for the complex normal distribution. \\

 We are interested in the asymptotic setting where $\delta = n/N$ and $\rho = k/n$ are fixed, while $N \rightarrow \infty$. We further assume that the elements of $s_o$ are i.i.d. $s_{o,i} \sim (1- \rho \delta) \delta_0(|s_{o,i}|) + \rho \delta G(s_{o,i})$, where $G$ is an unknown probability distribution with no point mass at $0$, and $\delta_0$ is a Dirac delta function.\footnote{This assumption is not necessary and as long as the marginal distribution of $s_o$ converges to a given distribution the statements of this paper hold. For further information on this, see \cite{DoMaMoNSPT} and \cite{BaMo10}.} Clearly, the expected number of non-zero elements in the vector $s_o$ is $\rho \delta N$. We call this value the {\em sparsity level} of the signal. In this model, we are assuming that all the non-zero real and imaginary coefficients are paired. This quantifies
 the maximum amount of improvement the c-LASSO gains by grouping the real and imaginary parts. \\
 
 We use the notations $\E$, $\E_X$, and $\E_{X \sim F}$ for expected value, conditional expected value given the random variable $X$, and expected value with respect to a random variable $X$ drawn from the distribution $F$, respectively. Define $\mathcal{F}_{\epsilon, \gamma}$ as the family of distributions $F$ with $\E_{X \sim F} (\mathds{I} (|X|= 0)) \geq 1- \epsilon$ and $\E_{X \sim F}(|X|^2) \leq \epsilon \gamma ^2$, where $\mathds{I}$ denotes the indicator function. An important distribution in this class is $q_o(X) \triangleq q(|X|) \triangleq (1- \epsilon) \delta_0(|X|) + \epsilon \delta_\gamma(|X|)$, where $\delta_\gamma(|X|) \triangleq \delta_0(|X|- \gamma)$. Note that this distribution is independent of the phase and in addition to a point mass at zero has another point mass at $\gamma$. Finally, define $\mathcal{F}_{\epsilon}  \triangleq \{ \ F \ | \  \E_{X \sim F}(\mathds{I}(|X|\neq 0)) \leq \epsilon \}$. \\

\subsection{Performance criteria}\label{sec:criteria}
We compare c-LASSO with r-LASSO in both the noise-free and noisy
measurements cases. For each scenario, we
 define a specific measure to compare the performance of the two
 algorithms.
 \subsubsection{Noise-free measurements}\label{ssec:phasetran}
 Consider the problem of recovering $s_o$ drawn from $s_{o,i} \overset{\rm i.i.d.}{\sim} (1- \rho \delta) \delta_0(|s_{o,i}|) + \rho \delta G(s_{o,i})$, from a set of noise free measurements $y = A s_o$. Let $\mathcal{A}_{\alpha}$ be a sparse recovery algorithm with free parameter $\alpha$. For instance $\mathcal{A}$ may be the c-LASSO algorithm and the free parameter of the algorithm is the regularization argument $\lambda$. Given $(y,A)$, $\mathcal{A}_{\alpha}$ returns an estimate $\hat{x}^{\mathcal{A}_{\alpha}}$ of $s_o$.
 Suppose that in the noise free case, as $N \rightarrow \infty$, the performance of $\mathcal{A}_{\alpha}$ exhibits a sharp phase transition, i.e., for every value of $\delta$, there exists ${\rho}^{\mathcal{A}_{\alpha}}(\delta)$, below which $\lim_{N \rightarrow \infty} \|\hat{x}^{\mathcal{A}_{\alpha}}- s_o\|^2/N \rightarrow 0$ almost surely, while for $\rho > \rho^{\mathcal{A}_{\alpha}}(\delta)$, $\mathcal{A}_{\alpha}$ fails and $\lim_{N \rightarrow \infty} \|\hat{x}^{\mathcal{A}_{\alpha}}- s_o\|^2/N \nrightarrow 0$. The phase transition has been studied both empirically and theoretically for many sparse recovery algorithms \cite{MaDo09sp, NeTr08, DoTa09, DoTa09b, BlDa09, BlDaCS09, DoDrTsSt08}. The phase transition curve $\rho^{\mathcal{A}_{\alpha}}(\delta)$ specifies the fundamental exact recovery limit of algorithm $\mathcal{A}_{\alpha}$. \\
 The free parameter $\alpha$ can strongly affect the performance of the sparse recovery algorithm \cite{MaDo09sp}.
 Therefore, optimal tuning of this parameter is essential in practical applications. One approach is to tune the parameter for the highest phase transition \cite{MaDo09sp},\footnote{In this paper, we consider algorithms whose phase transitions do not depend on the distribution $G$ of non-zero coefficients. Otherwise, one could use the maximin framework introduced in \cite{MaDo09sp}.} i.e.,
 \[
 \rho^{\mathcal{A}} (\delta) \triangleq  \sup_{\alpha} \rho^{\mathcal{A}_{\alpha}}(\delta).
 \]
In other words, $ \rho^{\mathcal{A}}$ is the best performance
$\mathcal{A}_{\alpha}$ provides in the exact sparse
signal recovery problem, if we know how to tune the algorithm
properly. Based on this framework, we say algorithm $\mathcal{A}$ \textit{outperforms} $\mathcal{B}$ at a given $\delta$, if and only if $\rho^{\mathcal{A}}(\delta)>\rho^{\mathcal{B}}(\delta)$.

 \subsubsection{Noisy measurements}\label{ssec:noisyphasetran}
Consider the problem of recovering $s_o$ distributed according to $s_{o,i} \overset{\rm i.i.d.}{\sim} (1- \rho \delta) \delta_0(|s_{o,i}|) + \rho \delta G(s_{o,i})$, from a set of noisy linear observations $y = A s_o + w$, where $w_i \overset{\rm i.i.d.}{\sim} CN(0,\sigma^2)$. In the presence of measurement noise exact recovery is not possible.
 Therefore, tuning the parameter for the highest phase transition curve does not necessarily provide the optimal performance.
 In this section, we explain the \textit{optimal noise sensitivity tuning} introduced in \cite{DoMaMoNSPT}.
 Consider the $\ell_2$-norm as a measure for the reconstruction error and assume that $\frac{ \|\hat{x}^{\mathcal{A}_{\alpha}}- s_o \|_2^2}{N} \rightarrow {\rm MSE}(\rho, \delta, \alpha, \sigma, G)$ almost surely. Define the \textit{noise sensitivity} of the algorithm $\mathcal{A}_{\alpha}$ as
 \begin{eqnarray}\label{eq:ns_max}
{\rm NS}(\rho,\delta, \alpha) \triangleq \sup_{\sigma>0} \sup_G  \frac{{\rm MSE}(\rho, \delta, \alpha, \sigma, G)}{\sigma^2},
 \end{eqnarray}
 where $\alpha$ denotes the tuning parameter of the algorithm $\mathcal{A}_{\alpha}$. If the noise sensitivity is large, then the measurement noise may severely
 degrade the final reconstruction. In \eqref{eq:ns_max} we search for the distribution that induces the maximum reconstruction error to the algorithm. This ensures that for other signal distributions the reconstruction error is smaller. By tuning $\alpha$, we may obtain better estimate of $s_o$. Therefore, we tune the parameter $\alpha$ to obtain the lowest noise sensitivity, i.e.,
 \[
{\rm NS}(\rho,\delta) \triangleq \inf_{\alpha} {\rm NS}(\rho,\delta, \alpha).
 \]
 Based on this framework, we say that algorithm $\mathcal{A}$ outperforms $\mathcal{B}$ at a given $\delta$ and $\rho$ if and only if ${\rm NS}^{\mathcal{A}}(\delta,\rho)< {\rm NS}^{\mathcal{B}}(\delta,\rho)$.

    \begin{figure}
\begin{center}
  \includegraphics[width=6.2cm, height= 6cm]{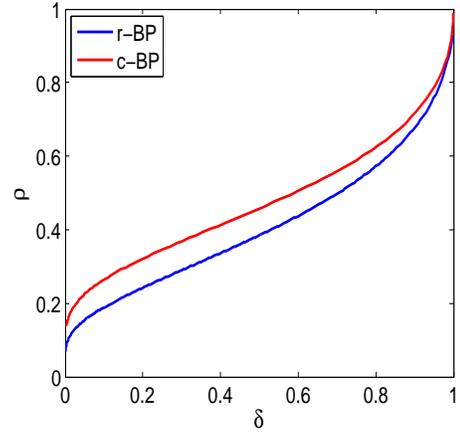}\\
  \caption{Comparison of the phase transition curve of the r-BP and c-BP. When all the non-zero real and imaginary parts of the signal are grouped, the phase transition of c-BP
  outperforms that of r-BP.}
   \label{fig:pt_realcomplex}
  \end{center}
\end{figure}

 \subsection{Contributions} \label{sec:contribution}
In this paper, we first develop the \textit{complex approximate message passing} (CAMP) algorithm that is a simple and fast converging iterative method for
 solving c-LASSO.  We extend the \textit{state evolution} (SE), introduced recently as a framework for
 accurate asymptotic predictions of the AMP performance, to CAMP.\footnote{Note that SE has been proved to be accurate only for the case of Gaussian measurement matrices \cite{BaMo10, BaMo11}. But, extensive simulations have confirmed its accuracy for a large class of random measurement matrices \cite{DoMaMo09, DoMaMoNSPT}. The results of our paper are also provably correct for complex Gaussian measurement matrices. But, our simulations confirm that they hold for broader set of matrices. } We will then use the connection between CAMP and c-LASSO to provide an accurate asymptotic analysis
 of the c-LASSO problem. We aim to characterize the phase transition curve (noise-free measurements) and noise sensitivity (noisy measurements) of c-LASSO and CAMP when the real and imaginary parts are paired, i.e., they are both zero or non-zero simultaneously. Both criteria have been extensively studied for the real signals (and hence for the r-LASSO) \cite{DoMaMo09, DoMaMoNSPT}.
The results of our predictions are summarized
 in Figures \ref{fig:pt_realcomplex}, \ref{fig:ns_complex}, and \ref{fig:ns_realcomplex}. Figure \ref{fig:pt_realcomplex} compares the phase transition curve of c-BP and CAMP with the phase transition curve of r-BP. As we expected c-BP outperforms r-BP since it exploits the connection between the real and imaginary parts. If $\rho_{SE}(\delta)$ denotes the phase transition curve, then we also prove that $\rho_{SE}(\delta) \sim \frac{1}{\log(1/2\delta)}$ as $\delta \rightarrow 0$. Comparing this with $\rho^R_{SE}(\delta) \sim \frac{1}{2\log(1/\delta)}$ for the r-LASSO \cite{DoTa09}, we conclude that
 \[
 \lim_{\delta \rightarrow 0} \frac{\rho_{SE}(\delta)}{\rho^R_{SE}(\delta)} = 2.
 \]
 This means that, in the very high undersampling regime the c-LASSO can recover signals that are two times more dense than the signals that are recovered by r-LASSO. 
  Figure \ref{fig:ns_complex} exhibits the noise sensitivity of c-LASSO and CAMP.  We prove in Section \ref{sec:noisyformal} that, as the sparsity approaches the phase transition curve, the noise sensitivity grows up to infinity. Finally, Figure \ref{fig:ns_realcomplex} compares the contour plots of the noise sensitivity of c-LASSO with those of the r-LASSO. For the fixed value noise sensitivity, the level set of the c-LASSO is higher than that of r-LASSO. It is worth noting that the same comparisons hold between CAMP and AMP, as we will clarify in Section \ref{sec:connection}. \\

\begin{figure}
\begin{center}
  \includegraphics[width=6.9cm, height = 6.5cm]{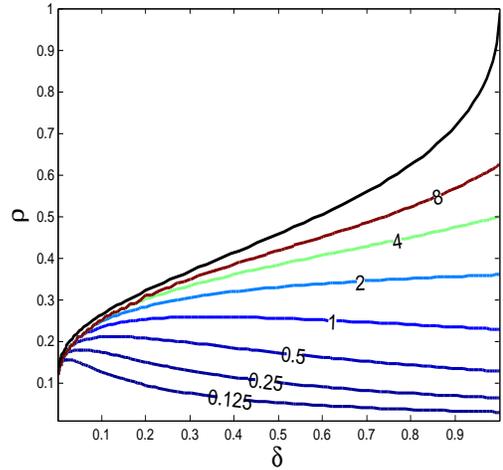}\\
  \caption{Contour lines of noise sensitivity in the $(\delta,\rho)$ plane. The black curve is the phase transition curve at which the noise sensitivity is infinite. The colored lines
  display the level sets of  ${\rm NS}(\rho,\delta) = 0.125, 0.25,0.5, 1,2,4,8$.}
   \label{fig:ns_complex}
  \end{center}
\end{figure}

  \begin{figure}
\begin{center}
  \includegraphics[width=6.5cm, height = 6.5cm]{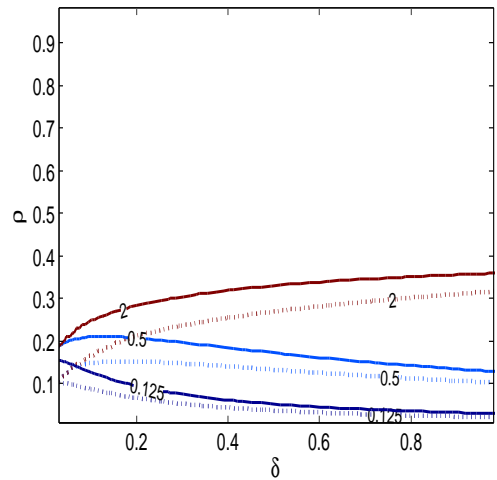}\\
  \caption{Comparison of the noise sensitivity of r-LASSO with the noise sensitivity of c-LASSO. The colored solid lines present the level sets of the ${\rm NS}(\rho,\delta) = 0.125, 0.5,2$  for the c-LASSO, and the colored dotted lines display the same level sets for the r-LASSO.}
   \label{fig:ns_realcomplex}
  \end{center}
\end{figure}

 \subsection{Related work} \label{sec:relatedwork}
 The state evolution framework used in this paper was first introduced in \cite{DoMaMo09}. Deriving the phase transition and noise sensitivity of the LASSO for
 real-valued signals and real-valued measurements from SE is due to \cite{DoMaMoNSPT}; see \cite{MalekiThesis} for more comprehensive discussion. Finally, the derivation of AMP from the full sum-product message passing is due to \cite{DoMaMo2011}. Our main contribution in this paper is to extend these results to the complex setting. Not only is the analysis of the state evolution more challenging in this setting, but it also provides new insights on the performance of c-LASSO that have not been available. For instance, the noise sensitivity of c-LASSO has not previously been determined. \\

 The recovery of sparse complex signals is a special case of group-sparsity or block-sparsity, where all the groups are non-overlapping and have size $2$. According to the group sparsity assumption, the non-zero elements of the signal tend to occur in groups or clusters. One of the algorithms used in this context is the group-LASSO \cite{BakinThesis, YuLi06}. Consider a signal $s_o \in \mathds{R}^N$. Partition the indices of $s_o$ into $m$ groups $g_1, \ldots, g_m$. The group-LASSO algorithm minimizes the following cost function:
 \begin{eqnarray}
\min_x \frac{1}{2}\|y-Ax\|_2^2 + \sum_{i=1}^m  \lambda_i  \|x_{g_i}\|_2,
 \end{eqnarray}
 where the $\lambda_i$'s are regularization parameters. 
 
 The group-Lasso algorithm has been extensively studied in the literature \cite{huang2009benefit,  peng2010regularized, duarte2011performance, van2010theoretical, chen2006theoretical, eldar2010block, lvgroup, stojnic2009reconstruction, stojnic2010ell, stojnic2009block, ji2009multi, BakinThesis, MeVaBu08, YuLi06, BachConsistency, NaRi08, LoPoTsGe09, MaCeWi05}. We briefly review several papers and emphasize the differences from our work.  \cite{BachConsistency} analyzes the consistency of the group LASSO estimator in the presence of  noise. Fixing the signal $s_o$, it provides conditions under which the group LASSO is consistent as $n \rightarrow \infty$. \cite{ObWaJo11, NaRi08}  consider a weak notion of consistency, i.e., exact support recovery. However, \cite{ObWaJo11} proves that in the setting we are interested in, i.e., $k/n = \rho$  and $n/N= \delta$, even exact support recovery is not possible. When noise is present, our goal is neither exact recovery nor exact support recovery. Instead,
 we characterize the mean square error (MSE) of the reconstruction.  This criterion has been considered in \cite{huang2009benefit, LoPoTsGe09}. Although the results of \cite{huang2009benefit, LoPoTsGe09} show qualitatively the benefit of group sparsity, they do not characterize the difference quantitatively. In fact, loose constants in both the error bound and the number of samples do not permit accurate performance comparison. In our analysis, no loose constant is involved, and we provide very accurate characterization of the mean square error. \\

 Group-sparsity and group-LASSO are also of interest in the sparse recovery community.
 For example, the analysis carried out in \cite{ duarte2011performance, lvgroup, eldar2010block} are based on ``coherence". These results provide sufficient conditions with again loose constants as discussed above.
 The work of \cite{stojnic2009reconstruction, stojnic2010ell,stojnic2009block} addresses this issue by an accurate analysis of the algorithm in the
noiseless setting $\sigma =0$.  They provide a very accurate estimate of the phase transition
curve for the group-LASSO. However, SE provides a more flexible
framework to analyze c-LASSO than the analysis of
\cite{stojnic2009block}, and it provides
more information than just the phase transition curve. For instance, it points to the least favorable distribution of the input and noise sensitivity of c-LASSO.  \\

The Bayesian approach that assumes a hidden Markov model for the signal has been also explored for the recovery of group sparse signals \cite{SchSel11, Schniter10}. It has been shown that AMP combined with an expectation maximization algorithm (for estimating the parameters of the distribution) leads to promising results in practice \cite{ViSc12}. Kamilov et al. \cite{KamRang12} have taken the first step towards a theoretical understanding of such algorithms. However, the complete understanding of the expectation maximization employed in such methods is not available yet. Furthermore, the success of such algorithms seem to be dependent on the match between the assumed and actual prior distribution. Such dependencies have not been theoretically analyzed yet. In this paper we assume that the distribution of non-zero coefficients is not known beforehand and characterize the performance of c-LASSO for the least favorable distribution.  \\

While writing this paper we were made aware that in an independent work Donoho, Johnstone, and Montanari are extending the SE framework to the general setting of group sparsity \cite{DoMo11}.
Their work considers the state evolution framework for the group-LASSO problem and will include the generalization of the analysis provided in this paper to the
case where the variables tend to cluster in groups of size $B$. \\

Both complex signals and group-sparse signals are
special cases of model-based CS
\cite{RichModelBasedCS}. By introducing more structured models for the signal,
\cite{RichModelBasedCS} proves that the number of
measurements needed are proportional to the ``complexity'' of the
model rather than the sparsity level \cite{JaMa11}. The results in model-based CS also suffer from loose constants in both the number of
measurements and the mean square
error bounds.  \\

 Finally, from an algorithmic point of view, several papers have considered solving the c-LASSO problem using first-order algorithms \cite{BeFr08, FiNoWr07}.\footnote{First-order methods are iterative algorithms that use either the gradient or the subgradient of the function at the previous iterations to update their estimates.} The deterministic framework that measures the convergence of an algorithm on the problem instance that yields the slowest convergence rate, is not an appropriate measure  of the convergence rate for the compressed sensing problems \cite{MaBa11}. Therefore, \cite{MaBa11} considers the average convergence rate for iterative algorithms. In that setting, AMP is the only first order algorithm that provably
 achieves linear convergence to date. Similarly, the CAMP algorithm, introduced in this paper, provides the first, first-order c-LASSO solver that provides a linear average convergence rate.

 \subsection{Organization of the paper} \label{sec:organization}
We introduce the CAMP algorithm in Section \ref{sec:CAMP}. We then explain the state evolution equations that characterizes the evolution of the mean square error through the
iterations of the CAMP algorithm in Section \ref{sec:formal}, and we analyze the important properties of the SE equations. We then discuss the
connection between our calculations and the solution of LASSO in Section \ref{sec:connection}. We confirm our results via Monte Carlo simulations in Section \ref{sec:simulation}.

\section{Complex Approximate Message Passing} \label{sec:CAMP}
The high computational complexity of interior point methods for
solving large scale convex optimization problems has spurred the development
of first-order methods for solving the LASSO problem. See
\cite{MaBa11} and the references therein for a description of some of
these algorithms. One of the most successful algorithms for
CS problems is the AMP algorithm introduced in
\cite{DoMaMo09}. In this section, we use the approach introduced in
\cite{DoMaMo2011} to derive the approximate message passing
algorithm for the c-LASSO problem that we term \textit{Complex
Approximate Message Passing} (CAMP).

Let $s_1, s_2, \ldots, s_N$ be $N$ random variables with the
following distribution:
\begin{eqnarray}\label{eq:distlaplacegauss}
p(s_1,s_2, \ldots, s_N) = \frac{1}{Z(\beta)} {\rm e}^{- {\beta \lambda}\|s\|_1 -\frac{\beta}{2} \|y-As\|_2^2},
\end{eqnarray}
where $\beta$ is a constant and $Z(\beta) \triangleq \int_s {\rm e}^{- \beta \lambda \|s\|_1 -\frac{\beta}{2} \|y-As\|_2^2 }  ds$.
As $\beta \rightarrow \infty$, the mass of this distribution concentrates around the solution of the LASSO.
Therefore, one way to find the solution of LASSO is to marginalize this distribution. However, calculating the marginal
distribution is an NP-complete problem. The sum-product message passing algorithm provides a successful heuristic
for approximating the marginal distribution. As $N \rightarrow \infty$ and $\beta \rightarrow \infty$ the iterations of the sum-product message
passing algorithm are simplified to (\cite{DoMaMo2011} or Chapter 5 of \cite{MalekiThesis})
\begin{eqnarray} \label{eq:fullMP}
x_{\ell \rightarrow a}^{t+1} &=& \eta\Big(\sum_{b \neq a}A^*_{b\ell}z^t_{b \rightarrow \ell}; \tau_t\Big), \nonumber\\
z_{a \rightarrow \ell}^t &=& y_a - \sum_{j \neq \ell} A_{aj} x^t_{j \rightarrow a},
\end{eqnarray}
where $\eta (u+ iv; \lambda) \triangleq \left(u+iv-
\frac{\lambda(u+iv)}{\sqrt{u^2+v^2}}\right)_+\mathds{I}_{\{u^2+v^2>
\lambda^2\}}$ is the proximity operator of the complex
$\ell_1$-norm and is called complex soft thresholding. See Appendix \ref{appsec:prox} for further
information regarding this function. $\tau_t$ is the threshold parameter at time
$t$. The choice of this parameter will be discussed in Section
\ref{sec:stateevolution}. The per-iteration computational complexity
of this algorithm is high, since $2 n N$ messages ${x^t_{\ell
\rightarrow a}}$ and $z^t_{a \rightarrow \ell}$ are updated. Therefore,
following \cite{DoMaMo2011} we assume that there exist $\Delta x_{\ell \rightarrow a}^t,  \Delta z_{\ell \rightarrow a}^t = O({1}/{\sqrt{N}})$ such that
\begin{eqnarray} \label{eq:mp_assumption}
x_{\ell \rightarrow a}^t &=& x_\ell^t+ \Delta x_{\ell \rightarrow a}^t+ O(1/N), \nonumber \\
z_{a \rightarrow \ell}^t &=& z_a^t+ \Delta z_{\ell \rightarrow a}^t + O(1/N).
\end{eqnarray}
 Here, the $O(\cdot)$ errors are uniform in the choice of the edges $\ell \rightarrow a$ and $a \rightarrow \ell$. In other words we assume that $x_{\ell \rightarrow a}^t$ is independent of $a$ and $z_{a \rightarrow \ell}^t$ is independent of $\ell$ except for an error of order $1/\sqrt{N}$. For further discussion of this assumption and its validation, see \cite{DoMaMo2011} or Chapter 5 of \cite{MalekiThesis}. Let $\eta^I$ and $\eta^R$ be the imaginary and real parts of the complex soft thresholding function. Furthermore, define $\frac{\partial \eta^R}{\partial x}$ and $\frac{\partial \eta^R}{\partial y}$ as the partial derivatives of $\eta^R$ with respect to the real and
imaginary parts of the input respectively.  $\frac{\partial \eta^I}{\partial x}$, and $\frac{\partial \eta^I}{\partial y}$ are defined similarly. The following theorem shows how one can simplify the message passing as $N \rightarrow \infty$.

\begin{propo} \label{prop:ampderivation}
Suppose that \eqref{eq:mp_assumption} holds for every iteration of the message passing algorithm specified in \eqref{eq:fullMP}. Then $x_\ell^t$
and $z_a^t$ satisfy the following equations:
\begin{eqnarray}\label{eq:camp}
x_\ell^{t+1} &=& \eta\Big(x_\ell^t+ \sum_b A^*_{b\ell} z_{b}^t ; \tau_t\Big), \nonumber \\
z_a^{t+1} & = & y_a - \sum_j A_{aj} x_j^{t+1}\nonumber\\
&-& \!\! \! \!\sum_j A_{aj} \left(\frac{\partial \eta^R}{\partial x}\Big(x_j^{t} + \sum_b A^*_{bj} z_b^{t}\Big) \right) \mathcal{R}(A^*_{aj}z_a^t) \nonumber\\
&-& \!\! \! \!\sum_j A_{aj} \left(\frac{\partial \eta^R}{\partial y} \Big(x_j^{t} + \sum_b A^*_{bj} z_b^{t}\Big) \right)  \mathcal{I}(A^*_{aj}z_a^t) \nonumber\\
&-&\!\!\! \! \! \! i \sum_j A_{aj} \left(\frac{\partial \eta^I}{\partial x}\Big(x_j^{t} + \sum_b A^*_{bj} z_b^{t}\Big) \right)  \mathcal{R}(A^*_{aj}z_a^t) \nonumber\\
&-&\! \!\! \! \! \! i \sum_j A_{aj} \left(\frac{\partial \eta^I}{\partial
y} \Big(x_j^{t} + \sum_b A^*_{bj} z_b^{t}\Big) \right)
\mathcal{I}(A^*_{aj}z_a^t).
\end{eqnarray}
\end{propo}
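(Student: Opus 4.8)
The plan is to follow the derivation strategy of \cite{DoMaMo2011}: start from the full message-passing recursion \eqref{eq:fullMP}, substitute the ansatz \eqref{eq:mp_assumption}, and expand everything to first order in $1/\sqrt{N}$, keeping track of which terms survive in the large-$N$ limit. First I would write the argument of $\eta$ in the $x$-update as $x_\ell^t + \sum_b A^*_{b\ell}z^t_{b\to\ell}$ and replace $z^t_{b\to\ell}$ by $z_b^t + \Delta z^t_{\ell\to b} + O(1/N)$; since $\sum_b |A_{b\ell}|^2 \to 1$ while the correction terms contribute $\sum_b A^*_{b\ell}\Delta z^t_{\ell\to b}$, which is a sum of $n$ terms each of order $1/\sqrt N$ times $1/\sqrt N$, i.e. $O(1/\sqrt N)$ overall — I would argue these are absorbed into the $\Delta x$-part, giving $x_\ell^{t+1} = \eta(x_\ell^t + \sum_b A^*_{b\ell}z_b^t;\tau_t) + O(1/\sqrt N)$, which yields the first line of \eqref{eq:camp}.

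The substantive step is the $z$-update. I would start from $z^t_{a\to\ell} = y_a - \sum_{j\neq\ell}A_{aj}x^t_{j\to a}$, insert $x^t_{j\to a} = x_j^t + \Delta x^t_{a\to j} + O(1/N)$, and then — the key move — Taylor-expand $\Delta x^t_{a\to j}$ itself. From the $x$-update one iteration back, $x^{t}_{j\to a} = \eta\big(x_j^{t-1} + \sum_{b\neq a}A^*_{bj}z^{t-1}_{b\to j};\tau_{t-1}\big)$, so $\Delta x^t_{a\to j}$ is the difference between dropping and keeping the $b=a$ term inside $\eta$; to leading order this is the (real/imaginary) derivative of $\eta$ at the full argument, contracted against $-A^*_{aj}z^{t-1}_{a\to j}$. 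Because $\eta$ is a function of two real variables (the real and imaginary parts), this linearization naturally produces the four derivative terms $\partial\eta^R/\partial x$, $\partial\eta^R/\partial y$, $\partial\eta^I/\partial x$, $\partial\eta^I/\partial y$ paired with $\mathcal R(A^*_{aj}z_a^{t})$ and $\mathcal I(A^*_{aj}z_a^{t})$, and the overall $i$ in front of the $\eta^I$ terms comes from reassembling the complex number from its parts. Then I would resum: $\sum_j A_{aj}x^t_{j\to a} = \sum_j A_{aj}x_j^t$ (the $O(1/N)$ corrections summed over $j$ being negligible) minus the correction sum, and relabel $t\to t+1$ on the $x$'s where appropriate, so that the $z^{t+1}$-update reads $y_a - \sum_j A_{aj}x_j^{t+1}$ minus the four Onsager-type terms evaluated at $x_j^t + \sum_b A^*_{bj}z_b^t$, matching \eqref{eq:camp}. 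A small point to handle carefully is that the terms use $z_a^t$ (not $z^t_{a\to\ell}$ or $z^t_{a\to j}$) inside $\mathcal R(\cdot),\mathcal I(\cdot)$; replacing one by the other costs only $O(1/\sqrt N)$ inside an already $O(1/\sqrt N)$ term, hence is negligible.

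The main obstacle — as the text itself flags (``the extension $\ldots$ is not trivial'') — is the bookkeeping around the complex soft-thresholding nonlinearity. Unlike the real case, $\eta$ is not differentiable as a complex function; one must treat it as $\eta = \eta^R + i\eta^I$ with $\eta^R,\eta^I:\reals^2\to\reals$ and use the real Jacobian, which is why four derivative terms appear where the real AMP has one. I would need to verify that the first-order expansion is legitimate on the relevant set (the soft-thresholding map is Lipschitz and piecewise smooth, with the only non-smoothness on the circle $u^2+v^2=\tau^2$, which has measure zero and is handled as in the real-valued derivation), and that all the discarded cross-terms are genuinely $o(1)$ uniformly over edges, using $\E|A_{\ell j}|^2 = 1/n$ and independence of the real and imaginary parts of the entries. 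I would also double-check that no additional Onsager correction is generated by the $\Delta z$ terms entering the $x$-update (they don't, to this order, because they feed back only at order $1/N$), so that the final recursion closes on $(x_\ell^t, z_a^t)$ alone. The rest is the routine resummation already sketched.
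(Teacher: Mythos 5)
Your overall strategy is the same as the paper's (substitute the ansatz \eqref{eq:mp_assumption} into \eqref{eq:fullMP}, linearize $\eta$ around the full argument to identify $\Delta x^t_{\ell\rightarrow a}$ as the four-derivative Onsager correction, and resum), and your treatment of the $z$-update is essentially the paper's argument. However, there is a genuine error in your $x$-update step. You claim that the correction sum $\sum_b A^*_{b\ell}\,\Delta z^t_{b\rightarrow \ell}$ is ``$n$ terms each of order $1/\sqrt N$ times $1/\sqrt N$, i.e.\ $O(1/\sqrt N)$ overall'' and can be absorbed into the error. This is false, and it is precisely the term that generates the $x_\ell^t$ inside the soft threshold. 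Identifying $\Delta z$ explicitly from $z^t_{a\rightarrow\ell}=y_a-\sum_{j\neq \ell}A_{aj}x^t_{j\rightarrow a}$ gives $\Delta z^t_{a\rightarrow\ell}=A_{a\ell}x^t_{\ell}$ (the only $\ell$-dependence is the excluded $j=\ell$ term), so $\sum_b A^*_{b\ell}\Delta z^t_{b\rightarrow\ell}=\big(\sum_b|A_{b\ell}|^2\big)x_\ell^t = x_\ell^t$: the summands are $|A_{b\ell}|^2x_\ell^t$, all aligned, so there is no square-root cancellation and the sum is $O(1)$, not $O(1/\sqrt N)$. Your write-up is internally inconsistent on this point: you insert $x_\ell^t$ into the argument of $\eta$ by fiat (``write the argument as $x_\ell^t+\sum_b A^*_{b\ell}z^t_{b\rightarrow\ell}$'', which does not equal the actual argument $\sum_{b\neq a}A^*_{b\ell}z^t_{b\rightarrow\ell}$ of \eqref{eq:fullMP}), and then discard the very term that would have produced it. Had you followed your own error estimate consistently, you would have obtained $x_\ell^{t+1}=\eta(\sum_b A^*_{b\ell}z_b^t;\tau_t)$, which is not the CAMP update. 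Note also that you correctly keep the analogous correlated $O(1)$ sum $\sum_j A_{aj}\Delta x^t_{j\rightarrow a}$ in the $z$-update; the same reasoning must be applied to $\sum_b A^*_{b\ell}\Delta z^t_{b\rightarrow\ell}$.

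The remainder of your argument (the identification of $\Delta x^t_{\ell\rightarrow a}$ with the four partial-derivative terms paired with $\mathcal{R}(A^*_{a\ell}z_a^t)$ and $\mathcal{I}(A^*_{a\ell}z_a^t)$, the replacement of $z^t_{a\rightarrow j}$ by $z_a^t$ at cost $O(1/\sqrt N)$ inside an already small term, and the back-substitution into $z_a^t=y_a-\sum_jA_{aj}x_j^t-\sum_jA_{aj}\Delta x^t_{j\rightarrow a}$) matches Appendix~\ref{appsec:CAMP} and is fine modulo the time-index relabeling you mention.
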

See Appendix \ref{appsec:CAMP} for the proof. According to Proposition  \ref{prop:ampderivation} and \eqref{eq:mp_assumption}, for large values of $N$,
the messages $x_{\ell \rightarrow a}^t$ and $z_{a \rightarrow \ell}^t$ are close to $x_\ell^t$ and $z_a^t$ in \eqref{eq:camp}. Therefore, we define the CAMP algorithm as the iterative method that
starts from $x^0 =0$ and $z^0 =y$ and uses the iterations specified in \eqref{eq:camp}. It is important to note that Proposition \ref{prop:ampderivation} does not provide any information
on either the performance of the CAMP algorithm or the connection between CAMP and c-LASSO, since message passing
is a heuristic algorithm and does not necessarily converge to the correct marginal distribution of \eqref{eq:distlaplacegauss}.

\section{Formal analysis of CAMP and c-LASSO}\label{sec:formal}
In this section, we explain the state evolution (SE) framework that predicts the performance of the CAMP and c-LASSO in
the asymptotic settings. We then use this framework to analyze the
phase transition and noise sensitivity of the CAMP and c-LASSO. The formal 
connection between state evolution and CAMP/c-LASSO is discussed in Section \ref{sec:connection}.

\subsection{State evolution}\label{sec:stateevolution}
We now conduct an asymptotic analysis of the CAMP algorithm. As we
confirm in Section \ref{sec:connection}, the asymptotic performance
of the algorithm is tracked through a few variables, called the
state variables. The state of the algorithm is the 5-tuple $
\mathbf{s} = (m;\delta, \rho, \sigma, G)$, where $G$ corresponds to
the distribution of the non-zero elements of the sparse vector
$s_o$, $\sigma$ is the standard deviation of the measurement noise,
and $m$ is the asymptotic normalized mean square error. The
threshold parameter (threshold policy) of CAMP in its most general form
could be a function of the state of the algorithm
$\tau(\mathbf{s})$. Define  ${\rm npi}(m; \sigma,\delta)\triangleq
\sigma^2 + \frac{m}{\delta}$. The {\em mean square error (MSE) map} is defined as
\begin{eqnarray*}
{\Psi( \mathbf{s}, \tau(\mathbf{s}) )  \triangleq} \hspace{7.6cm} \nonumber \\
\E|\eta(X+ \sqrt{ \npi(m,\sigma,\delta)} Z_1 +i \sqrt{ \npi(m,\sigma,\delta)} Z_2; \tau( \mathbf{s})) - X|^2, \ \ \ \ \   
\end{eqnarray*}
where $Z_1, Z_2 \sim N(0,1/2)$ and $X \sim (1- \rho \delta) \delta_0(|x|)+ \rho \delta G(x)$ are independent random variables. Note that $G$ is a probability distribution on $\mathds{C}$.
In the rest of this paper, we consider the thresholding policy $\tau(\mathbf{s}) = \tau \sqrt{{\rm npi}(m,\sigma,\delta)}$, where the constant $\tau$ is yet to be tuned according to the schemes introduced in Sections \ref{ssec:phasetran} and \ref{ssec:noisyphasetran}. When we use this thresholding policy we may equivalently write $\Psi(\mathbf{s}, \tau(\mathbf{s}))$ as $\Psi(\mathbf{s}, \tau)$. This thresholding policy is the same as the thresholding policy introduced in \cite{DoMaMoNSPT,DoMaMo09}. When the parameters $\delta, \rho,\sigma, \tau$ and $G$ are clear from the context, we
denote the MSE map by $\Psi(m)$. SE is the evolution of $m$ (starting from $t=0$ and $m_0= E(|X|^2)$) by the rule
\begin{eqnarray}\label{eq:stateevo1}
\lefteqn{m_{t+1} =  \Psi(m_t) }  \nonumber \\
  &\triangleq&  \! \! \! \!  \E\Big|\eta\left( \! X+ \!  \sqrt{ \npi^t} Z_1 +\! i \sqrt{ \npi^t} Z_2; \tau \sqrt{ \npi^t} \right) \! - \!  X \Big|^2 \!  \!  \! ,
\end{eqnarray}
where $\npi^t \triangleq \npi(m_t,\sigma,\delta)$.
As will be described in Section \ref{sec:connection}, this equation tracks the normalized MSE of the CAMP algorithm in the asymptotic setting $n, N \rightarrow \infty$ and $n/N \rightarrow \delta$. In other
words, if $m_t$ is the MSE of the CAMP algorithm at iteration $t$, the $m_{t+1}$, calculated by \eqref{eq:stateevo1}, is the MSE of CAMP at iteration $t+1$.
\begin{definition}
Let $\Psi$ be almost everywhere differentiable. $m^*$ is called a fixed point of $\Psi$ if and only if $\Psi(m^*) = m^*$. Furthermore, a fixed point is called stable if $\left.\frac{d\Psi(m)}{dm} \right|_{m = m^*} <1$, and unstable if $\left.\frac{d\Psi(m)}{dm} \right|_{m = m^*} >1$.
\end{definition}
It is clear that if $m^*$ is the unique stable fixed point of the $\Psi$ function, then $m_t \rightarrow m^*$ as $t \rightarrow \infty$. Also, if all the fixed points of $\Psi$ are unstable, then $m_t \rightarrow \infty$ as $t \rightarrow \infty$. Define $\mu \triangleq |X|$ and $\theta \triangleq \measuredangle X$. Let $G(\mu,\theta)$ denote the probability density function of $X$ and define $G(\mu) \triangleq  \int G(\mu, \theta) d\theta$ as the marginal distribution of $\mu$. The next lemma shows that in order to analyze the state evolution function we only need to consider the amplitude distribution. This substantially simplifies our analysis of SE in the next sections.  
\begin{lemma}\label{lem:phaseindep}
The MSE map does not depend on the phase distribution of the input signal, i.e.,
\[
\Psi(m,\delta, \rho, \sigma, G(\mu, \theta), \tau) = \Psi(m,\delta,
\rho, \sigma, G(\mu), \tau).
\]
\end{lemma}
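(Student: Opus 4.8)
The plan is to exploit two structural facts: the complex soft-thresholding operator $\eta(\cdot\,;\lambda)$ is \emph{phase-equivariant}, and the additive perturbation $Z_1 + iZ_2$ is circularly symmetric. First I would record the phase-equivariance. From the closed form $\eta(u+iv;\lambda) = \big(u+iv - \lambda(u+iv)/\sqrt{u^2+v^2}\big)_+ \ind_{\{u^2+v^2 > \lambda^2\}}$ (see Appendix~\ref{appsec:prox}) one sees that, writing $w = u+iv$, $\eta(w;\lambda) = \frac{w}{|w|}\,(|w|-\lambda)_+$ for $w\neq 0$ and $\eta(0;\lambda)=0$, so that for every deterministic phase $\phi \in \mathds{R}$ and every $w \in \mathds{C}$,
\[
\eta(e^{i\phi} w;\lambda) = e^{i\phi}\,\eta(w;\lambda).
\]
Second, since $Z_1, Z_2 \sim N(0,1/2)$ are independent, $Z \triangleq Z_1 + iZ_2$ is a standard circularly symmetric complex Gaussian, hence $e^{i\phi} Z \ed Z$ for every fixed $\phi$. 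Note also that the factor $\npi^t = \npi(m_t,\sigma,\delta)$ appearing in \eqref{eq:stateevo1} depends only on the state, not on $X$ or $Z$, so it is a constant throughout this computation.

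Next I would condition on $X$ and write $X = \mu e^{i\theta}$ with $\mu = |X|$ and $\theta = \measuredangle X$. Pulling $e^{i\theta}$ out of the argument of $\eta$ and using phase-equivariance,
\[
\eta\!\left(X + \sqrt{\npi^t}\,Z;\tau\sqrt{\npi^t}\right) - X
= e^{i\theta}\left(\eta\!\left(\mu + \sqrt{\npi^t}\,e^{-i\theta} Z;\tau\sqrt{\npi^t}\right) - \mu\right).
\]
Taking squared modulus removes the unit-modulus factor $e^{i\theta}$, so the conditional MSE given $X$ equals $\E_Z\big|\eta(\mu + \sqrt{\npi^t}\,e^{-i\theta}Z;\tau\sqrt{\npi^t}) - \mu\big|^2$. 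By circular symmetry $e^{-i\theta}Z \ed Z$, and since $\mu$ and $\npi^t$ are held fixed in this inner expectation, the change of variables $Z \mapsto e^{-i\theta}Z$ gives
\[
\E\!\left[\big|\eta(X + \sqrt{\npi^t} Z;\tau\sqrt{\npi^t}) - X\big|^2 \,\middle|\, X\right] = \E_Z\big|\eta(\mu + \sqrt{\npi^t} Z;\tau\sqrt{\npi^t}) - \mu\big|^2 =: g(\mu),
\]
a function of the amplitude $\mu=|X|$ alone. Finally, taking the outer expectation over $X$, $\Psi = \E\, g(|X|)$, which depends on the input law only through the marginal distribution $G(\mu)$ of $|X|$; this is exactly the claimed identity.

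The argument is essentially mechanical; the only points requiring care are the phase-equivariance of $\eta$, which I would verify directly from its closed form (including the trivial case $w=0$, a null event once the Gaussian perturbation is added, hence irrelevant to any expectation), and the observation that the scalar $\npi^t$ is independent of both $X$ and $Z$, so it is legitimate to treat it as a constant when performing the substitution $Z \mapsto e^{-i\theta}Z$ inside the conditional expectation. I do not anticipate a genuine obstacle.
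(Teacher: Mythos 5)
Your proof is correct, and it reaches the conclusion by a genuinely different route than the paper. The paper's argument (Appendix~\ref{appsec:proofphaseindep}) is computational: it writes the conditional risk $\Phi(\zeta,\theta)$ as an explicit double integral over the two regions $S_\tau$ and $S_\tau^c$, changes variables to polar coordinates $(r,\phi)$, and observes that the resulting integrand depends on the phase only through $\cos(\theta-\phi)$, so that periodicity of the cosine kills the $\theta$-dependence. You instead package the same underlying rotational symmetry into two clean structural facts --- the phase-equivariance $\eta(e^{i\phi}w;\lambda)=e^{i\phi}\eta(w;\lambda)$, which is immediate from the closed form $\eta(w;\lambda)=\tfrac{w}{|w|}(|w|-\lambda)_+$, and the circular symmetry $e^{i\phi}Z\ed Z$ of the complex Gaussian --- and then condition on $X$, pull the unit-modulus factor out of the squared modulus, and absorb the rotation into $Z$ by a change of variables. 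What your approach buys is brevity and robustness: it avoids all the region-splitting and Jacobian bookkeeping, and it would apply verbatim to any phase-equivariant denoiser with circularly symmetric noise. What the paper's explicit integral representation buys is reuse: the same polar-coordinate expressions are recycled in the concavity and phase-transition computations of Appendix~\ref{appsec:ptformula}, so the authors get those formulas essentially for free. Your remarks on the null event $w=0$ and on $\npi^t$ being a state-dependent constant are the right points to flag, and neither causes any difficulty.
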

\noindent See Appendix \ref{appsec:proofphaseindep} for the proof.

\subsection{Noise-free signal recovery}\label{sec:noisefreeformal}

Consider the noise free setting with $\sigma =0$. Suppose that SE predicts the MSE of CAMP in the asymptotic setting (we will make this rigorous in Section \ref{sec:connection}). As mentioned in Section \ref{ssec:phasetran}, in order to characterize the performance of CAMP in the noiseless setting, we first derive its phase transition curve and then optimize over $\tau$ to obtain the highest phase transition CAMP can achieve. Fix
all the state variables except for $m$, and $\rho$. The evolution of
$m$, discriminates the following two regions for $\rho$:
\begin{itemize}
\item[] Region I: The values of $\rho$ for which $\Psi(m) < m$ for every
$m>0$;
\item[] Region II: The complement of Region I.
\end{itemize}

Since $0$ is necessarily a fixed point of the $\Psi$ function, in Region I $m_t \rightarrow 0$ as $t \rightarrow \infty$. The following lemma shows that in Region II $m=0$ is an unstable fixed point and therefore starting from $m_0 \neq 0$, $m_t \nrightarrow 0$.
\begin{lemma}
Let $\sigma =0$. If $\rho$ is in Region II, then $\Psi$ has an unstable fixed point at zero.
\end{lemma}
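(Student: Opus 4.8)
The claim is that when $\sigma = 0$ and $\rho$ lies in Region II, the MSE map $\Psi$ has an unstable fixed point at $m = 0$. The plan is to examine the behavior of $\Psi(m)$ near $m = 0$ and show that $\Psi'(0) > 1$. First I would use Lemma~\ref{lem:phaseindep} to reduce to the amplitude distribution $G(\mu)$, so that $X = \mu e^{i\theta}$ with $\theta$ playing no role. With $\sigma = 0$ we have $\npi(m,0,\delta) = m/\delta$, so that
\[
\Psi(m) = \E\Big|\eta\big(X + \sqrt{m/\delta}\,Z_1 + i\sqrt{m/\delta}\,Z_2;\ \tau\sqrt{m/\delta}\big) - X\Big|^2,
\]
with $Z_1, Z_2 \sim N(0,1/2)$ independent of $X$. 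I would then rescale: writing $m = \sigma_{\mathrm{eff}}^2$ with $\sigma_{\mathrm{eff}} = \sqrt{m/\delta}$ absorbed appropriately, one sees that $\Psi(m)/m$ is (up to the $\delta$ factor) a function of the ratio between the signal amplitudes and the effective noise level $\sqrt{m/\delta}$, which blows up as $m \to 0$. Concretely, $\Psi(m)/m = \tfrac{1}{\delta}\,\E\big|\eta(X/\sqrt{m/\delta} + W;\tau) - X/\sqrt{m/\delta}\big|^2$ where $W = Z_1 + iZ_2$; the contribution of the nonzero part of $X$ has $|X|/\sqrt{m/\delta} \to \infty$, on which $\eta$ acts essentially as subtracting a phase-aligned constant of modulus $\tau$, contributing $O(1)$ to the modulus-squared error, hence $O(m)$ after unscaling — so the signal part contributes negligibly to $\Psi(m)/m$ as $m \to 0$.

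**Identifying the slope at zero.**

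The dominant term is therefore the one coming from $X = 0$, i.e. the event $|X| = 0$, which has probability $1 - \rho\delta$. On that event $\eta(\sqrt{m/\delta}\,W;\tau\sqrt{m/\delta}) = \sqrt{m/\delta}\,\eta(W;\tau)$ by the scale-equivariance of complex soft thresholding (see Appendix~\ref{appsec:prox}), so this term contributes exactly $(1-\rho\delta)\cdot\tfrac{m}{\delta}\,\E|\eta(W;\tau) - 0|^2 = (1-\rho\delta)\tfrac{m}{\delta}\E|\eta(W;\tau)|^2$. Dividing by $m$ and letting $m \to 0$, and recalling that the nonzero-$X$ part vanishes in the limit, I expect
\[
\Psi'(0) = \lim_{m\to 0}\frac{\Psi(m)}{m} = \frac{1-\rho\delta}{\delta}\,\E\big|\eta(Z_1 + iZ_2;\tau)\big|^2 = \frac{1-\rho\delta}{\delta}\,\E\big[(|W|-\tau)_+^2\big].
\]
The final step is to show that $\rho \in$ Region~II forces this quantity to exceed $1$. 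By definition, Region~I is the set of $\rho$ for which $\Psi(m) < m$ for all $m > 0$; Region~II is its complement. I would argue that if $\Psi'(0) < 1$ then, combined with concavity/monotonicity properties of $\Psi$ in $m$ (the function $m \mapsto \Psi(m)$ for this thresholding policy is concave and increasing, a fact used throughout the SE literature and presumably established earlier or in an appendix), one would have $\Psi(m) < m$ for all $m > 0$, putting $\rho$ in Region~I; the boundary case $\Psi'(0) = 1$ similarly fails to be in Region~II generically. Hence $\rho \in$ Region~II implies $\Psi'(0) > 1$, which is exactly the statement that $m = 0$ is an unstable fixed point.

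**The main obstacle.**

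The routine parts are the scaling computation and the evaluation of $\E[(|W|-\tau)_+^2]$ with $|W|$ Rayleigh-distributed. The delicate point is the interchange of limit and expectation when computing $\lim_{m\to 0}\Psi(m)/m$: I need a uniform integrability / dominated-convergence argument to justify that the nonzero-$X$ contribution to $\Psi(m)/m$ really does tend to $0$ and that no mass escapes, which requires controlling $\E|\eta(X/\sqrt{m/\delta}+W;\tau) - X/\sqrt{m/\delta}|^2$ uniformly — here the Lipschitz bound $|\eta(u;\tau)-u| \le \tau$ on the complement of the thresholding region, together with the finite second moment of $X$ under $G$, should give the needed domination. The other subtle point is pinning down exactly which monotonicity/concavity property of $\Psi(\cdot)$ is invoked to convert the local statement $\Psi'(0) > 1$ into the global dichotomy defining Regions I and II; I would state precisely the property needed and cite the corresponding lemma, rather than re-deriving it here.
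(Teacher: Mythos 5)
Your logical skeleton is the same as the paper's: the lemma is proved by combining $\Psi(0)=0$ with the concavity of $\Psi$ in $m$ (the paper's Lemma~\ref{lem:concavity}), which yields $\Psi(m)\le \left.\frac{d\Psi}{dm}\right|_{m=0} m$ and hence the equivalence between membership in Region~II and $\left.\frac{d\Psi}{dm}\right|_{m=0}>1$. That part of your argument is sound, and your observation that the boundary case $\Psi'(0)=1$ is handled only loosely applies equally to the paper.

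However, your explicit evaluation of the slope at zero contains a genuine error. You argue that the nonzero-signal part of $X$ contributes ``$O(1)$ to the modulus-squared error, hence $O(m)$ after unscaling'' and is therefore negligible in $\Psi(m)/m$ --- but $O(m)/m = O(1)$, not $o(1)$, so this contribution does \emph{not} vanish. Concretely, for $|X|=\mu>0$ and $\nu=\sqrt{m/\delta}\to 0$ one has $\E\big|\eta(X/\nu+W;\tau)-X/\nu\big|^2 \to \E|W|^2+\tau^2 = 1+\tau^2$ (the soft threshold subtracts a vector of modulus $\tau$ aligned with $X/\nu+W$, and $\E W=0$), so after multiplying by $\nu^2=m/\delta$ and weighting by the probability $\rho\delta$ of a nonzero coefficient, the signal part contributes exactly $\rho(1+\tau^2)$ to $\Psi'(0)$. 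The correct slope is
\[
\left.\frac{d\Psi}{dm}\right|_{m=0} \;=\; \frac{\rho\delta(1+\tau^2)}{\delta} \;+\; \frac{1-\rho\delta}{\delta}\,\E\big|\eta(Z_1+iZ_2;\tau)\big|^2,
\]
as in the paper's Appendix~\ref{appsec:ptformula}, whereas your formula retains only the second term. This mistake happens not to affect the validity of the present lemma, since its proof needs only concavity and not the value of the derivative; but the same quantity is what determines $\rho_{SE}(\delta,\tau)$ in Proposition~\ref{prop:distributionindep} and Theorem~\ref{thm:phasetranscurve}, so carrying your formula forward would give the wrong phase transition.
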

\begin{proof}
We prove in Lemma \ref{lem:concavity} that $\Psi(m)$ is a concave
function of $m$.  Therefore, $\rho$ is in Region II if and only if
$\left. \frac{d\Psi(m)}{dm} \right|_{m =0} >1$. This in turn
indicates that $0$ is an unstable fixed point.
\end{proof}

It is also easy to confirm that Region I
is of the form $[0, \rho_{SE}(\delta, G, \tau))$. As we will see in Section \ref{sec:connection}, $\rho_{SE}(\delta, G, \tau)$ determines the phase transition curve of the CAMP algorithm. According to Lemma \ref{lem:phaseindep}, the MSE map does not depend on the phase distribution of the non-zero elements. The following proposition shows that in fact $\rho_{SE}$ is independent of $G$ even though the $\Psi$ function depends on $G(\mu)$.

\begin{propo}\label{prop:distributionindep}
$\rho_{SE}(\delta, G, \tau)$ is independent of the distribution $G$.
\end{propo}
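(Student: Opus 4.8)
The plan is to reduce the phase-transition question to a statement about the derivative of the MSE map at $m = 0$, and then to show that this derivative is insensitive to $G$ when $\sigma = 0$. By Lemma~\ref{lem:phaseindep} we may already assume $G = G(\mu)$ depends only on the amplitude. Since $\Psi(m)$ is concave in $m$ (Lemma~\ref{lem:concavity}, invoked just above), Region~I is exactly the set of $\rho$ for which $\left.\frac{d\Psi(m)}{dm}\right|_{m=0} \le 1$, so $\rho_{SE}(\delta, G, \tau)$ is the value of $\rho$ at which this derivative equals $1$. Hence it suffices to compute $\left.\frac{d\Psi(m)}{dm}\right|_{m=0}$ and exhibit that, as a function of $\rho$, it does not involve $G$ beyond the trivial scaling.

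The key computation is to expand $\Psi(m)$ for small $m$ with $\sigma = 0$, so that $\npi(m,0,\delta) = m/\delta$. Write the argument of $\eta$ as $X + \sqrt{m/\delta}\,(Z_1 + i Z_2)$ and split the expectation over the event $\{|X| = 0\}$ (probability $1 - \rho\delta$) and the event $\{|X| \neq 0\}$ (probability $\rho\delta$, with $|X| = \mu$ distributed according to $G(\mu)$). On the zero-component event, $\eta$ acts on a purely Gaussian input of variance $m/\delta$ at threshold $\tau\sqrt{m/\delta}$, which by the scaling properties of complex soft thresholding contributes $\frac{m}{\delta}\cdot \E|\eta(Z_1 + iZ_2;\tau) - 0|^2$ — a term exactly linear in $m$ with a coefficient depending only on $\tau$, $\delta$ (not on $G$). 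On the nonzero-component event, as $m \to 0$ the noise is vanishingly small compared to $|X| = \mu > 0$, so $\eta$ is locally smooth there and $\eta(X + \text{small}) - X = O(\sqrt{m})$ in amplitude but with a squared expectation that is $O(m)$ with a coefficient that I would argue is a fixed constant independent of $\mu$ (the local derivative of complex soft thresholding far from the kink is, up to the radial shrinkage which vanishes, the identity). The upshot is
\[
\left.\frac{d\Psi(m)}{dm}\right|_{m=0} \;=\; \frac{1}{\delta}\Big[(1-\rho\delta)\,\E|\eta(Z_1+iZ_2;\tau)|^2 \;+\; \rho\delta \cdot c\Big],
\]
for an explicit constant $c$ free of $G$; setting this equal to $1$ and solving for $\rho$ gives a formula for $\rho_{SE}(\delta,\tau)$ with no dependence on $G$.

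The main obstacle will be making the nonzero-component contribution rigorous: one must justify interchanging limit and expectation (dominated convergence, using integrability of $|X|^2$ under $G$ and the non-expansiveness of $\eta$), and one must carefully evaluate the $O(m)$ coefficient of $\E|\eta(\mu e^{i\theta} + \sqrt{m/\delta}(Z_1+iZ_2);\tau\sqrt{m/\delta}) - \mu e^{i\theta}|^2$ as $m \to 0$ for fixed $\mu > 0$ — here the threshold itself shrinks like $\sqrt{m}$, so the shrinkage term in $\eta$ contributes at the same order as the Gaussian fluctuation and the two must be combined correctly (a short local Taylor expansion of the map $w \mapsto w - \tau\sqrt{m/\delta}\, w/|w|$ around $w = \mu e^{i\theta}$, then taking expectation over the Gaussian). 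Once that coefficient is shown to be a $\mu$-independent (indeed typically the constant coming from the two real degrees of freedom of the complex plane), the conclusion follows immediately from the concavity characterization of Region~I. I would also remark that this mirrors the real-valued argument in \cite{DoMaMoNSPT}, the only new wrinkle being the two-dimensional geometry of the complex soft-thresholding kink.
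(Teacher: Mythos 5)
Your proposal is correct and follows essentially the same route as the paper: concavity of $\Psi$ (Lemma~\ref{lem:concavity}) reduces the phase transition to the condition $\left.\frac{d\Psi}{dm}\right|_{m=0}=1$, and that derivative is an affine function of $\rho$ whose coefficients do not involve $G$. The constant $c$ you leave implicit evaluates to $1+\tau^2$ (the $|Z_1+iZ_2|^2$ fluctuation contributes $1$ and the shrinkage term contributes $\tau^2$, with the cross term vanishing in expectation), exactly as in the paper's stated formula.
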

\begin{proof}
According to Lemma \ref{lem:concavity} in Appendix \ref{appsec:ptformula}, $\Psi$ is concave. Therefore, it has a stable fixed point at zero if and only if its derivative
at zero is less than $1$. It is also straightforward (from Appendix \ref{appsec:ptformula}) to show that
\[
\left.\frac{d\Psi}{dm}\right|_{m=0} = \frac{\rho \delta(1+ \tau^2)}{\delta} + \frac{1- \rho\delta}{\delta}\E|\eta(Z_1+ iZ_2; \tau)|^2.
\]
Setting this derivative to $1$, it is clear that the phase transition value of $\rho$ is independent of $G$.
\end{proof}

According to Proposition
\ref{prop:distributionindep} the only parameters that affect
$\rho_{SE}$ are $\delta$ and the free parameter $\tau$. Fixing $\delta$, we tune $\tau$ such that the algorithm achieves
its highest phase transition for a certain number of measurements, i.e.,
\begin{eqnarray*}
\rho_{SE}(\delta) \triangleq \sup_{\tau} \rho_{SE}(\delta; \tau).
\end{eqnarray*}
Using SE we can calculate the optimal value of $\tau$ and $\rho_{SE}(\delta)$.

\begin{thm}\label{thm:phasetranscurve}
$\rho_{SE}(\delta)$ and $\delta$ satisfy the following implicit relations:
\begin{eqnarray*}
\rho_{SE}(\delta) &=& \frac{ \chi_1(\tau) }{(1+ \tau^2)\chi_1(\tau)- \tau \chi_2(\tau)}, \\
\delta &=& \frac{4(1+ \tau^2)\chi_1(\tau)- 4 \tau
\chi_2(\tau) }{-2 \tau + 4 \chi_2(\tau)},
\end{eqnarray*}
for $\tau \in [0, \infty)$. Here, $\chi_1(\tau) \triangleq \int_{\omega \geq \tau}\omega(\tau- \omega){\rm e}^{- \omega^2} d\omega $ and $\chi_2(\tau) \triangleq \int_{\omega > \tau} \omega (\omega - \tau)^2 {\rm e}^{- \omega ^2}$. 
\end{thm}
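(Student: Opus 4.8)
\textit{Step 1: closed form for the phase transition at a fixed threshold.} The plan is to turn the theorem into a one-parameter optimization over the threshold constant $\tau$, starting from the derivative formula obtained inside the proof of Proposition~\ref{prop:distributionindep}. Since $\Psi$ is concave (Lemma~\ref{lem:concavity}, Appendix~\ref{appsec:ptformula}), a value of $\rho$ lies in Region~I iff $\left.\tfrac{d\Psi}{dm}\right|_{m=0}<1$, so $\rho_{SE}(\delta;\tau)$ is the root of
\[
\rho(1+\tau^{2})+\frac{1-\rho\delta}{\delta}\,\E\bigl|\eta(Z_{1}+iZ_{2};\tau)\bigr|^{2}=1,\qquad Z_{1},Z_{2}\sim N(0,\tfrac12).
\]
The first computation I would do is to pass to polar coordinates: because the complex soft threshold obeys $|\eta(w;\tau)|=(|w|-\tau)_{+}$ and $R:=|Z_{1}+iZ_{2}|$ has the Rayleigh density $2r\,e^{-r^{2}}$ on $[0,\infty)$,
\[
\E\bigl|\eta(Z_{1}+iZ_{2};\tau)\bigr|^{2}=\int_{\tau}^{\infty}(r-\tau)^{2}\,2r\,e^{-r^{2}}\,dr=2\chi_{2}(\tau),
\]
so solving the (linear in $\rho$) equation gives $\rho_{SE}(\delta;\tau)=\dfrac{\delta-2\chi_{2}(\tau)}{\delta\,(1+\tau^{2}-2\chi_{2}(\tau))}$.

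\textit{Step 2: optimize over $\tau$.} Next I would differentiate $\rho_{SE}(\delta;\tau)$ in $\tau$ at fixed $\delta$ and set the derivative to zero. The only nonroutine ingredient is the identity $\chi_{2}'(\tau)=2\chi_{1}(\tau)$, obtained by differentiating under the integral sign (the boundary term vanishes since the integrand of $\chi_{2}$ is zero at $\omega=\tau$). After clearing denominators and cancelling the common quadratic term, the stationarity condition collapses to
\[
\delta\bigl(\tau-2\chi_{1}(\tau)\bigr)=2\bigl(\tau\chi_{2}(\tau)-(1+\tau^{2})\chi_{1}(\tau)\bigr),
\]
which gives $\delta$ as an explicit function of $\tau$; substituting this back into the Step~1 formula and simplifying yields $\rho_{SE}(\delta)=\chi_{1}(\tau)\big/\bigl((1+\tau^{2})\chi_{1}(\tau)-\tau\chi_{2}(\tau)\bigr)$, so that $\bigl(\delta(\tau),\rho_{SE}(\delta(\tau))\bigr)$ is the desired parametric curve. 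As a cross-check I would verify the endpoints: $\tau\to0^{+}$ forces $(\delta,\rho)\to(1,1)$, while $\tau\to\infty$ gives $\delta\sim\tfrac12e^{-\tau^{2}}$ and $\rho\sim\tau^{-2}$, which reproduces the tail $\rho_{SE}(\delta)\sim1/\log(1/2\delta)$ stated earlier.

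\textit{Step 3: the stationary point is the maximizer, and the curve is valid.} This is where I expect the real work to lie. One must show that for every $\delta\in(0,1)$ the function $\tau\mapsto\rho_{SE}(\delta;\tau)$ is unimodal --- negative (hence irrelevant) near $\tau=0$, with a single interior maximum, and decreasing to $0$ as $\tau\to\infty$ --- so that the supremum defining $\rho_{SE}(\delta)$ is attained exactly at the critical point from Step~2. Since $\chi_{1},\chi_{2}$ are elementary, namely $\chi_{1}(\tau)=-\tfrac{\sqrt{\pi}}{4}\,\mathrm{erfc}(\tau)$ and $\chi_{2}(\tau)=\tfrac12e^{-\tau^{2}}-\tfrac{\sqrt{\pi}}{2}\tau\,\mathrm{erfc}(\tau)$, this reduces to an explicit sign analysis of $\partial_{\tau}\rho_{SE}$. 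One must also check that $\tau\mapsto\delta(\tau)$ is strictly monotone from $[0,\infty)$ onto $(0,1]$ (again via $\mathrm{sign}\,\delta'(\tau)$, using $\chi_{2}'=2\chi_{1}$), so that the two displayed relations genuinely parametrize the graph of $\delta\mapsto\rho_{SE}(\delta)$. The main obstacle is thus not the algebra of Steps~1--2 but this last step: establishing unimodality of $\rho_{SE}(\delta;\cdot)$ and monotonicity of the parametric curve uniformly in $\tau$, i.e.\ controlling the signs of functions built from $e^{-\tau^{2}}$ and $\mathrm{erfc}(\tau)$ over all of $[0,\infty)$.
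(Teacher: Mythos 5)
Your proposal follows essentially the same route as the paper's proof in Appendix~\ref{appsec:ptformula}: concavity of $\Psi$ (Lemma~\ref{lem:concavity}) reduces the phase transition to the condition $\left.\tfrac{d\Psi}{dm}\right|_{m=0}=1$, the Rayleigh-density computation gives $\E|\eta(Z_1+iZ_2;\tau)|^2=2\chi_2(\tau)$ and hence $\rho_{SE}(\delta;\tau)=\bigl(\delta-2\chi_2\bigr)/\bigl(\delta(1+\tau^2-2\chi_2)\bigr)$, and stationarity in $\tau$ via $\chi_2'=2\chi_1$ yields the same parametric curve; your Step~3 concerns (unimodality in $\tau$ and monotonicity of $\tau\mapsto\delta(\tau)$) are real but are also left unaddressed by the paper, which simply sets the derivative to zero. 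One point worth recording: your formula $\delta=\bigl(4(1+\tau^2)\chi_1-4\tau\chi_2\bigr)/\bigl(-2\tau+4\chi_1\bigr)$ agrees with the final line of the paper's own derivation but not with the displayed theorem, whose denominator reads $-2\tau+4\chi_2(\tau)$ --- the statement appears to contain a typo and your ($\chi_1$) version is the correct one.
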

\noindent See Appendix \ref{appsec:ptformula} for the proof. Figure \ref{fig:pt_realcomplex} displays this phase transition curve that is derived from the SE framework and compares it with the phase transition of r-BP algorithm. As will be described later, $\rho_{SE}(\delta)$ corresponds to the phase transition of c-LASSO. Hence the difference between $\rho_{SE} (\delta)$ and phase transition curve of r-LASSO is the benefit of grouping the real and imaginary parts. 

It is also interesting to compare the $\rho_{SE}(\delta)$ (which as we see later predicts the performance of c-LASSO) with the phase transition of r-LASSO in high undersampling regime $\delta \rightarrow 0$. The implicit
formulation above enables us to calculate the asymptotic performance
of the phase transition as $\delta \rightarrow 0$.

\begin{thm}\label{thm:ptasumptote}
$\rho_{SE}(\delta)$ follows the asymptotic behavior
\[
\rho_{SE}(\delta) \sim \frac{1}{ \log\left(\frac{1}{2 \delta} \right) },  \ \  {\rm as} \ \ \ \delta \rightarrow 0.
\]
\end{thm}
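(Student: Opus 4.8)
The plan is to use the implicit parametrization of $\rho_{SE}(\delta)$ from Theorem~\ref{thm:phasetranscurve} and read off the behaviour of $\rho_{SE}$ and of $\delta$ as the free parameter $\tau\to\infty$, since the regime $\delta\to 0$ corresponds to $\tau\to\infty$. Concretely, write $a(\tau)\triangleq-\chi_1(\tau)>0$ and $b(\tau)\triangleq\chi_2(\tau)>0$ (note $\chi_1\le 0$ for $\tau\ge 0$); then the two relations of Theorem~\ref{thm:phasetranscurve} become $\rho_{SE}(\tau)=\dfrac{a}{(1+\tau^2)a+\tau b}$ and $\delta(\tau)=\dfrac{4\bigl[(1+\tau^2)a+\tau b\bigr]}{2\tau-4b}$. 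From the latter one checks that on the branch where $\delta(\tau)\in(0,1)$ the function $\delta(\tau)$ is continuous and tends to $0$ only as $\tau\to\infty$; hence for all sufficiently small $\delta$ the corresponding parameter $\tau$ is large, and $\tau\to\infty$ as $\delta\to0$. It therefore suffices to compute the $\tau\to\infty$ asymptotics of $\rho_{SE}(\tau)$ and $\delta(\tau)$ and then invert the latter.

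\textbf{Asymptotics of $\chi_1,\chi_2$.} This is the technical core. The substitution $\omega=\tau+u$ gives $a(\tau)=e^{-\tau^2}\!\int_0^\infty(\tau+u)\,u\,e^{-2\tau u-u^2}\,du$ and $b(\tau)=e^{-\tau^2}\!\int_0^\infty(\tau+u)\,u^2\,e^{-2\tau u-u^2}\,du$. Rescaling $u=v/(2\tau)$ and applying dominated convergence (the factor $e^{-v}$ dominates, and $\int_0^\infty v^k e^{-v}\,dv=k!$) yields
\[
a(\tau)=\frac{e^{-\tau^2}}{4\tau}\bigl(1+o(1)\bigr),\qquad b(\tau)=\frac{e^{-\tau^2}}{4\tau^2}\bigl(1+o(1)\bigr),\qquad\tau\to\infty .
\]
I would keep explicit relative-error estimates here (they are $O(\tau^{-2})$ and $O(\tau^{-1})$ respectively), since several of these quantities get combined in ratios in the next step and one must be sure no cancellation destroys the leading order.

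\textbf{Substitution and inversion.} Plugging the estimates in, $(1+\tau^2)a+\tau b=\frac{e^{-\tau^2}}{4\tau}\bigl[(1+\tau^2)+1+o(1)\bigr]=\frac{e^{-\tau^2}\tau}{4}\bigl(1+o(1)\bigr)$ and $2\tau-4b=2\tau\bigl(1+o(1)\bigr)$, hence
\[
\rho_{SE}(\tau)=\frac{1}{\tau^2}\bigl(1+o(1)\bigr),\qquad \delta(\tau)=\frac{e^{-\tau^2}}{2}\bigl(1+o(1)\bigr) .
\]
Taking logarithms in the second relation gives $\tau^2=\log\frac{1}{2\delta}+o(1)$, so $\tau^2\sim\log\frac{1}{2\delta}$, and substituting into $\rho_{SE}\sim\tau^{-2}$ yields $\rho_{SE}(\delta)\sim\dfrac{1}{\log(1/(2\delta))}$, which is the claim. (This is consistent with $\rho^{R}_{SE}(\delta)\sim\frac{1}{2\log(1/\delta)}$ and the stated limiting ratio $2$, since $\log(1/(2\delta))\sim\log(1/\delta)$.)

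\textbf{Main obstacle.} The delicate part is making the Laplace-type expansions of $\chi_1$ and $\chi_2$ rigorous with uniform control of the remainders, and carrying enough subleading information that the combination $(1+\tau^2)a+\tau b$ occurring in both $\rho_{SE}$ and $\delta$ retains its leading asymptotics. Once those two expansions are established, everything else is elementary algebra together with the (slowly varying) inversion of $\delta\mapsto\tau$.
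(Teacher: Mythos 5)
Your proof follows essentially the same route as the paper's: parametrize by $\tau$ via Theorem \ref{thm:phasetranscurve}, argue that $\delta\to 0$ corresponds to $\tau\to\infty$, extract the Laplace-type asymptotics of $\chi_1$ and $\chi_2$, and invert $\delta\sim \mathrm{e}^{-\tau^2}/2$, $\rho_{SE}\sim\tau^{-2}$. The only substantive difference is your leading-order estimate $\chi_1(\tau)\sim -\mathrm{e}^{-\tau^2}/(4\tau)$, which is in fact the correct one --- the paper's displayed intermediate asymptotic $\mathrm{e}^{-\tau^2}/(8\tau^3)$ for this integral is erroneous (it is a subleading term and would even flip the sign of $\rho$ if used literally), although the paper's final expressions for $\delta$ and $\rho$ agree with yours.
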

\noindent See Appendix \ref{appsec:laplace} for the proof. As mentioned above, this theorem shows that as $\delta \rightarrow 0$
the phase transition of c-BP and CAMP is two times that of the r-LASSO, which is given by $\rho^R_{SE} \sim 1/ (2 \log(1/\delta))$ \cite{DoTa09}. This improvement is due to the grouping of real and imaginary parts of the signal.

\subsection{Noise sensitivity} \label{sec:noisyformal}
In this section we characterize the noise sensitivity of SE. To achieve this goal, we first discuss the risk of the complex soft thresholding function.
The properties of this risk play an important role in the discussion
of the noise sensitivity of SE in Section
\ref{sec:ns_se}.
\subsubsection{Risk of soft thresholding}\label{sec:softrisk}
Define the risk of the soft thresholding function as
\[
r(\mu, \tau) \triangleq \E |\eta(\mu {\rm e}^{i \theta}+ Z_1 + i Z_2; \tau)- X |^2,
\]
where $\mu \in [0, \infty)$, $\theta \in [0, 2\pi)$, and the expected value is with respect to the two independent
random variables $Z_1, Z_2 \sim N(0,1/2)$. It is important to note
that according to Lemma \ref{lem:phaseindep}, the risk function is
independent of $\theta$. The following lemma characterizes two
important properties of this risk function:
\begin{lemma}\label{lem:softthreshincconcave}
$r(\mu, \tau)$ is an increasing function of $\mu$ and a concave function in terms of $\mu^2$.
\end{lemma}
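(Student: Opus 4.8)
The plan is to use the explicit form of the complex soft thresholding function to reduce the two-dimensional Gaussian integral to a one-dimensional radial integral, and then carry out the monotonicity and concavity analysis on that reduced expression. First I would exploit Lemma \ref{lem:phaseindep} to set $\theta = 0$ without loss of generality, so that $X = \mu$ (real, nonnegative). Writing $W = \mu + Z_1 + iZ_2$, the input to $\eta$ has a complex-Gaussian density centered at $\mu$; since $\eta(W;\tau)$ acts only on the modulus $|W|$ (shrinking it by $\tau$ when $|W| > \tau$ and killing it otherwise), I would pass to polar coordinates $W = \omega e^{i\psi}$. Integrating out the angular variable $\psi$ leaves $r(\mu,\tau)$ as an integral over $\omega \in [0,\infty)$ against a density that depends on $\mu$ only through the modified-Bessel factor $I_0(2\mu\omega)$ (times $e^{-\mu^2}e^{-\omega^2}$, up to constants), with integrand $\bigl[(\omega-\tau)_+^2 \cdot(\text{from }|W|>\tau) + \mu^2\cdot(\text{from }|W|\le\tau) - 2\mu(\omega-\tau)_+(\cdots)\bigr]$ — more precisely $\E|\eta(W;\tau) - \mu|^2$ expands into the squared modulus of the thresholded vector minus cross terms. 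The key structural point is that after the angular integration everything is governed by $I_0$ and its derivatives.

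Next, for the \emph{monotonicity in $\mu$}, I would differentiate $r(\mu,\tau)$ with respect to $\mu$ under the integral sign. Two mechanisms contribute: the shift of the Gaussian center and the explicit $\mu$ appearing in the error term $|\eta(\cdot) - \mu|$. I expect the cleanest route is to write $r(\mu,\tau) = \E\,g\bigl(|\mu + Z|\bigr) + (\text{bias terms})$ where $Z = Z_1 + iZ_2$, or better, to use the fact that conditioned on $|W| = \omega$ the optimal-shrinkage error has a known form, and then to recognize $\partial_\mu r$ as an expectation of a manifestly nonnegative quantity. Concretely, because soft-thresholding is the proximal map of a convex function, $\eta(\cdot;\tau)$ is $1$-Lipschitz and monotone in the appropriate sense, and $r(\mu,\tau) = \E|\eta(\mu + Z;\tau) - \mu|^2$ can be shown increasing in $\mu \ge 0$ by a coupling / Anderson-type argument: increasing $\mu$ moves mass of $|W|$ to larger radii where the shrinkage residual is larger. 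I would make this rigorous via the $I_0$-representation, using $I_0' = I_1 \ge 0$ and the log-convexity of $I_0$.

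For the \emph{concavity in $\mu^2$}, set $t = \mu^2$ and show $\frac{d^2}{dt^2} r(\sqrt{t},\tau) \le 0$. Here the natural device is that in the polar representation the $\mu$-dependence enters only through the Gaussian-shifted density, and the family of densities $\{p_\mu(\omega)\}$ indexed by $t = \mu^2$ forms an exponential family in $t$ (the noncentral-chi-square-type structure: $e^{-t}e^{-\omega^2}I_0(2\sqrt{t}\,\omega)$ is, after the substitution, an exponential-family density in the parameter $t$ with sufficient statistic related to $\omega^2$). Then $r(\sqrt t,\tau) = \E_{p_t}[\text{(residual)}] + $ bias, and second derivatives in an exponential-family parameter reduce to variances and covariances of the sufficient statistic against the integrand; concavity will follow from a sign computation showing the relevant covariance/curvature term is nonpositive, using that the integrand $(\omega - \tau)_+^2$ minus the cross terms is itself a concave-then-handled function of $\omega^2$. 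Alternatively, I would try to reduce to the known real-valued scalar fact (the risk of real soft-thresholding is concave in the square of the mean, as used in \cite{DoMaMoNSPT}) by conditioning: writing the complex risk as an average over the real scalar risks at the randomly-rotated coordinate, but this needs care because the two real coordinates are coupled through the joint threshold.

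The main obstacle I anticipate is the concavity-in-$\mu^2$ claim: the monotonicity is a fairly soft stochastic-ordering fact, but concavity requires controlling the sign of a second derivative of a Bessel-weighted integral, and the cross term $-2\mu\,\E[(\omega - \tau)_+\cos(\text{angle})]$ does not obviously have a sign after differentiating twice. I would handle this by differentiating the $I_0$-representation explicitly, using the identities $I_0'' = I_0 - I_1/x$ and the integral recurrences to collapse the expression, and ultimately reducing to verifying that a single scalar function of $(\mu,\tau)$ — built from $\int_\tau^\infty \omega(\omega-\tau)^2 e^{-\omega^2}I_0(2\mu\omega)\,d\omega$ and its relatives, i.e. the $\chi_1,\chi_2$-type integrals from Theorem \ref{thm:phasetranscurve} evaluated with the Bessel weight — is nonpositive, which can be checked by a convexity argument on the integrand in $\omega^2$. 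The detailed calculus is routine once the exponential-family / Bessel representation is in place, so I would relegate it to the appendix (presumably Appendix \ref{appsec:ptformula} or a dedicated one), and in the body only record the representation and the reduction.
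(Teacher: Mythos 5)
Your proposal identifies the right reduction (set $\theta=0$ via Lemma \ref{lem:phaseindep} and work with the explicit form of $\eta$), but it does not actually prove the lemma: the concavity-in-$\mu^2$ claim — which you yourself flag as ``the main obstacle'' — is left as a programme (``I would handle this by differentiating the $I_0$-representation \dots which can be checked by a convexity argument'') rather than a verified sign computation. The polar/Bessel route you sketch makes the problem harder than it needs to be: after the angular integration the cross term $-2\mu\,\E[(\omega-\tau)_+\cos(\cdot)]$ couples to $I_1$, and two derivatives in $t=\mu^2$ produce a combination of $I_0,I_1,I_2$ whose sign is genuinely not obvious; nothing in your sketch pins it down. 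Your monotonicity argument also has a soft spot: the residual $|\eta(\mu+Z;\tau)-\mu|^2$ depends on $\mu$ both through the law of $|W|$ and through the target being subtracted, so ``increasing $\mu$ moves mass of $|W|$ to larger radii where the residual is larger'' is not a valid stochastic-ordering argument on its own (indeed $r(\mu,\tau)$ saturates at $1+\tau^2$ rather than growing with the radius).

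The paper's proof avoids all of this by staying in Cartesian coordinates. Writing $A=\sqrt{(\mu+Z_1)^2+Z_2^2}$ and using the explicit formulas for $\eta^R,\eta^I$ and their partial derivatives, one differentiates $r(\mu,\tau)=\E(\eta^R-\mu)^2+\E(\eta^I)^2$ in $\mu$ directly; after a cascade of cancellations between the $\eta^R$ and $\eta^I$ contributions the derivative collapses to
\[
\frac{\partial r(\mu,\tau)}{\partial\mu}
=\mu\,\E\!\left[\mathds{I}(A\le\tau)\right]+\mu\,\E\!\left[\frac{\tau Z_2^2}{A^3}\right]\;\ge\;0 .
\]
The explicit factor of $\mu$ is the whole point: it gives monotonicity immediately, and since $\frac{\partial r}{\partial(\mu^2)}=\frac{1}{2\mu}\frac{\partial r}{\partial\mu}=\frac12\bigl(\E[\mathds{I}(A\le\tau)]+\E[\tau Z_2^2/A^3]\bigr)$, which is decreasing in $\mu$ (both terms shrink as the center of $A$ moves outward), concavity in $\mu^2$ follows with no Bessel identities or exponential-family machinery. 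If you want to salvage your approach, you must either carry out the $I_0$-based second-derivative computation to the end and exhibit the sign, or — more economically — perform the direct Cartesian differentiation and discover the factored form above.
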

\noindent See Appendix \ref{app:pf_concavityrisk} for the proof of this lemma. We define the minimax risk of the soft thresholding function as
 \[
 M^{\flat}(\epsilon) \triangleq \inf_{\tau >0} \sup_{q \in \mathcal{F}_{\epsilon}} \E |\eta(X+ Z_1 + i Z_2; \tau)- X |^2,
\]
where $q$ is the probability density function of $X$, and the expected value is with respect to $X$, $Z_1$ and $Z_2$. \\

Note that $q \in \mathcal{F}_{\epsilon}$ implies that $q$ has a point mass of $1-\epsilon$ at zero; see Section \ref{sec:notation} for more information. In the next section we show a connection between this minimax risk and the noise sensitivity of the SE. Therefore, it is important to characterize $M^{\flat}(\epsilon)$.
\begin{propo}\label{prop:minimaxsoft}
The minimax risk of the soft thresholding function satisfies
\begin{equation}\label{eq:minimaxrisk_soft}
M^{\flat}(\epsilon) = \inf_{\tau}  \ 2 (1-\epsilon)  \int_{w= \tau}^{\infty} w(w- \tau)^2 {\rm e}^{- w^2}dw + \epsilon(1+ \tau^2).
\end{equation}
\end{propo}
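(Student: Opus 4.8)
The plan is to fix the threshold $\tau>0$, evaluate the inner supremum $\sup_{q\in\mathcal F_\epsilon}\E|\eta(X+Z_1+iZ_2;\tau)-X|^2$ in closed form, and then take the infimum over $\tau$ (the value $1$ at $\tau=0$ is recovered as a limit, so whether the outer infimum runs over $\tau>0$ or $\tau\ge0$ is immaterial). First I would reduce the risk to the amplitude distribution: conditioning on $X$ and using the phase independence of Lemma~\ref{lem:phaseindep} (equivalently, the rotational equivariance $\eta(e^{i\theta}u;\tau)=e^{i\theta}\eta(u;\tau)$) gives $\E|\eta(X+Z_1+iZ_2;\tau)-X|^2=\E_{X\sim q}[r(|X|,\tau)]$, with $r(\mu,\tau)$ the risk at a point mass of amplitude $\mu$. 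By Lemma~\ref{lem:softthreshincconcave}, $r(\cdot,\tau)$ is nondecreasing, so $L(\tau):=\lim_{\mu\to\infty}r(\mu,\tau)$ exists in $[0,\infty]$ and $r(\mu,\tau)\le L(\tau)$ for every $\mu$.

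For the extremal step, every $q\in\mathcal F_\epsilon$ has $p:=q(\{0\})\ge1-\epsilon$, so $\E_{X\sim q}[r(|X|,\tau)]=p\,r(0,\tau)+(1-p)\,\E_{X\sim q}[r(|X|,\tau)\mid X\neq0]\le p\,r(0,\tau)+(1-p)L(\tau)$; since $r(0,\tau)\le L(\tau)$, this bound is nonincreasing in $p$ and hence maximal at $p=1-\epsilon$, giving $\E_{X\sim q}[r(|X|,\tau)]\le(1-\epsilon)r(0,\tau)+\epsilon L(\tau)$. For the matching lower bound, take $q_M\in\mathcal F_\epsilon$ putting mass $1-\epsilon$ at $0$ and mass $\epsilon$ on the circle $\{|x|=M\}$; its risk is $(1-\epsilon)r(0,\tau)+\epsilon\,r(M,\tau)\to(1-\epsilon)r(0,\tau)+\epsilon L(\tau)$ as $M\to\infty$. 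Hence for each $\tau>0$, $\sup_{q\in\mathcal F_\epsilon}\E|\eta(X+Z_1+iZ_2;\tau)-X|^2=(1-\epsilon)r(0,\tau)+\epsilon L(\tau)$, a genuine supremum attained only in the limit as the nonzero mass escapes to infinity.

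It remains to identify the two constants. Since $R:=|Z_1+iZ_2|$ has density $2we^{-w^2}$ on $[0,\infty)$ and $|\eta(Z_1+iZ_2;\tau)|=(R-\tau)_+$, we get $r(0,\tau)=\E(R-\tau)_+^2=2\int_\tau^\infty w(w-\tau)^2e^{-w^2}\,dw$. For $L(\tau)$, rotational equivariance lets me take $X=\mu$ real and write $Z_1+iZ_2=U+iV$ with $U,V$ i.i.d.\ $N(0,1/2)$; then $\eta(\mu+U+iV;\tau)=(\rho_\mu-\tau)_+e^{i\phi_\mu}$ with $\rho_\mu=\sqrt{(\mu+U)^2+V^2}$ and $\phi_\mu=\arg(\mu+U+iV)$. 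As $\mu\to\infty$, $\rho_\mu-(\mu+U)\to0$ and $\mu\phi_\mu\to V$ almost surely, so $\eta(\mu+U+iV;\tau)-\mu\to U-\tau+iV$ a.s.; being a proximity operator $\eta(\cdot;\tau)$ is nonexpansive, whence for $\mu>\tau$ we have $|\eta(\mu+U+iV;\tau)-\mu|\le|\eta(\mu;\tau)-\mu|+R=\tau+R$, an integrable envelope, and dominated convergence gives $L(\tau)=\E[(U-\tau)^2+V^2]=1+\tau^2$. Substituting $r(0,\tau)$ and $L(\tau)$ into the displayed supremum and taking $\inf_\tau$ yields~\eqref{eq:minimaxrisk_soft}.

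I expect the main obstacle to be the rigorous evaluation $L(\tau)=1+\tau^2$: it requires linearizing the complex soft‑threshold about a diverging mean while simultaneously controlling the radial and angular perturbations (so that the transverse noise $V$ survives through the product $\mu\phi_\mu$ rather than being absorbed by the threshold), together with the uniform dominating function above. Once $L(\tau)$ is pinned down and finite, the supremum‑over‑distributions step is routine, using only the monotonicity of $r(\cdot,\tau)$ from Lemma~\ref{lem:softthreshincconcave}.
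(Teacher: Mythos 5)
Your proposal is correct, and its skeleton matches the paper's proof in Appendix \ref{app:proofprop}: both reduce to the amplitude distribution via Lemma \ref{lem:phaseindep}, split the risk into the $(1-\epsilon)$ mass at zero plus the $\epsilon$ mass elsewhere, evaluate $r(0,\tau)=2\int_\tau^\infty w(w-\tau)^2{\rm e}^{-w^2}dw$ from the Rayleigh density, and identify the least favorable behavior as the nonzero mass escaping to infinity with limiting risk $1+\tau^2$. The one substantive divergence is the extremal step over $G$. The paper first invokes the concavity of $r(\mu,\tau)$ in $\mu^2$ (Lemma \ref{lem:softthreshincconcave}) together with Jensen's inequality to replace an arbitrary $G$ by a point mass at $\mu_0=\sqrt{\E_G\mu^2}$, and only then uses monotonicity to push $\mu_0\to\infty$; you skip Jensen entirely and use only monotonicity, bounding $r(\mu,\tau)\le L(\tau)$ pointwise. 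Your route is more elementary and fully sufficient for $\mathcal{F}_\epsilon$, since that class imposes no second-moment constraint. What the paper's Jensen step buys is Proposition \ref{prop:leastfavor}: over the constrained class $\mathcal{F}_{\epsilon,\gamma}$ the same two-step argument pins down $(1-\epsilon)\delta_0(|X|)+\epsilon\delta_\gamma(|X|)$ as a least favorable distribution, which a monotonicity-only argument does not give. On the other side, your evaluation of $L(\tau)=1+\tau^2$ --- the a.s.\ linearization $\eta(\mu+U+iV;\tau)-\mu\to U-\tau+iV$ with the nonexpansiveness envelope $\tau+R$ and dominated convergence --- is more careful than the paper's, which attributes the value of the limit to the monotone convergence theorem; monotonicity of $r(\cdot,\tau)$ only guarantees that the limit exists, and your computation is exactly what is needed to identify it.
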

See Appendix \ref{app:proofprop} for the proof. It is important to note that the quantities in \eqref{eq:minimaxrisk_soft} can be easily calculated in terms of the density and distribution function of a normal random variable. Therefore, a simple computer program may accurately calculate the value of $M^{\flat}(\epsilon)$ for any $\epsilon$.  

The proof provided for Proposition \ref{prop:minimaxsoft} also proves the following proposition. We will discuss the importance of this result for compressed sensing problems in the next section.

\begin{propo}\label{prop:leastfavor}
The maximum of the risk function, $ \max_{q \in \mathcal{F}_{\epsilon, \gamma}} \E |\eta(X+ Z_1 + i Z_2; \tau)- X |^2$, 
 is achieved on $q(X) = (1- \epsilon) \delta_0(|X|) + \epsilon \delta_{\gamma}(|X|)$. 
\end{propo}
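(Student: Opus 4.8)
The plan is to reduce the optimization over the infinite-dimensional family $\mathcal{F}_{\epsilon,\gamma}$ to the two-point distribution $q(X) = (1-\epsilon)\delta_0(|X|) + \epsilon \delta_\gamma(|X|)$ by exploiting the structural properties of the risk function $r(\mu,\tau)$ already established in Lemma \ref{lem:softthreshincconcave}. First I would note, via Lemma \ref{lem:phaseindep}, that the risk $\E|\eta(X + Z_1 + iZ_2;\tau) - X|^2$ depends on $X$ only through the distribution of $\mu = |X|$, so the maximization over $q \in \mathcal{F}_{\epsilon,\gamma}$ is really a maximization over distributions of $\mu^2$ on $[0,\infty)$ subject to the two constraints defining $\mathcal{F}_{\epsilon,\gamma}$: a point mass of at least $1-\epsilon$ at $\mu = 0$, and $\E(\mu^2) \leq \epsilon\gamma^2$.

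The key step is the concavity of $r$ as a function of $\mu^2$ (Lemma \ref{lem:softthreshincconcave}). Write $v = \mu^2$ and $\tilde r(v,\tau) = r(\sqrt{v},\tau)$, which is concave and increasing in $v$. For any feasible $q$, decompose its $\mu^2$-marginal as a mixture of the mass at $v = 0$ (weight $\geq 1-\epsilon$) and the conditional law of $v$ on $v > 0$. By concavity of $\tilde r$ in $v$ and Jensen's inequality applied to the conditional law on $\{v > 0\}$, the expected risk is bounded above by the risk of a distribution that replaces the positive part by a single atom at its conditional mean; and because $\tilde r$ is increasing in $v$, pushing that atom outward and loading it up to the constraint boundary $\E(\mu^2) = \epsilon\gamma^2$ only increases the objective — forcing the weight on $\{v > 0\}$ to be exactly $\epsilon$ (the maximal allowed) and the atom to sit at $v = \gamma^2$. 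This is precisely the distribution $q(X) = (1-\epsilon)\delta_0(|X|) + \epsilon\delta_\gamma(|X|)$. Some care is needed because the constraint on the zero-mass is a one-sided inequality ($\geq 1-\epsilon$, not $=$): I would argue that since $\tilde r(0,\tau) \leq \tilde r(v,\tau)$ for all $v \geq 0$, any mass beyond $1-\epsilon$ sitting at $0$ can be moved to the positive atom without decreasing the risk, which justifies taking the zero-mass to be exactly $1-\epsilon$.

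The main obstacle I anticipate is handling the interplay of the two constraints carefully: concavity alone gives an upper bound via a single positive atom at the conditional mean, but that atom might lie below $\gamma^2/\epsilon \cdot (\text{its weight})$, and one must verify that redistributing mass to simultaneously (i) maximize the weight on the positive part at $\epsilon$ and (ii) place the atom at $v=\gamma^2$ is genuinely feasible and genuinely increases $\tilde r$ — this is where monotonicity in $v$ is essential, and one should check the arithmetic that the second moment of the candidate, $\epsilon \gamma^2$, exactly saturates the constraint. Once the extremal distribution is identified, the value claimed in Proposition \ref{prop:minimaxsoft} follows by direct computation: $\E|\eta(X+Z_1+iZ_2;\tau)-X|^2$ on this $q$ splits as $(1-\epsilon)$ times the risk at $\mu=0$, which equals $2\int_{w=\tau}^\infty w(w-\tau)^2 e^{-w^2}\,dw$ (the probability-that-the-complex-Gaussian-exceeds-$\tau$ contribution, using the Rayleigh-type integral), plus $\epsilon$ times the risk at $\mu = \gamma$, which as $\gamma \to \infty$ behaves like $1 + \tau^2$; taking the infimum over $\tau$ then yields \eqref{eq:minimaxrisk_soft}. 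Since the same extremal-distribution argument is exactly what Proposition \ref{prop:leastfavor} asserts, the two propositions are proved by one and the same reduction.
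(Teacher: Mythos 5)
Your proposal follows essentially the same route as the paper: the paper proves Proposition \ref{prop:leastfavor} as a byproduct of the proof of Proposition \ref{prop:minimaxsoft}, which uses Lemma \ref{lem:phaseindep} to reduce to the amplitude distribution, Jensen's inequality with the concavity of $r$ in $\mu^2$ (Lemma \ref{lem:softthreshincconcave}) to collapse the nonzero part to a single atom at $\mu_0$ with $\mu_0^2 = \E_G(\mu^2)$, and monotonicity to push that atom out to the constraint boundary. The one step you flag as an obstacle --- a competitor putting mass $\epsilon' < \epsilon$ off zero, whose collapsed atom then sits at $v = \epsilon\gamma^2/\epsilon' > \gamma^2$ --- is indeed not settled by monotonicity alone, but one further use of concavity closes it: for concave increasing $\tilde r$ with $\tilde r(0)\geq 0$, the map $a \mapsto (1-a)\,\tilde r(0) + a\,\tilde r(c/a)$ is nondecreasing in $a$, so taking the off-zero weight up to $\epsilon$ (and the atom down to $\gamma^2$) only increases the objective. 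The paper itself sidesteps this point by writing every admissible $q$ with zero-mass exactly $1-\epsilon$, so on this detail you are, if anything, more careful than the original.
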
 
First, note that the maximizing distribution (or least favorable distribution) is independent of the threshold parameter. Second, note that the maximizing distribution is not unique since we have already proved that the phase distribution does not affect the risk function. 

\subsubsection{Noise sensitivity of state evolution}\label{sec:ns_se}
 As mentioned in Section \ref{sec:stateevolution}, in the presence of measurement noise, SE is given by
\begin{eqnarray*}
m_{t+1} &=& \Psi(m_t) \nonumber \\
 &=& \E|\eta(X+ \sqrt{{\rm npi}} Z_1+ i \sqrt{{\rm npi}}  Z_2 ; \tau \sqrt{{\rm npi}}) - X|^2,
\end{eqnarray*}
where ${\rm npi} ={\sigma^2+\frac{m_t}{\delta}}$. As mentioned above, $m_t$ characterizes the asymptotic MSE of CAMP at iteration $t$. Therefore, the final solution of the CAMP algorithm converges to one of the stable fixed points of the $\Psi$ function. The next theorem suggests that the stable fixed point is unique, and therefore no matter where the algorithm starts from it will always converge to the same MSE.

\begin{lemma}\label{lem:uniqfix}
$\Psi(m)$ has a unique stable fixed point to which the sequence of $\{m_t \}$ converges.
\end{lemma}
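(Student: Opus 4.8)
The plan is to exploit the concavity of $\Psi$ established in Lemma~\ref{lem:concavity} together with the facts that $\Psi(m) \ge 0$ for all $m \ge 0$ and $\Psi(0) = \E|\eta(X + \sqrt{\sigma^2}Z_1 + i\sqrt{\sigma^2}Z_2; \tau\sqrt{\sigma^2}) - X|^2$, which is strictly positive whenever $\sigma > 0$ (and whenever $\rho\delta > 0$ even if $\sigma = 0$, but the interesting case here is the noisy one). First I would record the key structural facts: $\Psi$ is nonnegative, $\Psi(0) > 0$, $\Psi$ is concave on $[0,\infty)$, and $\Psi$ is nondecreasing in $m$ (this follows from Lemma~\ref{lem:softthreshincconcave}, since $\mathrm{npi}(m;\sigma,\delta) = \sigma^2 + m/\delta$ is increasing in $m$ and the risk $r(\mu,\tau)$ is increasing in the noise level after the appropriate rescaling). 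A concave function on $[0,\infty)$ that starts strictly above $0$ at $m=0$ and is nonnegative everywhere can cross the diagonal line $y = m$ at most twice, and because $\Psi(0) > 0 = $ the value of the diagonal at $0$, the graph of $\Psi$ starts \emph{above} the diagonal.

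Next I would argue existence of a fixed point: since $\Psi$ is concave and $\Psi(m)/m \to 0$ as $m \to \infty$ (the risk is bounded by a quantity growing like $\mathrm{npi} \sim m/\delta$ times a constant, but more carefully one uses that the per-coordinate risk saturates so that $\Psi(m) = o(m)$, or at minimum that the slope at infinity is eventually less than $1$ — this needs the explicit bound from Appendix~\ref{appsec:ptformula}), the continuous function $\Psi(m) - m$ is positive at $0$ and negative for large $m$, hence has a zero. Let $m^*$ be the smallest such zero. At $m^*$ the graph of $\Psi$ crosses from above the diagonal to below it, so $\Psi'(m^*) \le 1$; I would then upgrade this to strict inequality $\Psi'(m^*) < 1$ using concavity (if $\Psi'(m^*) = 1$ the concave function would stay below the tangent line $m^* + (m - m^*)$, contradicting $\Psi > 0$ combined with the crossing structure — or one invokes that the crossing is transversal because $\Psi$ is strictly concave, which the appendix computation should give). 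Hence $m^*$ is a stable fixed point.

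For uniqueness: suppose $m^{**} > m^*$ were another fixed point. Between $m^*$ and $m^{**}$ the concave function $\Psi$ lies on or below the chord connecting $(m^*, m^*)$ and $(m^{**}, m^{**})$, which is a segment of the diagonal; but a concave function lying below a chord between two points where it equals the chord must coincide with the chord on that whole interval, forcing $\Psi(m) = m$ on $[m^*, m^{**}]$, which contradicts $\Psi'(m^*) < 1$. Hence $m^*$ is the unique fixed point with positive value, and it is stable. Finally, convergence of $\{m_t\}$: starting from $m_0 = \E|X|^2$, if $m_0 > m^*$ then since $\Psi$ is nondecreasing and $\Psi(m) < m$ for $m > m^*$ (as $\Psi - m$ is negative beyond the last crossing, which is $m^*$ itself by uniqueness), the sequence $m_t$ is decreasing and bounded below by $m^*$, hence converges to a fixed point, which must be $m^*$; if $m_0 < m^*$ the sequence is increasing and bounded above by $m^*$ by the same monotonicity argument, again converging to $m^*$; if $m_0 = m^*$ it is constant.

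\textbf{Main obstacle.} The crux is verifying that the crossing at $m^*$ is strict, i.e. $\Psi'(m^*) < 1$ rather than $= 1$, and relatedly ruling out a \emph{tangential} fixed point where $\Psi$ touches the diagonal without crossing. Pure concavity plus $\Psi(0) > 0$ rules out most pathologies, but a concave curve can be tangent to a line from above at one point. Excluding this requires either strict concavity of $\Psi$ (which should come from the explicit second-derivative sign in Appendix~\ref{appsec:ptformula}, since the integrand defining the risk is strictly concave in $\mathrm{npi}$ away from degenerate cases), or a direct argument that at any point where $\Psi$ meets the diagonal the derivative is strictly less than $1$ because the ``point mass at zero'' component of any admissible input contributes a term with slope strictly below the threshold. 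I expect this to be the delicate step; the existence, the uniqueness-given-strictness, and the monotone convergence are then routine consequences of concavity and monotonicity of $\Psi$.
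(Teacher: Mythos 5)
The paper gives no proof of this lemma, so you are being measured against the argument it implicitly intends, which is exactly the one you chose: concavity of $\Psi$ (Lemma \ref{lem:concavity}) together with $\Psi(0)>0$ when $\sigma>0$. Your architecture (existence, stability, uniqueness, monotone convergence) is right, and the step you single out as the ``main obstacle'' is in fact not one: at any fixed point $m^*>0$ the secant inequality for concave functions gives $\Psi'(m^*)\le \big(\Psi(m^*)-\Psi(0)\big)/m^* = 1-\Psi(0)/m^*<1$, so a tangential fixed point is impossible without any strict-concavity input; your own parenthetical (if $\Psi'(m^*)=1$ then $\Psi(m)\le m$ for all $m$, contradicting $\Psi(0)>0$) is the same fact. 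One concrete error: a concave function lies on or \emph{above} its chords, not below, so your uniqueness paragraph is wrong as written. It is repaired by the three-point form of concavity: if $0<m^*<m^{**}$ were both fixed points, writing $m^*=\lambda\cdot 0+(1-\lambda)m^{**}$ gives $\Psi(m^*)\ge \lambda\Psi(0)+(1-\lambda)m^{**}=m^*+\lambda\Psi(0)>m^*$, a contradiction.

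The genuine gap is existence. The claim $\Psi(m)/m\to 0$ is false: rescaling as in the proof of Lemma \ref{lem:concavity}, $\Psi(m)=\nu^2\,\E\big|\eta(X/\nu+Z_1+iZ_2;\tau)-X/\nu\big|^2$ with $\nu^2=\sigma^2+m/\delta$, and since $X/\nu\to 0$ as $m\to\infty$ one gets $\Psi(m)/m\to\frac{1}{\delta}\E|\eta(Z_1+iZ_2;\tau)|^2=\frac{2}{\delta}\int_{\tau}^{\infty}w(w-\tau)^2{\rm e}^{-w^2}dw$. This limit is not zero, and for small $\delta$ and small $\tau$ it exceeds $1$; in that case the concave $\Psi$ stays strictly above the diagonal for every $m$, no finite fixed point exists, and $m_t\to\infty$. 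So the lemma holds only under the restriction $2\int_{\tau}^{\infty}w(w-\tau)^2{\rm e}^{-w^2}dw<\delta$ (equivalently, the asymptotic slope of $\Psi$ is below one), which is what the tuning of $\tau$ in Theorem \ref{thm:phasetranscurve} guarantees; this hypothesis is missing both from the statement and from your argument. Once it is imposed, your intermediate-value step and the monotone-convergence step go through, with the caveat that monotonicity of $\Psi$ in $m$ follows from concavity plus the nonnegativity of the asymptotic slope, not from Lemma \ref{lem:softthreshincconcave}, which concerns monotonicity of the risk in the signal amplitude $\mu$ rather than in the noise level.
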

We call the fixed point in Lemma \ref{lem:uniqfix} ${\rm fMSE}(\sigma^2, \delta, \rho, G,
\tau)$. According to Section \ref{ssec:noisyphasetran}, we define
the minimax noise sensitivity as
\[
{\rm NS}^{SE}(\delta, \rho) \triangleq \min_{\tau} \sup_{\sigma>0} \sup_{ q \in \mathcal{F}_{\epsilon}} {\rm fMSE}(\sigma^2, \delta, \rho, G, \tau)/\sigma^2.
\]
The noise sensitivity of SE can be easily evaluated
from $ M^{\flat}(\epsilon)$. The following theorem characterizes this relation.
\begin{thm}
Let $\rho_{MSE}(\delta)$ be the value of $\rho$ satisfying $M^{\flat} (\rho \delta) = \delta$. Then, for $\rho < \rho_{MSE}$ we have
\begin{eqnarray*}
{\rm NS}^{SE}(\delta, \rho) = \frac{M^{\flat}(\delta \rho)}{1 - M^{\flat}(\delta \rho)/\delta},
\end{eqnarray*}
and for $\rho > \rho_{MSE}(\delta)$, ${\rm NS}^{SE}(\delta, \rho) = \infty$.
\end{thm}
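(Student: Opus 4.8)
The plan is to analyze the fixed-point equation $m = \Psi(m)$ at the least favorable configuration and extract the noise sensitivity as a closed form in terms of $M^{\flat}$. First I would use Lemma~\ref{lem:phaseindep} and Proposition~\ref{prop:leastfavor} to reduce the supremum over $q \in \mathcal{F}_{\epsilon}$ (with $\epsilon = \rho\delta$) to the two-point amplitude distribution $q_o$: the worst-case MSE map is attained (or approached) on $q_o(X) = (1-\rho\delta)\delta_0(|X|) + \rho\delta\,\delta_\gamma(|X|)$, and then I let $\gamma \to \infty$ to realize the supremum over $\sigma$ and $G$ simultaneously, mirroring the scaling argument from \cite{DoMaMoNSPT}. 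The key structural observation is the scale invariance of the complex soft threshold: writing ${\rm npi}_t = \sigma^2 + m_t/\delta$ and using $\eta(cu; c\tau) = c\,\eta(u;\tau)$, one gets
\[
\Psi(m_t) = {\rm npi}_t \cdot \E\bigl|\eta(\tilde X + Z_1 + iZ_2;\tau) - \tilde X\bigr|^2,
\]
where $\tilde X = X/\sqrt{{\rm npi}_t}$ has, under $q_o$ scaled appropriately, a point mass $1-\rho\delta$ at $0$; taking $\gamma\to\infty$ pushes the nonzero mass off to infinity, so the worst-case value of that expectation over the admissible input class is exactly $M^{\flat}(\rho\delta)$ once we also optimize $\tau$.

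Next I would substitute this into the fixed-point condition. At the least favorable input, the fixed point $m^*$ satisfies
\[
m^* = \Bigl(\sigma^2 + \frac{m^*}{\delta}\Bigr) M^{\flat}(\rho\delta),
\]
and solving for $m^*$ gives
\[
m^* = \frac{\sigma^2\, M^{\flat}(\rho\delta)}{1 - M^{\flat}(\rho\delta)/\delta},
\]
valid precisely when $M^{\flat}(\rho\delta) < \delta$. Dividing by $\sigma^2$ yields the claimed formula for ${\rm NS}^{SE}(\delta,\rho)$, and the optimization over $\tau$ that defines $M^{\flat}$ is exactly the $\min_\tau$ in the definition of ${\rm NS}^{SE}$; Lemma~\ref{lem:uniqfix} guarantees this fixed point is the unique stable one, so it really is the limit of $\{m_t\}$. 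For the regime $\rho > \rho_{MSE}(\delta)$, i.e. $M^{\flat}(\rho\delta) > \delta$ (note $M^{\flat}$ is increasing in its argument, so $\rho_{MSE}$ is well defined by $M^{\flat}(\rho\delta) = \delta$), the map $m \mapsto ({\rm npi})\,M^{\flat}(\rho\delta)$ has slope $M^{\flat}(\rho\delta)/\delta > 1$ at the relevant scale, so $\Psi(m) > m$ for all $m$ on the worst-case input and $m_t \to \infty$; hence the ratio ${\rm fMSE}/\sigma^2$ is unbounded and ${\rm NS}^{SE} = \infty$.

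The main obstacle is making the double supremum $\sup_{\sigma>0}\sup_{q}$ rigorous: one must show that the two-point worst-case distribution together with the $\gamma \to \infty$ limit genuinely achieves the value $M^{\flat}(\rho\delta)$ inside the fixed-point equation, rather than merely bounding it, and that interchanging the $\sup$ over $\sigma$ with solving the fixed-point equation is legitimate. This requires the monotonicity and concavity properties of the risk $r(\mu,\tau)$ from Lemma~\ref{lem:softthreshincconcave} — concavity in $\mu^2$ ensures the worst case sits at a two-point distribution (Proposition~\ref{prop:leastfavor}), and monotonicity in $\mu$ plus the explicit form in Proposition~\ref{prop:minimaxsoft} controls the $\gamma\to\infty$ limit — together with a careful check that the fixed point depends monotonically on the input scale so that the supremum commutes with the fixed-point solve. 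Once those ingredients are in place, everything else is the elementary algebra of solving a linear fixed-point equation.
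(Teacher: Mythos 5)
Your proposal is correct and follows essentially the same route the paper intends: the paper omits this proof entirely, deferring to Proposition 3.1 of \cite{DoMaMoNSPT}, and your argument (scale invariance of $\eta$, reduction to the two-point least favorable amplitude distribution via Lemma \ref{lem:softthreshincconcave} and Proposition \ref{prop:leastfavor}, the $\gamma\to\infty$ limit giving $M^{\flat}(\rho\delta)$, and solving the resulting linear fixed-point equation) is precisely that argument transplanted to the complex case. The delicate point you flag --- justifying the interchange of the suprema over $\sigma$ and $G$ with the fixed-point solve, and recognizing that for $\rho>\rho_{MSE}$ the divergence is a supremum over finite-$\gamma$ fixed points rather than literal divergence of $m_t$ for a fixed admissible input --- is exactly the content carried by the cited reference.
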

\noindent The proof of this theorem follows along the same lines as the proof of Proposition 3.1 in \cite{DoMaMoNSPT}, and therefore we skip it for the sake of brevity.  The contour lines of this noise sensitivity function are displayed in Figure \ref{fig:ns_complex}. 

Similar arguments as those presented in Proposition 3.1 in \cite{DoMaMoNSPT} combined with Proposition \ref{prop:leastfavor} prove the following.
\begin{propo}
The maximum of the formal MSE, $\max_{q \in \mathcal{F}_{\epsilon, \gamma}}{\rm fMSE}(\sigma^2, \delta, \rho, G, \tau)$
is achieved by $q = (1- \epsilon) \delta_0(|X|)+\epsilon \delta_{\gamma}(|X|)$, independent of $\sigma$ and $\tau$. 
\end{propo}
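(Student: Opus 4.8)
The plan is to reduce this to the least‑favorability of the two‑point prior for the unit‑variance complex soft‑thresholding risk — which is exactly Proposition~\ref{prop:leastfavor} — and then push that pointwise comparison of MSE maps through to their stable fixed points. Throughout, write $q_o(X)\triangleq(1-\epsilon)\delta_0(|X|)+\epsilon\,\delta_\gamma(|X|)$ with $\epsilon=\rho\delta$; since $\E_{q_o}\1(|X|=0)=1-\epsilon$ and $\E_{q_o}|X|^2=\epsilon\gamma^2$, we have $q_o\in\mathcal{F}_{\epsilon,\gamma}$, so the claimed maximizer is admissible. The first step is a homogeneity rescaling: the complex soft‑thresholding function is positively homogeneous, $\eta(cu;c\tau)=c\,\eta(u;\tau)$ for every $c>0$ (Appendix~\ref{appsec:prox}), so writing $\npi\triangleq\npi(m,\sigma,\delta)=\sigma^2+m/\delta$ and pulling $\sqrt{\npi}$ out of $\eta$,
\[
\Psi(m,\delta,\rho,\sigma,G,\tau)=\npi\cdot\E\big|\,\eta(\tilde X+Z_1+iZ_2;\tau)-\tilde X\,\big|^2,
\]
where $\tilde X\triangleq X/\sqrt{\npi}$ and $Z_1,Z_2\sim N(0,1/2)$ are independent. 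If the law of $X$ lies in $\mathcal{F}_{\epsilon,\gamma}$, then the law of $\tilde X$ lies in $\mathcal{F}_{\epsilon,\gamma/\sqrt{\npi}}$, since the point mass at $0$ is unchanged and the second moment is divided by $\npi$.

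Next I would establish pointwise domination of the MSE map. Fix $m\ge 0$ and apply Proposition~\ref{prop:leastfavor} with scale parameter $\gamma/\sqrt{\npi}$: among all $\tilde X$ whose law lies in $\mathcal{F}_{\epsilon,\gamma/\sqrt{\npi}}$, the unit‑variance risk $\E|\eta(\tilde X+Z_1+iZ_2;\tau)-\tilde X|^2$ is maximized by the two‑point law $(1-\epsilon)\delta_0(|\cdot|)+\epsilon\,\delta_{\gamma/\sqrt{\npi}}(|\cdot|)$. Rescaling this maximizer back by $\sqrt{\npi}$ returns exactly $q_o$ (using Lemma~\ref{lem:phaseindep} to disregard phases), and multiplying the risk inequality by $\npi$ gives
\[
\Psi(m,\delta,\rho,\sigma,G,\tau)\le\Psi(m,\delta,\rho,\sigma,q_o,\tau),\qquad m\ge 0,
\]
for every $G$ with amplitude law in $\mathcal{F}_{\epsilon,\gamma}$. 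The dominating law $q_o$ does not depend on $m$; and because the maximizer in Proposition~\ref{prop:leastfavor} is threshold‑free and the rescaling pulls the two‑point law back to the same $q_o$ for every value of $\npi$, it is also independent of $\sigma$ and $\tau$.

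The remaining step transfers the comparison to the fixed point. Denote the two maps by $\Psi_G$ and $\Psi_{q_o}$; by Lemma~\ref{lem:concavity} both are concave, and by Lemma~\ref{lem:uniqfix} each has a unique stable fixed point, $m^*_G$ and $m^*_{q_o}$. Since $\sigma>0$ gives $\Psi_G(0)>0$ while $\Psi_G(m)-m\to-\infty$ in the regime where a finite fixed point exists, concavity of $m\mapsto\Psi_G(m)-m$ and uniqueness force its positive super‑level set to be a bounded interval $[0,m^*_G)$, i.e. $\Psi_G(m)<m$ for every $m>m^*_G$. If $m^*_G>m^*_{q_o}$, then $\Psi_{q_o}(m^*_G)<m^*_G$, so by the pointwise bound $m^*_G=\Psi_G(m^*_G)\le\Psi_{q_o}(m^*_G)<m^*_G$, a contradiction; hence $m^*_G\le m^*_{q_o}$, that is $\mathrm{fMSE}(\sigma^2,\delta,\rho,G,\tau)\le\mathrm{fMSE}(\sigma^2,\delta,\rho,q_o,\tau)$. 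As $q_o\in\mathcal{F}_{\epsilon,\gamma}$, the supremum is attained at $q_o$, for every $\sigma$ and $\tau$. This is the complex‑valued counterpart of the argument behind Proposition~3.1 of \cite{DoMaMoNSPT}.

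The routine parts are the homogeneity rescaling and the invariance of the two‑point maximizer under it. The delicate point is the last step: one must argue carefully that ``unique stable fixed point'' plus concavity really yields the strict one‑sided inequality $\Psi_G(m)<m$ above $m^*_G$ — this uses that $m=0$ fails to be stable once $\sigma>0$ — and that the monotone transfer needs only the non‑strict pointwise bound $\Psi_G\le\Psi_{q_o}$, valid uniformly in $m$. One should also note the degenerate regime $\rho\ge\rho_{MSE}(\delta)$: there the map $\Psi_{q_o}$ already has no finite fixed point, so $\mathrm{fMSE}$ at $q_o$ equals $+\infty$ and the maximum is trivially attained, while for $\rho<\rho_{MSE}(\delta)$ the argument above applies verbatim.
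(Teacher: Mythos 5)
Your proposal is correct and follows exactly the route the paper intends: the paper itself omits the proof, saying only that it follows from ``similar arguments as Proposition 3.1 of \cite{DoMaMoNSPT} combined with Proposition \ref{prop:leastfavor},'' and your write-up is a faithful, detailed reconstruction of that argument (homogeneity rescaling of $\eta$, pointwise domination $\Psi_G\le\Psi_{q_o}$ via the scaled least-favorable two-point prior, then transfer to the fixed point using concavity). The one refinement worth keeping is your explicit handling of the concavity-based crossing argument and the degenerate regime $\rho\ge\rho_{MSE}(\delta)$, both of which the paper leaves implicit.
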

Again we emphasize that the maximizing or least favorable distribution is not unique. Note that the least favorable distribution provides a simple approach for designing and setting the parameters of CS systems \cite{AnMaBa12}: We design the system such that it performs well on the least favorable distribution, and it is then guaranteed that the system will perform as well (or in many cases better) on all other input distributions.

As a final remark we note that $\rho_{MSE}(\delta)$ equals $\rho_{SE}(\delta)$ as proved next.

\begin{propo} For every $\delta \in [0,1]$ we have
\[
\rho_{MSE}(\delta) = \rho_{SE}(\delta).
\]
\end{propo}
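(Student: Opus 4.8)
The plan is to express both $\rho_{SE}(\delta)$ and $\rho_{MSE}(\delta)$ as thresholds of a single bivariate function and then compare them by a minimax/monotonicity argument, mirroring Proposition 3.1 of \cite{DoMaMoNSPT}. Set
\[
h(\rho,\tau) \triangleq \rho\delta(1+\tau^2) + (1-\rho\delta)\,\E\big|\eta(Z_1 + iZ_2;\tau)\big|^2, \qquad Z_1,Z_2\sim N(0,1/2)\ \text{independent}.
\]
Two facts already in the paper feed the argument. First, from the proof of Proposition \ref{prop:distributionindep}, $\left.\tfrac{d\Psi}{dm}\right|_{m=0} = h(\rho,\tau)/\delta$; since (with $\sigma=0$) $\Psi$ is concave with $\Psi(0)=0$ by Lemma \ref{lem:concavity}, $0$ is a stable fixed point of $\Psi$ — equivalently $\rho$ lies in Region I — precisely when $h(\rho,\tau)\le\delta$, so the boundary $\rho_{SE}(\delta;\tau)$ is the value of $\rho$ solving $h(\rho,\tau)=\delta$, and $\rho_{SE}(\delta)=\sup_\tau\rho_{SE}(\delta;\tau)$. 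Second, since $|Z_1+iZ_2|$ has density $2w\,{\rm e}^{-w^2}$ on $[0,\infty)$ and $|\eta(u+iv;\tau)| = (|u+iv|-\tau)_+$ (Appendix \ref{appsec:prox}), one has $\E|\eta(Z_1+iZ_2;\tau)|^2 = 2\int_{w\ge\tau} w(w-\tau)^2{\rm e}^{-w^2}\,dw$; plugging this into the formula of Proposition \ref{prop:minimaxsoft} gives the key reformulation
\[
M^{\flat}(\rho\delta) = \inf_\tau\Big[(1-\rho\delta)\,\E|\eta(Z_1+iZ_2;\tau)|^2 + \rho\delta(1+\tau^2)\Big] = \inf_\tau h(\rho,\tau),
\]
so $\rho_{MSE}(\delta)$ is the value of $\rho$ solving $\inf_\tau h(\rho,\tau)=\delta$.

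Next I would record that $h$ is monotone in its first argument. Writing $h(\rho,\tau) = \rho\delta\big[(1+\tau^2) - \E|\eta(Z_1+iZ_2;\tau)|^2\big] + \E|\eta(Z_1+iZ_2;\tau)|^2$ and using $\E|\eta(Z_1+iZ_2;\tau)|^2 = \E(|Z_1+iZ_2|-\tau)_+^2 \le \E|Z_1+iZ_2|^2 = 1 \le 1+\tau^2$, the coefficient of $\rho$ is nonnegative, and strictly positive for $\tau>0$. Hence for each $\tau>0$ the map $\rho\mapsto h(\rho,\tau)$ is strictly increasing, so $h(\rho,\tau)=\delta$ has at most one solution $\rho_{SE}(\delta;\tau)$, and $h(\rho,\tau)>\delta$ if and only if $\rho>\rho_{SE}(\delta;\tau)$.

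With these observations the equality follows in two lines. Let $\rho^\star \triangleq \rho_{MSE}(\delta)$, so $\inf_\tau h(\rho^\star,\tau)=\delta$, and let $\tau^\star$ attain this infimum; then $h(\rho^\star,\tau^\star)=\delta$, i.e. $\rho^\star=\rho_{SE}(\delta;\tau^\star)\le\rho_{SE}(\delta)$. Conversely, for every $\tau$ we have $h(\rho^\star,\tau)\ge\inf_\tau h(\rho^\star,\tau)=\delta$, so by the monotonicity in $\rho$ any solution of $h(\rho,\tau)=\delta$ obeys $\rho\le\rho^\star$; taking the supremum over $\tau$ gives $\rho_{SE}(\delta)\le\rho^\star$. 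Therefore $\rho_{SE}(\delta)=\rho_{MSE}(\delta)$.

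The main obstacle is the attainment of the infimum defining $M^{\flat}$: one must check that, for $\rho^\star>0$, the function $\tau\mapsto(1-\rho^\star\delta)\,\E|\eta(Z_1+iZ_2;\tau)|^2 + \rho^\star\delta(1+\tau^2)$ is minimized at some finite $\tau^\star\in(0,\infty)$ — it is continuous, strictly decreasing at $\tau=0$, and tends to $+\infty$ as $\tau\to\infty$, so a minimizer exists — and, if one prefers to avoid this, one replaces $\tau^\star$ by a minimizing sequence $\tau_n$ and passes to the limit using continuity of $\rho\mapsto h(\rho,\tau_n)$. The degenerate endpoints $\delta\in\{0,1\}$ (where $h(\rho,0)\equiv1$) should be dispatched separately. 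Everything else is the bookkeeping quoted above, which parallels \cite{DoMaMoNSPT}.
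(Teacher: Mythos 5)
Your proof is correct and follows essentially the same route as the paper's: both identify the defining equation of $\rho_{SE}(\delta;\tau)$ (via $\left.\frac{d\Psi}{dm}\right|_{m=0}=1$ and concavity) and the defining equation of $\rho_{MSE}(\delta)$ (via Proposition \ref{prop:minimaxsoft}) as the level set and the minimized level set of one and the same bivariate function $h(\rho,\tau)$, and conclude the two thresholds coincide. The only difference is that you supply the monotonicity-in-$\rho$ and sup/inf interchange details that the paper compresses into the sentence ``in order to obtain the highest $\rho$ we should minimize the above expression over $\tau$,'' which is a welcome tightening rather than a different argument.
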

\begin{proof}
The proof is a simple comparison of the formulas. We first know that $\rho_{MSE}$ is derived from the following equation
\[
\min_{\tau} 2(1- \rho \delta) \int_{\omega > \tau} \omega(\omega- \tau)^2 {\rm e}^{-\omega^2} d\omega + \rho \delta (1+ \tau^2) = \delta.
\]
On the other hand, since $\Psi(m)$ is a concave function of $m$, $\rho_{SE}(\delta, \tau)$ is derived from $\left. \frac{d\Psi(m)}{dm}\right|_{m =0} =1$. This derivative is
equal to
\[
\left. \frac{d\Psi(m)}{dm}\right|_{m =0} \! \! = \frac{2(1- \rho \delta)}{\delta} \int_{\omega > \tau} \! \! \omega(\omega- \tau)^2 {\rm e}^{-\omega^2} d\omega + \frac{\rho \delta}{\delta} (1+ \tau^2).
\]
Also, $\rho_{SE}(\delta)= \sup_{\tau}\rho_{SE}(\tau,\delta)$. However, in order to obtain the highest $\rho$ we should minimize the above expression over $\tau$.
Therefore, both $\rho_{SE}(\delta)$ and $\rho_{MSE}(\delta)$ satisfy the same equations and thus are exactly equal.
\end{proof}

\subsection{Connection between the state evolution, CAMP, and c-LASSO }\label{sec:connection}
 There is a strong connection between the SE framework, the CAMP algorithm, and c-LASSO.
 Recently, \cite{BaMo10} proved that SE predicts the asymptotic performance of the AMP algorithm when the measurement matrix is i.i.d. Gaussian. The result also holds for complex Gaussian matrices and complex input vectors. As in \cite{DoMaMo09},
we conjecture that the SE predictions are correct for a ``large" class of random
matrices. We show evidence of this claim in Section
\ref{sec:simulation}. Here, for the sake of completeness, we quote
the result of \cite{BaMo10} in the complex setting. Let $\gamma :
\mathds{C}^2\rightarrow \mathds{R}$ be a pseudo-Lipschitz
function.\footnote{$\gamma: \mathds{C}^2 \rightarrow \mathds{R}$ is
pseudo-Lipschitz if and only if $|\psi(x)-\psi(y)| \leq L(1+ \|x\|_2
+ \|y\|_2)\|x-y\|_2$.} To make the presentation clear we consider a simplified version of Definition 1 in \cite{BaMo11}.  
\begin{definition}
A sequence of instances $\{s_o(N), A(N), w(N)\}$, indexed by the ambient dimension $N$, is called a converging sequence if the following conditions hold:
\begin{itemize}
\item[-] The elements of $s_o(N) \in \mathds{R}^N$ are i.i.d. drawn from $(1-\rho \delta)\delta_0(|s_{o,i}|) + \rho \delta G(s_{o,i})$. 
\item[-] The elements of $w(N) \in \mathds{R}^n$ ($n = \delta N$) are i.i.d. drawn from $N(0, \sigma_w^2)$.
\item[-] The elements of $A(N) \in \mathds{R}^{n \times N}$ are i.i.d. drawn from a complex Gaussian distribution.
\end{itemize}
\end{definition}

\begin{thm}\label{thm:dynamic message}
Consider a converging sequence $\{s_o(N), A(N), w(N)\}$. Let $x^t(N)$ be the estimate of the CAMP algorithm at iteration $t$. For any pseudo Lipschitz
function $\gamma: \mathds{C}^2 \rightarrow \mathds{R}$ we have
\begin{eqnarray}
\lefteqn{\lim_{N \rightarrow \infty} \frac{1}{N} \sum_{i=1}^{N} \gamma(x_i^t, s_{o,i})} \nonumber \\
&=& \E \gamma\left(\eta\left(X+ \sqrt{\npi^t}Z_1+ i \sqrt{\npi^t}Z_2; \tau \sqrt{\npi^t}\right), X\right)\nonumber
\end{eqnarray}
almost surely, where $Z_1 +i Z_2 \sim CN(0,1)$  and $X \sim (1-\rho \delta)\delta_0(|s_{o,i}|) + \rho \delta G(s_{o,i})$ are independent complex random variables. Also, $\npi^t \triangleq \sigma^2+ m_t/\delta$, where $m_t$ satisfies \eqref{eq:stateevo1}. 
\end{thm}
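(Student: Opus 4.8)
This is the complex analogue of the rigorous state-evolution theorem of Bayati and Montanari, and the plan is to run their conditioning technique \emph{directly over} $\mathds{C}$ rather than to invoke the real-valued result as a black box: the real embedding $\tilde A$ of a complex Gaussian $A$ is not an i.i.d.\ Gaussian matrix (it carries a $2\times2$ block structure with repeated entries), so the theorem cannot be quoted, but every ingredient of the argument that is used needs only (i) i.i.d.\ entries of the correct variance and (ii) Gaussianity for a conditioning lemma, and the circularly-symmetric complex Gaussian satisfies both, with unitary invariance playing the role of orthogonal invariance. The first step is to put \eqref{eq:camp} into the canonical AMP form $x^{t+1}=\eta(x^t+A^*z^t;\tau_t)$, $z^{t+1}=y-Ax^{t+1}+b_t z^t$ with a \emph{real scalar} Onsager coefficient $b_t$. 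Aggregating the four correction sums in \eqref{eq:camp} and replacing $A_{aj}\mathcal{R}(A_{aj}^*z_a^t)$ and $A_{aj}\mathcal{I}(A_{aj}^*z_a^t)$ by their conditional means (of order $1/n$) shows the correction equals $\frac1n\sum_j\big[\tfrac12(\partial_x\eta^R+\partial_y\eta^I)+\tfrac{i}{2}(\partial_x\eta^I-\partial_y\eta^R)\big]_j z_a^t$ up to lower-order terms; the key identity is that complex soft-thresholding has a \emph{symmetric} real Jacobian, i.e.\ $\partial_x\eta^I=\partial_y\eta^R$, so the imaginary part drops out and $b_t=\frac1n\sum_j\big(1-\tfrac{\tau}{2|x_j^t+(A^*z^t)_j|}\big)\mathds 1\{|x_j^t+(A^*z^t)_j|>\tau\}$ is real, exactly as in the real case.

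Second, I would set up an induction over $t$. Let $\mathfrak{S}_t$ be the $\sigma$-algebra generated by $s_o$, $w$, and the CAMP iterates produced up to the formation of $z^t$. Conditionally on $\mathfrak{S}_t$ the matrix $A$ is a complex Gaussian subject to finitely many linear constraints of the form $AM=U$, $A^*V=W$ that record the action of $A$ and $A^*$ on the previous iterates; by the complex Gaussian conditioning lemma, $A\mid\mathfrak S_t\stackrel{d}{=}E_t+\mathsf P_t^{\perp}\tilde A\,\mathsf Q_t^{\perp}$, where $E_t$ is the minimum-norm solution of the constraints (deterministic given $\mathfrak S_t$), $\tilde A$ is an independent copy of $A$, and $\mathsf P_t^{\perp},\mathsf Q_t^{\perp}$ are the orthogonal projections onto the complements of the constraint subspaces. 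Feeding this decomposition into one further CAMP step, the contribution of $E_t$ to $x^{t+1}+A^*z^{t+1}$ is a linear combination of the earlier iterates whose leading term is annihilated by the Onsager term $b_t z^t$ — this is precisely the function of the Onsager correction — while $\mathsf P_t^{\perp}\tilde A\,\mathsf Q_t^{\perp}$ contributes, once the projections are shown to be asymptotically negligible, a fresh circularly-symmetric complex Gaussian perturbation of variance $\npi^{t}=\sigma^2+m_t/\delta$. Because $\eta$ preserves phase and sees the input only through its magnitude, Lemma \ref{lem:phaseindep} guarantees that this single scalar $\npi^t$ (rather than a $2\times2$ covariance) is all the state that propagates, so the recursion for $m_t$ is exactly \eqref{eq:stateevo1}.

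Third, I would upgrade these distributional identities to the almost-sure empirical statement. For the fresh Gaussian $\tilde A$, pseudo-Lipschitz functions of the iterates concentrate around their conditional expectations (Gaussian concentration plus a Borel--Cantelli argument along $N$), which promotes the conditional-in-law statement to $\frac1N\sum_{i=1}^N\gamma(x_i^t,s_{o,i})\to\E\,\gamma\big(\eta(X+\sqrt{\npi^t}Z_1+i\sqrt{\npi^t}Z_2;\tau\sqrt{\npi^t}),X\big)$ almost surely; induction on $t$, with the trivial base case $t=0$ ($x^0=0$, $z^0=y$), then closes the proof. The main obstacle is the bookkeeping: faithfully tracking the linear constraints imposed on both $A$ and $A^*$ in the complex conditioning lemma and verifying that the leftover projection terms, the Onsager remainder, and the $O(1/N)$ errors carried from Proposition \ref{prop:ampderivation} are all $o(1)$ in the relevant norms. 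This is the technical heart of \cite{BaMo10,BaMo11}, and the genuinely new work is confined to checking that their estimates survive the replacement of the real Gaussian ensemble by the complex one and of the scalar soft-threshold by its two-dimensional counterpart with symmetric Jacobian.
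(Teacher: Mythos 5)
Your plan follows exactly the route the paper intends: the paper offers no proof of Theorem~\ref{thm:dynamic message} beyond the single remark that it is ``similar to the proof of Theorem 1 in \cite{BaMo10}'', i.e., the Bayati--Montanari conditioning argument adapted to the circularly symmetric complex Gaussian ensemble, which is precisely what you sketch. Your additional observations --- that the Onsager coefficient is real because $\partial_x\eta^I=\partial_y\eta^R$ so the canonical AMP form is recovered, and that circular symmetry of the fresh Gaussian part collapses the propagated state to the single scalar $\npi^t$ --- are correct and supply detail the paper leaves implicit.
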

The proof of this theorem is similar to the proof of Theorem 1 in \cite{BaMo10} and hence is skipped here.\\

It is also straightforward to extend the result of \cite{DoMaMoNSPT} and \cite{BaMo11} on the connection of message passing algorithms and LASSO to the complex setting. For a given value of $\tau$ suppose that the fixed point of the state evolution is denoted by $m^*$. Define $\lambda(\tau)$ as
\begin{equation}\label{eq:taulambdacalibration}
\lambda(\tau)\! 
 \triangleq\! \tau \sqrt{m^*} \left(1 - \frac{1}{2\delta} \E \! \left(\frac{\partial  \eta^R}{\partial x} + \frac{\partial \eta^I}{\partial y}\right)\right),
\end{equation}
where
\begin{eqnarray}
\frac{\partial  \eta^R}{\partial x} &\triangleq& \frac{\partial  \eta^R}{\partial x} \! \left(X+ \sqrt{m^*} Z_1\! + \! i\sqrt{m^*} Z_2; \tau \sqrt{m^*} \right), \nonumber \\
 \frac{\partial \eta^I}{\partial y} &\triangleq& \frac{\partial \eta^I}{\partial y} \! \left(X+ \sqrt{m^*} Z_1 +i\sqrt{m^*} Z_2; \tau \sqrt{m^*}\right), \nonumber
\end{eqnarray}
and $\E$ is with respect to independent random variables $Z_1 +i Z_2 \sim CN(0,1)$  and $X \sim (1-\rho \delta)\delta_0(|s_{o,i}|) + \rho \delta G(s_{o,i})$. The following theorem establishes the connection between the solution of LASSO and the state evolution equation. 

\begin{thm}\label{thm:dynamicLASSO}
Consider a converging sequence $\{s_o(N), A(N), w(N)\}$. Let $\hat{x}^{\lambda(\tau)}(N)$ be the solution of LASSO. Then, for any pseudo Lipschitz
function $\gamma: \mathds{C}^2 \rightarrow \mathds{R}$ we have
\begin{eqnarray*}
\lefteqn{\lim_{N \rightarrow \infty} \frac{1}{N} \sum_{i=1}^{N} \gamma(\hat{x}_i^{\lambda(\tau)}(N), s_{o,i})} \nonumber \\
&=& \E \gamma\left(\eta\left(X+ \sqrt{\npi^*}Z_1+ i \sqrt{\npi^*}Z_2; \tau \sqrt{\npi^*}\right), X\right)
\end{eqnarray*}
almost surely, where $Z_1 +i Z_2 \sim CN(0,1)$  and $X \sim (1-\rho \delta)\delta_0(|s_{o,i}|) + \rho \delta G(s_{o,i})$ are independent complex random variables. $\npi^* \triangleq \sigma^2+ m^*/\delta$, where $m^*$ is the fixed point of \eqref{eq:stateevo1}. 
\end{thm}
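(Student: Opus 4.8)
The plan is to reduce the statement to the already-established dynamical result for CAMP (Theorem \ref{thm:dynamic message}) together with the characterization of the CAMP fixed point as the LASSO minimizer, exactly mirroring the strategy of \cite{DoMaMoNSPT} and \cite{BaMo11} in the real-valued case. The key point is that the calibration relation \eqref{eq:taulambdacalibration} is precisely what makes the CAMP fixed point coincide with the LASSO solution for the regularization parameter $\lambda(\tau)$. So first I would show that if CAMP (run with threshold policy $\tau_t = \tau\sqrt{\npi^t}$) is iterated to convergence, its limit $\hat{x}^{\rm CAMP}$ is a fixed point of the CAMP iteration, and then verify that any such fixed point satisfies the stationarity (KKT/subgradient) conditions of the c-LASSO cost $\frac12\|y-Ax\|_2^2 + \lambda(\tau)\|x\|_1$. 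The subgradient condition here is the complex one: $A^*(y-A\hat x)_\ell = \lambda\, \hat x_\ell/|\hat x_\ell|$ on the support and $|A^*(y-A\hat x)_\ell|\le\lambda$ off it; the correction (Onsager) term in \eqref{eq:camp} is what converts the residual $z$ at the CAMP fixed point into the correct scaling, and taking the expectation of the derivatives $\partial\eta^R/\partial x + \partial\eta^I/\partial y$ against the effective-noise Gaussian is what produces the factor $1 - \frac{1}{2\delta}\E(\cdots)$ in \eqref{eq:taulambdacalibration}.

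The second block of the argument is to upgrade the "formal" fixed-point identity to an almost-sure statement about the genuine LASSO minimizer $\hat x^{\lambda(\tau)}(N)$ for finite but large $N$. Here I would invoke the complex analogue of the state-evolution-for-LASSO machinery: by Lemma \ref{lem:uniqfix} the SE map $\Psi$ has a unique stable fixed point $m^*$, so the CAMP iterates converge (in the sense of Theorem \ref{thm:dynamic message}, i.e., all pseudo-Lipschitz empirical averages converge) to the $t=\infty$ limit obtained by substituting $m_t \to m^*$. One then needs a uniform (in $N$) control argument — the standard one is a "sandwiching"/continuity-of-the-map argument showing that $\|x^t(N) - \hat x^{\lambda(\tau)}(N)\|_2^2/N \to 0$ as $t\to\infty$ uniformly in $N$ along the converging sequence — so that the CAMP limit and the LASSO minimizer have the same asymptotic empirical distribution. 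Combining this with Theorem \ref{thm:dynamic message} at the fixed point $m^*$ and $\npi^* = \sigma^2 + m^*/\delta$ yields the claimed formula for $\lim_{N\to\infty}\frac1N\sum_i \gamma(\hat x_i^{\lambda(\tau)}(N), s_{o,i})$.

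The main obstacle is exactly this last uniformity step: proving that CAMP converges to the LASSO optimum at a rate independent of $N$. In the real case this is the technically heaviest part of \cite{BaMo11,DoMaMoNSPT}, relying on strong convexity of the LASSO objective on the relevant subspace, on the Onsager-corrected state evolution being contractive near $m^*$, and on concentration of the relevant quantities for i.i.d. Gaussian $A$. In the complex setting the same ingredients are available — $\Psi$ is concave (Lemma \ref{lem:concavity}) with a unique stable fixed point (Lemma \ref{lem:uniqfix}), the complex soft-thresholding operator $\eta$ is $1$-Lipschitz and (weakly) differentiable (Appendix \ref{appsec:prox}), and the complex-Gaussian concentration behaves as in the real case after splitting into real and imaginary parts — so the argument goes through verbatim up to this bookkeeping. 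I would therefore state that the proof is identical to that of Theorem 1.8 / Proposition 3.1 in \cite{DoMaMoNSPT,BaMo11} once $\eta_\circ$ is replaced by the complex $\eta$ and real Gaussians by complex Gaussians, and only indicate the two places where the complex structure enters: the complex subgradient characterization of the LASSO optimum, and the appearance of both partials $\partial\eta^R/\partial x$ and $\partial\eta^I/\partial y$ (rather than a single $\eta'$) in the calibration \eqref{eq:taulambdacalibration}, which is why the scalar Onsager coefficient $\frac1\delta\E\eta'$ of the real theory becomes $\frac{1}{2\delta}\E(\partial\eta^R/\partial x + \partial\eta^I/\partial y)$ here.
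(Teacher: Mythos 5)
Your proposal is correct and follows essentially the same route as the paper, which simply states that the proof mirrors Theorem 1.4 of \cite{BaMo11}: calibrate $\lambda(\tau)$ so that the CAMP fixed point satisfies the complex LASSO stationarity conditions, apply the state-evolution result of Theorem \ref{thm:dynamic message} at the unique stable fixed point $m^*$, and control the gap $\|x^t(N)-\hat{x}^{\lambda(\tau)}(N)\|_2^2/N$ uniformly in $N$. You have also correctly isolated the only two places where the complex structure changes the real-valued argument (the complex subgradient condition and the Onsager coefficient $\frac{1}{2\delta}\E(\partial\eta^R/\partial x+\partial\eta^I/\partial y)$), so nothing further is needed.
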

The proof of the theorem is similar to the proof of Theorem 1.4 in \cite{BaMo11} and hence is skipped here. \\

Note that according to Theorems \ref{thm:dynamic message} and \ref{thm:dynamicLASSO}, SE predicts the dynamic of the AMP algorithm and the solution of LASSO accurately in the asymptotic settings. 

\subsection{Discussion}

\subsubsection{Convergence rate of CAMP}\label{sec:discussconv}
In this section we briefly discuss the convergence rate of the CAMP algorithm. In this respect our results are straightforward extension of the analysis in \cite{MaBa11}. But, for the sake of completeness, we mention a few highlights. Let $\{m_t \}_{t=1}^{\infty}$ be a sequence of MSE generated according to state evolution \eqref{eq:stateevo1} for $X \sim (1- \epsilon)\delta_0(|X|)+ \epsilon G(X)$, $\tau$, and $\sigma^2=0$. The following proposition provides an upper bound on $m_t$ as a function of iteration $t$. 

\begin{thm}\label{prop:convrate}
Let $\{m_t \}_{t=1}^{\infty}$ be a sequence of MSEs generated according to SE. Then
\[
m^t \leq \left( \left. \frac{d\Psi(m)}{dm}\right|_{m=0}\right)^t m_0.
\] 
\end{thm}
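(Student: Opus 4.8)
The plan is to lean on two structural facts about the MSE map that are already in hand: that $\Psi$ vanishes at the origin in the noise-free case, and that $\Psi$ is concave in $m$ (Lemma~\ref{lem:concavity}). Together these produce a linear majorant of $\Psi$ through the origin, after which the geometric bound is a one-line induction.

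\textbf{Step 1 (value at the origin).} With $\sigma=0$ we have $\npi(0,0,\delta)=0$, so in \eqref{eq:stateevo1} the Gaussian perturbations disappear and the threshold $\tau\sqrt{\npi}$ is zero. Since complex soft thresholding with zero threshold is the identity almost everywhere, $\Psi(0)=\E|\eta(X;0)-X|^2=0$.

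\textbf{Step 2 (tangent majorant).} By Lemma~\ref{lem:concavity}, $m\mapsto\Psi(m)$ is concave on $[0,\infty)$ and (a.e.) differentiable, with $\left.\frac{d\Psi}{dm}\right|_{m=0}$ computed in Appendix~\ref{appsec:ptformula}; a right derivative at $0$ would suffice in any event. A concave function lies below its tangent line at $0$, so using Step~1,
\[
\Psi(m)\;\le\;\Psi(0)+\left.\frac{d\Psi}{dm}\right|_{m=0} m\;=\;\left.\frac{d\Psi}{dm}\right|_{m=0}\, m
\qquad\text{for all }m\ge 0 .
\]

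\textbf{Step 3 (induction).} Since $\Psi$ is an expected squared modulus, $m_t=\Psi(m_{t-1})\ge 0$ for every $t\ge 1$, and $m_0=\E|X|^2\ge 0$; thus every iterate lies in the domain $[0,\infty)$. Applying Step~2 at $m=m_t$ gives $m_{t+1}=\Psi(m_t)\le\left.\frac{d\Psi}{dm}\right|_{m=0}\,m_t$, and iterating from $t=0$ yields the claimed $m_t\le\big(\left.\frac{d\Psi}{dm}\right|_{m=0}\big)^t m_0$. I do not expect a genuine obstacle here: the only points requiring care are citing concavity and differentiability of $\Psi$ at $0$ from Lemma~\ref{lem:concavity}/Appendix~\ref{appsec:ptformula} and stating the tangent inequality with the appropriate (possibly one-sided) derivative; everything else is a routine monotone induction. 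As a remark, in Region~I one has $\left.\frac{d\Psi}{dm}\right|_{m=0}<1$ by the characterization used earlier, so the bound in fact forces exponentially fast convergence $m_t\to 0$.
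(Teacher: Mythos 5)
Your proposal is correct and follows essentially the same route as the paper's proof: invoke the concavity of $\Psi$ from Lemma~\ref{lem:concavity} to get the linear majorant $\Psi(m)\le \left.\frac{d\Psi}{dm}\right|_{m=0}m$ and iterate. You are somewhat more careful than the paper in explicitly noting that $\Psi(0)=0$ (which requires the noise-free assumption $\sigma=0$ stated before the theorem) is needed to drop the intercept of the tangent line, but this is a refinement of the same argument, not a different one.
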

\begin{proof}
 Since according to Lemma \ref{lem:concavity} $\Psi(m)$ is concave, we have $\Psi(m) \leq  \left. \frac{d\Psi(m)}{dm}\right|_{m=0} m$. Hence at every iteration, $m$ is attenuated by $\left. \frac{d\Psi(m)}{dm}\right|_{m=0}$. After $t$ iterations we have $m^t \leq \left( \left. \frac{d\Psi(m)}{dm}\right|_{m=0}\right)^t m_0$.
\end{proof}

\vspace{.2cm}

According to Theorem \ref{prop:convrate} the convergence rate of CAMP is linear (in the asymptotic setting).\footnote{If the measurement matrix is not i.i.d. random CAMP does not necessarily converge at this rate. This is due to the fact that state evolution does not necessarily hold for arbitrary matrices.} In fact, due to the concavity of the $\Psi$ function, CAMP converges faster for large values of MSE $m$. As $m$ reaches zero the convergence rate decreases towards the rate predicted by this theorem. Theorem \ref{prop:convrate} provides an upper bound on the number of iterations the algorithm requires to reach to a certain accuracy. Figure \ref{fig:convrate} exhibits the value of $\left. \frac{d\Psi(m)}{dm}\right|_{m=0}$ as a function of $\rho$ and $\delta$. This figure is based on the calculations we have presented in Appendix \ref{appsec:ptformula}. Here, $\tau$ is chosen such that the CAMP algorithm achieves the same phase transition as c-BP algorithm. Note that, according to Proposition \ref{prop:convrate}, if $ \left. \frac{d\Psi(m)}{dm}\right|_{m=0} < 0.9$, then $m_{200}<7.1\times10^{-10} m_0$. 

Theorem \ref{prop:convrate} only considers the noise-free problem. But, again due to the concavity of the $\Psi$ function, the convergence of CAMP to its fixed point is even faster for noisy measurements. To see this, note that once the measurements are noisy, the fixed point of CAMP occurs at a larger value of $m$. Since $\Psi$ is concave, the derivative at this point is lower than the derivative at zero. Hence, convergence will be faster.

  \begin{figure}
\begin{center}
  \includegraphics[width=6.9cm, height = 6.5cm]{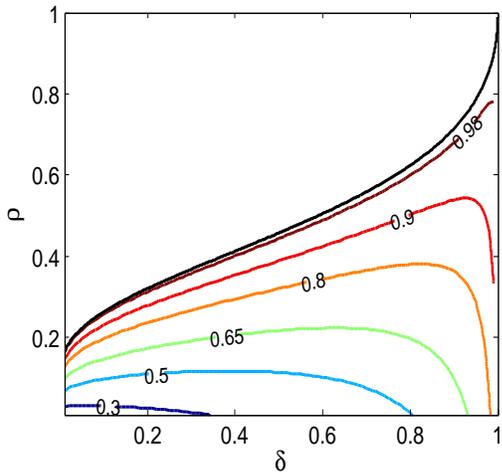}\\
  \caption{Contour lines of $ \left. \frac{d\Psi(m)}{dm}\right|_{m=0}$ as a function of $\rho$ and $\delta$. The parameter $\tau$ in the SE is set according to Theorem \ref{thm:phasetranscurve}. The black curve is the phase transition of CAMP. The colored lines
  display the level sets of  $  \left. \frac{d\Psi(m)}{dm}\right|_{m=0}= 0.3, 0.5, 0.65, 0.8, 0.9, 0.98$. Note that according to Proposition \ref{prop:convrate}, if $ \left. \frac{d\Psi(m)}{dm}\right|_{m=0} < 0.9$, then $m_{200}<7.1\times10^{-10} m_0$. }
   \label{fig:convrate}
  \end{center}
\end{figure}

\subsubsection{Extensions}
The results presented in this paper are concerned with the two most popular problems in compressed sensing, i.e., exact recovery of sparse signals and approximate recovery of sparse signals in the presence of noise. However, our framework is far more powerful and can address other compressed sensing problems as well. For instance a similar framework has been used to address the problem of recovering approximately sparse signals in the presence of noise \cite{DoJoMaMo11}. For the sake of brevity we have not provided such an analysis in the current paper.  However, the properties we proved in Lemmas \ref{lem:phaseindep}, \ref{lem:softthreshincconcave}, and Proposition \ref{prop:minimaxsoft} enable a straightforward extension of our analysis to such cases as well. 

Furthermore, the framework we developed here provides a way for recovering sparse complex-valued signals when the distribution of non-zero elements is known. This area has been studied in \cite{Schniter10, ViSc12}.

\section{Simulations}\label{sec:simulation}
As explained in Section \ref{sec:connection}, our theoretical results show that, if the elements of the matrix are i.i.d. Gaussian, then SE
predicts the performance of the CAMP and c-LASSO algorithms accurately. However, in this section we will show evidence that suggests the theoretical
framework is applicable to a wider class of measurement matrices. We then investigate the dependence of the empirical phase transition on the input distribution for medium
problem sizes.

\subsection{Measurement matrix simulations} \label{sec:setup}
We investigate the effect of the measurement matrix distribution on the performance of CAMP and c-LASSO in two different cases. First, we consider the case
where the measurements are noise-free. We postpone a discussion of measurement noise to Section \ref{sec:matuniv_noisy}.
\begin{figure}\vspace{-.1cm}
\begin{center}
\subfigure{  \includegraphics[width=6cm, height =
6.2cm]{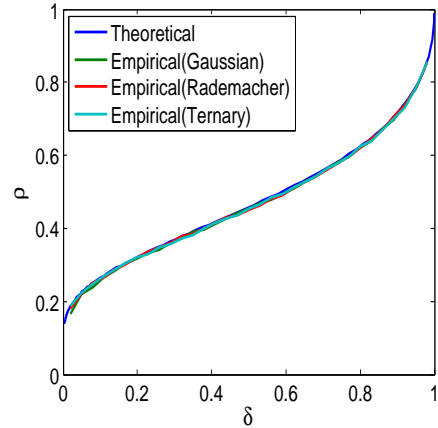}}
\end{center}
\begin{center}
 \subfigure{  \includegraphics[width=6cm,
height=6cm]{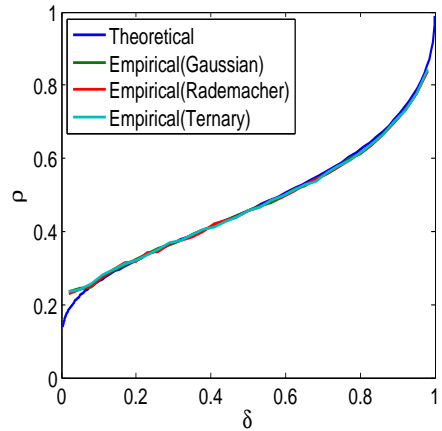}} \vspace{0cm}
\end{center}
\caption{Comparison of $\rho_{SE}(\delta)$ with the empirical phase
transition of c-LASSO \cite{YaZh11} (top) and CAMP (bottom). There
is a close match between the theoretical prediction and the
empirical results from Monte Carlo simulations.}
\label{fig:pt_matuniv}
\end{figure}

 \begin{table}[ht]
\caption{Ensembles considered for the measurement matrix $A$ in the matrix universality ensemble experiments.} 
\vspace{2mm}
\centering 
\begin{tabular}{|l|l|}
  \hline
  Name & Specification \\
  \hline
  Gaussian  & i.i.d. elements with $CN(0,1/n)$ \\
  \hline
  Rademacher  & i.i.d. elements with real and imaginary parts distributed \\ 
  &according to $\frac{1}{2}\delta_{-\sqrt{\frac{1}{2n}}}(x)+\frac{1}{2} \delta_{\sqrt{\frac{1}{2n}}}(x)$ \\
   \hline
 Ternary   & i.i.d. elements with real and imaginary parts distributed \\ 
 &according to $\frac{1}{3}\delta_{-\sqrt{\frac{3}{2n}}}(x)+\frac{1}{3}\delta_0(x) +\frac{1}{3} \delta_{\sqrt{\frac{3}{2n}}}(x)$\\
   \hline
   \end{tabular}
\label{table:matrixfdist}
\end{table}

\subsubsection{Noise-free measurements}
Suppose that the measurements are noise-free. Our goal is to empirically measure the phase transition curves of the c-LASSO and CAMP
on the measurement matrices provided in Table \ref{table:matrixfdist}.
To characterize the phase transition of an algorithm, we do the following:
\begin{itemize}
\item[-] We consider 33 equispaced values of $\delta$ between $0$ and $1$.
\item[-] For each value of $\delta$, we calculate $\rho_{SE}(\delta)$ from the theoretical framework and then consider 41 equispaced values of
$\rho$ in $[\rho_{SE}(\delta)-0.2, \rho_{SE}(\delta)+0.2]$.
\item[-] We fix $N=1000$, and for any value of $\rho$ and $\delta$, we calculate $n =\lfloor \delta N \rfloor$ and $k = \lfloor  \rho \delta N \rfloor$.
\item[-] We draw $M=20$ independent random matrices from one of the distributions described in Table \ref{table:matrixfdist} and for each matrix we construct a random input vector $s_o$ with one of the 
distributions described in Table \ref{table:coeffdist}. We then form $y = As_o$ and recover $s_o$ from
$y,A$ by either c-BP or CAMP to obtain
$\hat{x}$. The matrix distributions and coefficient distributions we
consider in our simulations are specified in Tables
\ref{table:matrixfdist} and \ref{table:coeffdist}, respectively.
\item[-] For each $\delta$, $\rho$, and Monte Carlo sample $j$ we define a success variable $S_{\delta, \rho, j}= \mathds{I} \left(\frac{\| \hat{x}-x\|_2 }{\|x\|_2} < {\rm tol}\right)$ and we calculate
the success probability $\hat{p}^{S}_{\delta, \rho} = \frac{1}{M} \sum_j S_{\delta, \rho, j}$. This provides an empirical estimate of the probability of correct recovery. The value of tol in our case is set to $10^{-4}$.
\item[-] For a fixed value of $\delta$, we fit a logistic regression function to $\hat{p}^{S}(\delta, \rho)$ to obtain $p_{\delta}^S(\rho)$. Then we find the value of $\hat{\rho}_{\delta}$
for which $p_{\delta}^S(\rho) = 0.5$.
\end{itemize}
See \cite{MaDo09sp} for a more detailed discussion of this approach.
For the c-LASSO algorithm, we are reproducing the experiments of
\cite{YaZh11, yang2012phase} and, therefore, we are using one-L1 algorithm \cite{YaZh11}.
Although Figure \ref{fig:convrate} confirms that for most cases even $200$
iterations of CAMP are enough to reach convergence, since our goal is
to measure the phase transition, we consider $3000$ iterations. See Section \ref{sec:discussconv} for the discussion on the convergence rate.

Figure \ref{fig:pt_matuniv} compares the phase transition of c-LASSO and CAMP on the ensembles specified in Table \ref{table:matrixfdist} with the theoretical prediction of this paper.
In this simulation the coefficient ensemble is UP (see Table \ref{table:coeffdist}). Clearly, the empirical and theoretical phase transitions of the algorithms coincide. More importantly, we can conjecture that the choice of the measurement matrix ensemble does not affect the phase transition of these two algorithms.  We will next discuss the impact of measurement matrix when there is noise on the measurements. 
\begin{table}[ht]
\caption{Coefficient ensembles considered in coefficient ensemble experiments.} 
\vspace{2mm}
\centering 
\begin{tabular}{|l|l|}
  \hline
  Name & Specification \\
  \hline
  UP  & i.i.d. elements with amplitude $1$ and uniform phase\\
  \hline
  ZP  & i.i.d. elements with amplitude $1$ and phase zero \\
   \hline
  GA  & i.i.d. elements with standard normal real and imaginary parts \\
  \hline
  UF & i.i.d. elements with $U[0,1]$ real and imaginary parts\\
  \hline
   \end{tabular}
\label{table:coeffdist}
\end{table}

\subsubsection{Noisy measurements}\label{sec:matuniv_noisy}
In this section we aim to show that, even in the presence of noise, the matrix ensembles defined in Table \ref{table:matrixfdist} perform similarly.
 Here is the setup for our experiment:
\begin{itemize}
\item[-]  We set $\delta = 0.25$, $\rho = 0.1$, and $N =1000$.
\item[-] We choose $50$ different values of $\sigma$ in the range [0.001, 0.1].
\item [-] We choose $n \times N$ measurement matrix $A$ from one of the ensembles specified in Table \ref{table:matrixfdist}.
\item[-] We draw $k$ i.i.d. elements from UP ensemble for the $k= \lfloor \rho n \rfloor$ non-zero elements of the input $s_o$.
\item[-] We form the measurement vector $y = As_o + \sigma w$ where $w$ is the noise vector with i.i.d. elements from $CN(0,1)$.
\item[-] For CAMP, we set $\tau =2$. For c-LASSO, we use \eqref{eq:taulambdacalibration} to derive the corresponding values of $\lambda$ for $\tau =2$ in CAMP.
\item[-] We calculate the MSE $\| \hat{x}-s_o\|_2^2/N$ for each matrix ensemble and compare the results.
\end{itemize}

Figures \ref{fig:matuniv_noisylasso} and \ref{fig:matuniv_noisyCAMP}
summarize our results. The concentration of the points along the
$y=x$ line indicates that the matrix ensembles, specified in Table \ref{table:matrixfdist}, perform
similarly. The coincidence of the phase transition curves for different matrix ensembles is known as {\em universality hypothesis (conjecture)}. In order to provide a stronger evidence, we run the above
experiment with $N =4000$. The
results of this experiment are exhibited in Figures
\ref{fig:matuniv_noisylasso2} and \ref{fig:matuniv_noisyCAMP2}. It
is clear from these figures that the MSE is now more concentrated
around the $y=x$ line. Additional experiments with other parameter values exhibited the same behavior. Note that as $N$ grows, the variance of the MSE estimate becomes smaller, and the behavior of the algorithm is closer to the average performance that is predicted by the SE equation.

\begin{figure}
\begin{center}
\subfigure{  \includegraphics[width=5cm, height =5cm]{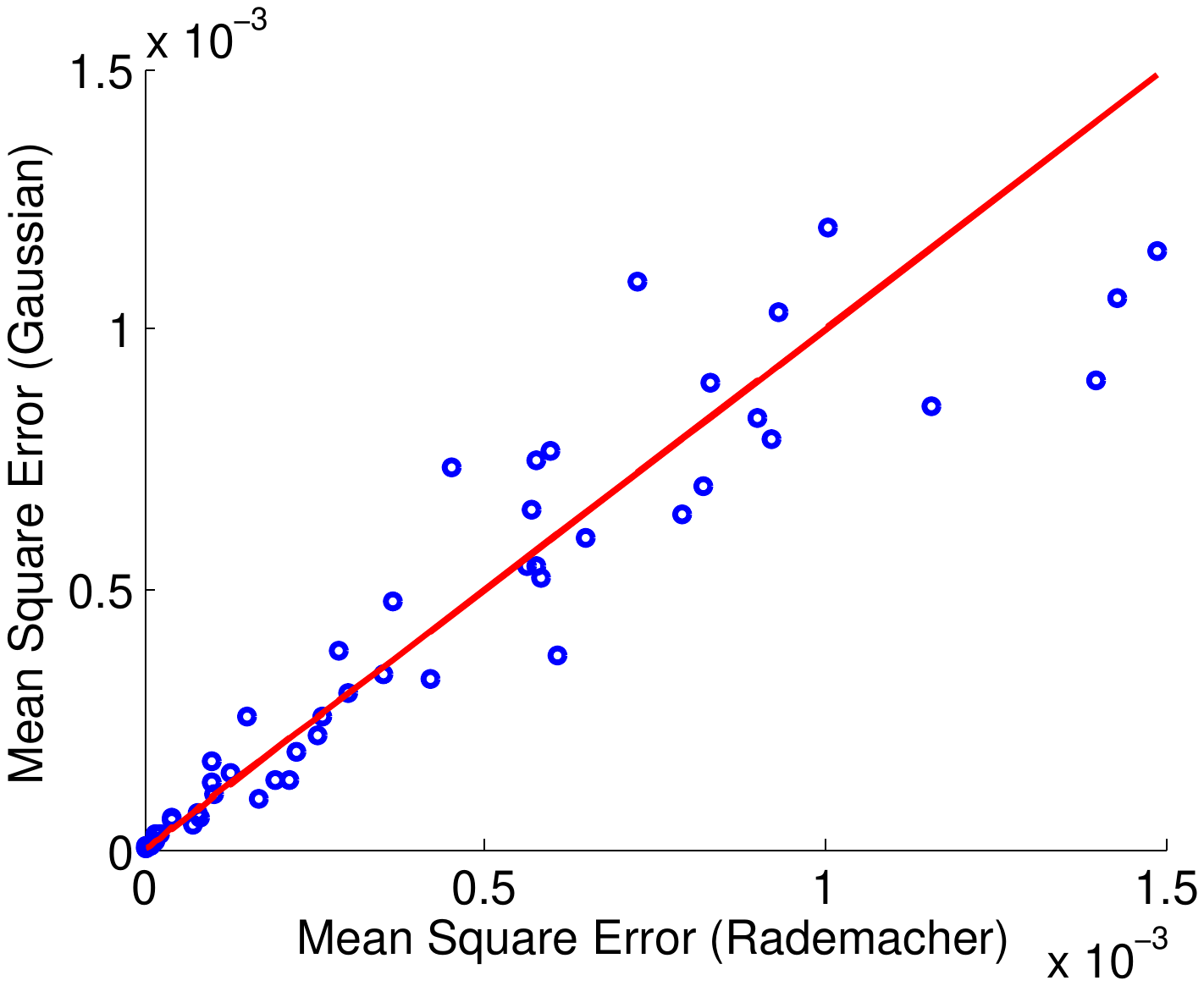}}
\end{center}
\begin{center}
\subfigure{  \includegraphics[width=5cm, height= 5cm]{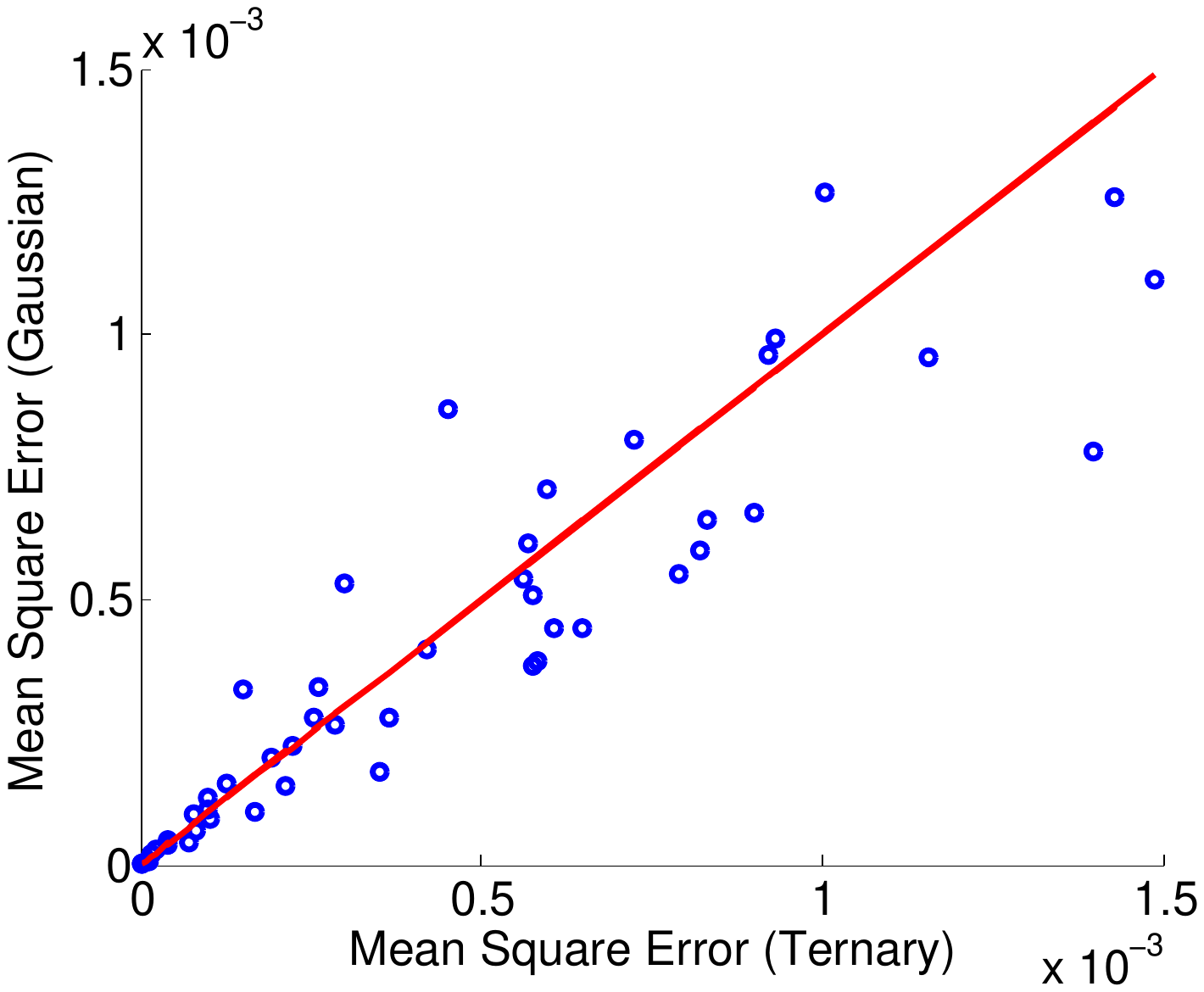}}
\end{center}
\caption{Comparison of the means square error of c-LASSO for Gaussian and
Rademacher matrix ensembles (top), and Gaussian and Ternary ensemble (bottom). The concentration of points around the $y=x$ confirms the universality hypothesis. The norms of residuals are equal to $5.9\times 10^{-4}$ and $6 \times  10^{-4}$ for the top and bottom figures, respectively. Comparison of this figure with Figure \ref{fig:matuniv_noisylasso2} confirms that as $N$ grows the points become more concentrated around $y=x$ line. } \label{fig:matuniv_noisylasso}
\end{figure}

\begin{figure}
\begin{center}
\subfigure{  \includegraphics[width=5.5cm, height = 5.5cm]{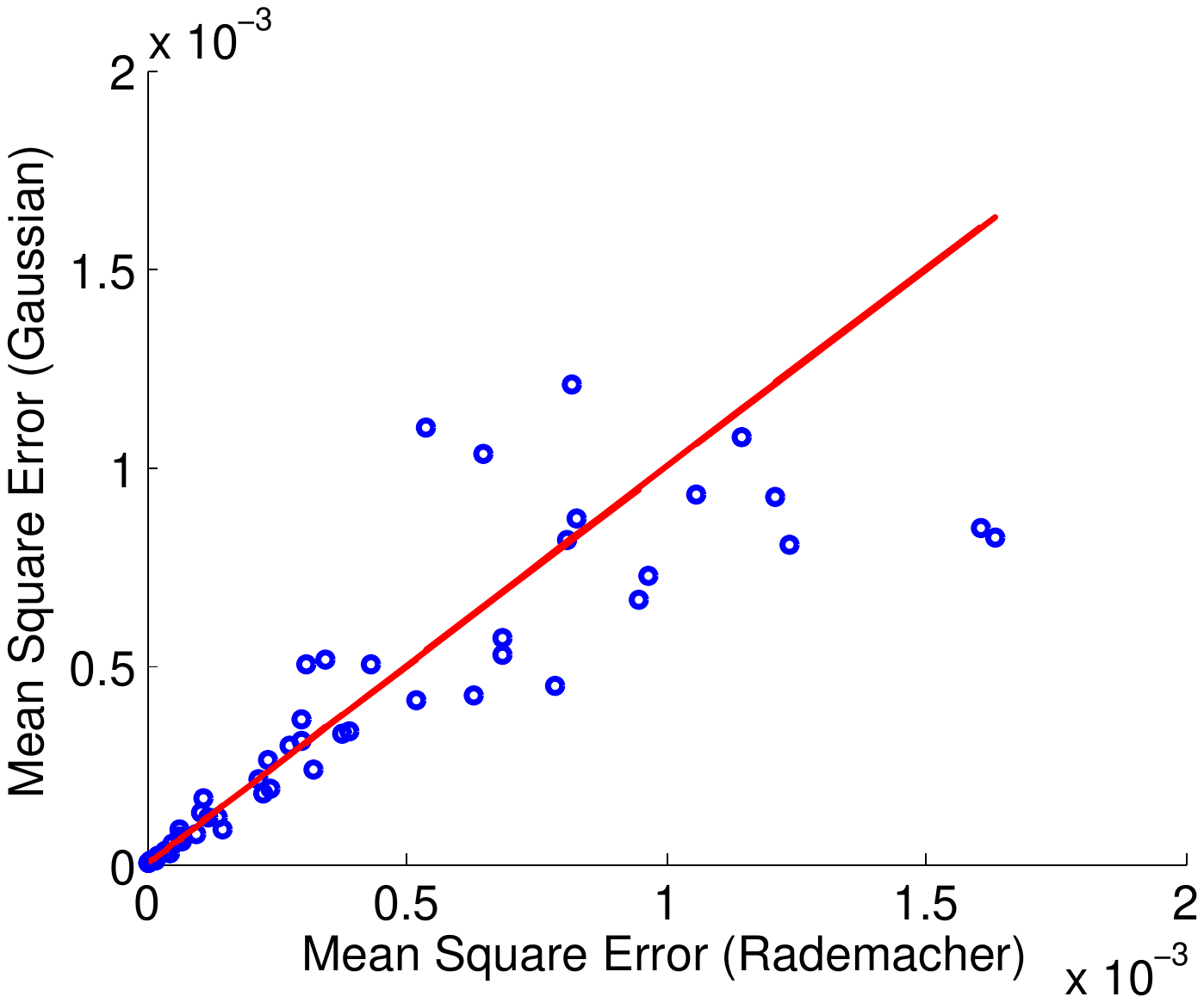}}
\end{center}
\begin{center}
\subfigure{  \includegraphics[width=5.5cm, height=5.5cm]{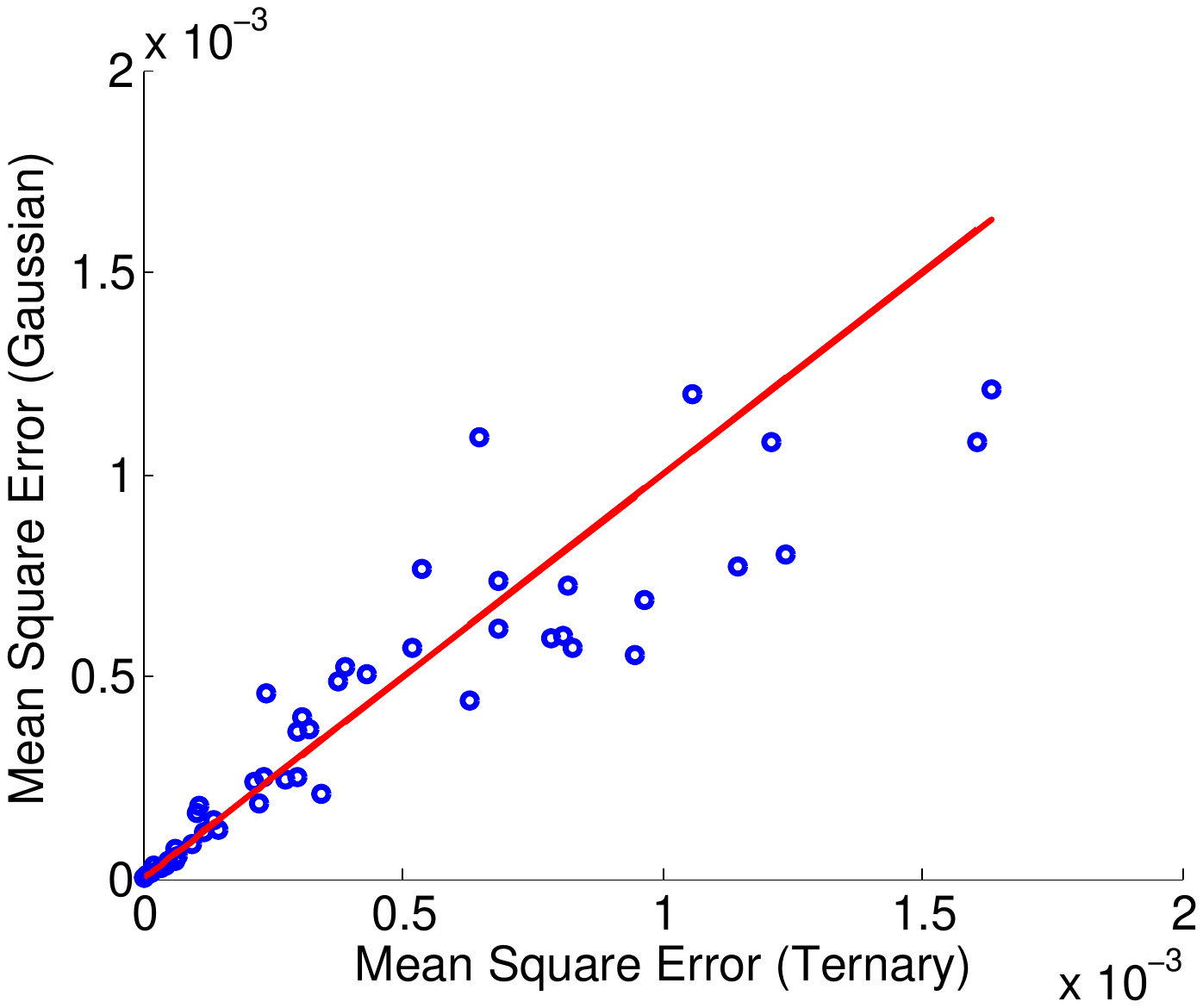}}
\end{center}
\caption{Comparison of the MSE of CAMP for Gaussian and
Rademacher matrix ensembles (top), and Gaussian and Ternary ensemble (bottom). The concentration of points around the $y = x$ line confirms the universality hypothesis. The norms of residuals are equal to $9.1\times 10^{-4}$ and $9.4 \times  10^{-4}$ for the top and bottom figures, respectively. Comparison of this figure with Figure \ref{fig:matuniv_noisyCAMP2} confirms that as $N$ grows the points become more concentrated around $y=x$ line.} \label{fig:matuniv_noisyCAMP}
\end{figure}

\begin{figure}
\begin{center}
\subfigure{  \includegraphics[width=5.5cm, height = 5.5cm]{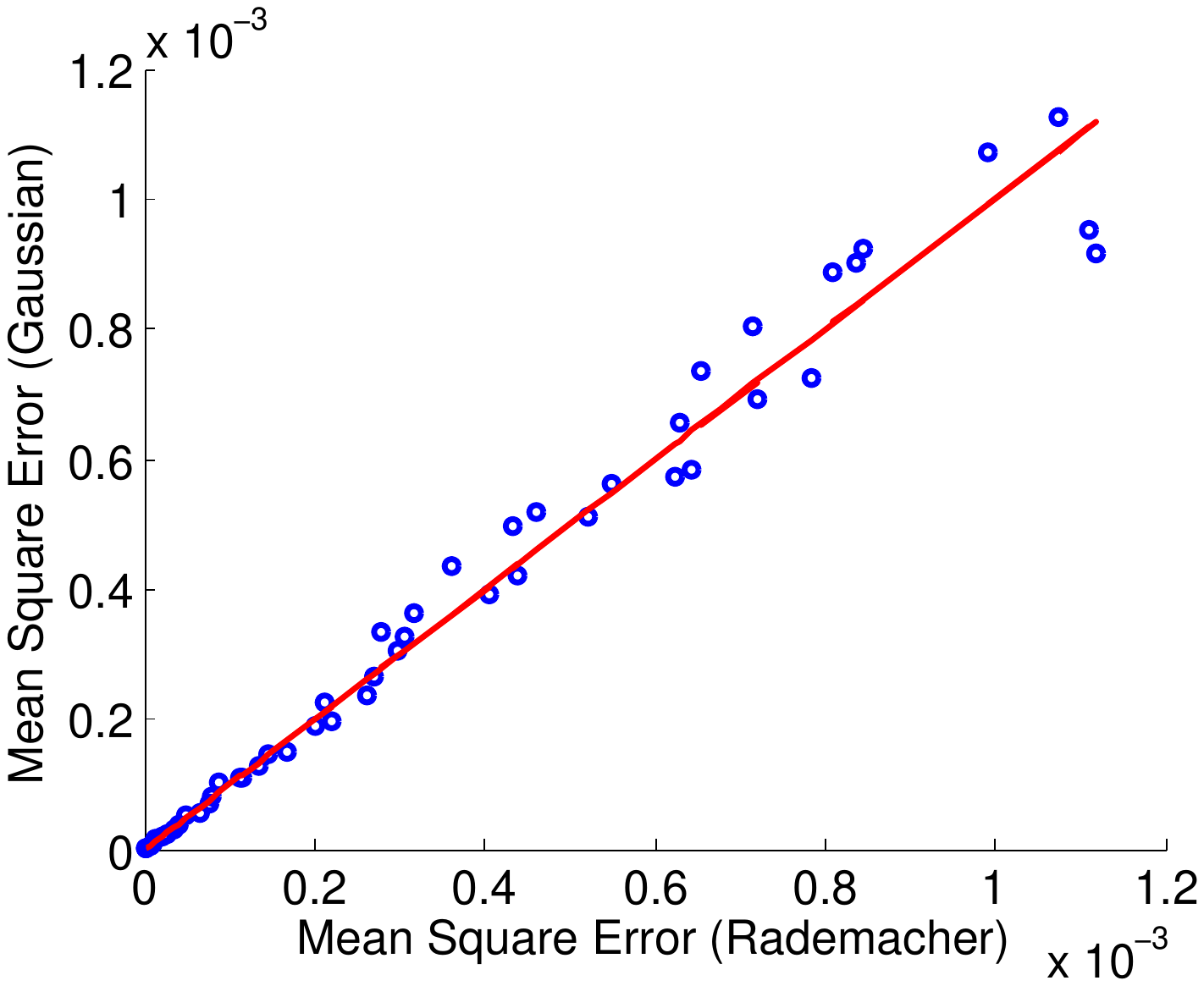}}
\end{center}
\begin{center}
\subfigure{  \includegraphics[width=5.5cm, height= 5.5cm]{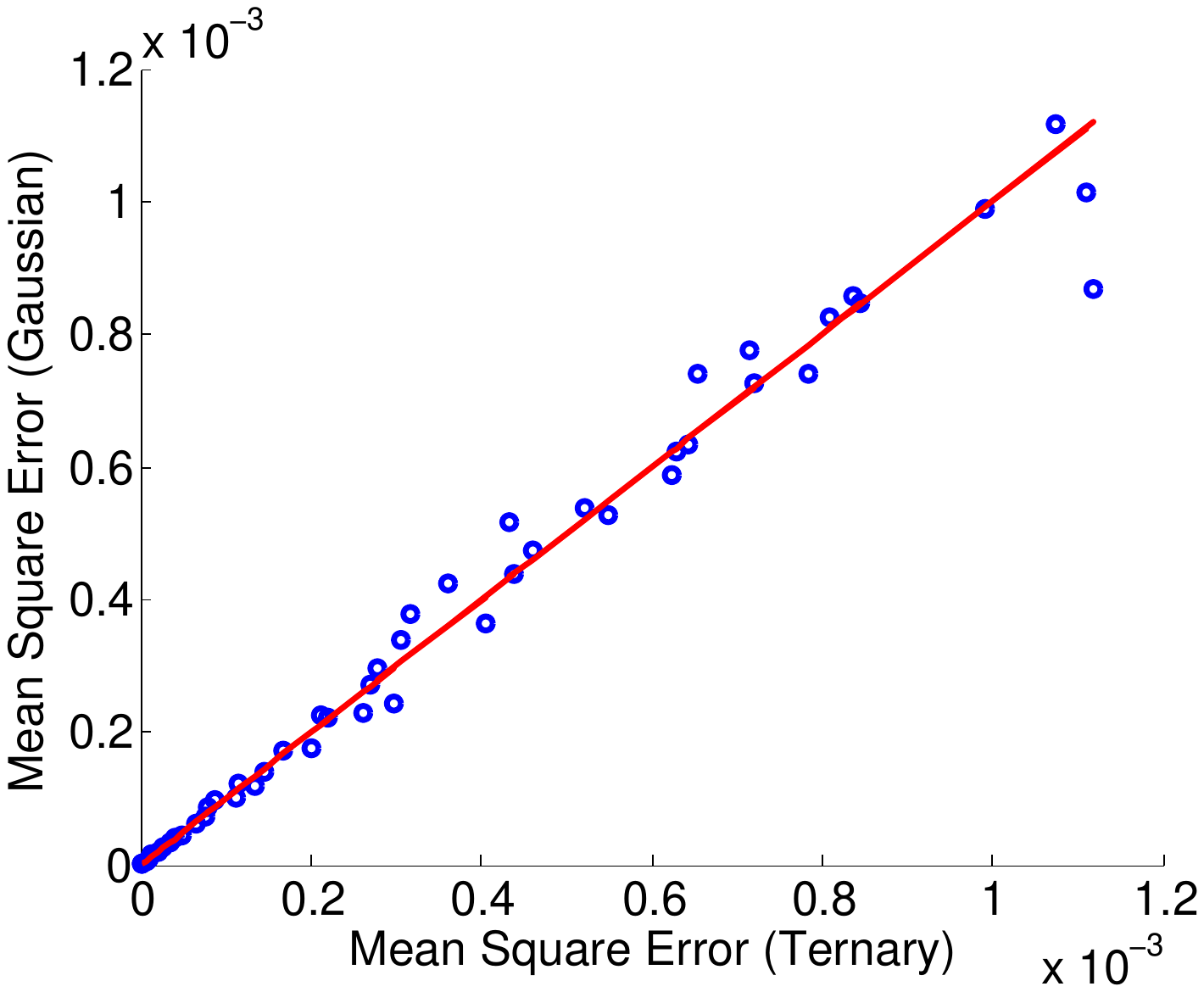}}
\end{center}
\caption{Comparison of the MSE of c-LASSO for Gaussian and
Rademacher matrix ensembles (top), and Gaussian and Ternary ensemble (bottom). The concentration of the points around the $y=x$ line confirms the universality hypothesis. The norms of residuals are $2.8\times10^{-4}$ and $2.3\times10^{-4}$ for the top and bottom figures respectively. Comparison of this figure with Figure \ref{fig:matuniv_noisylasso} confirms that as $N$ grows the data points concentrate more around $y=x$ line.}
\label{fig:matuniv_noisylasso2}
\end{figure}

\begin{figure}
\begin{center}
\subfigure{  \includegraphics[width=5.7cm, height = 5.7cm]{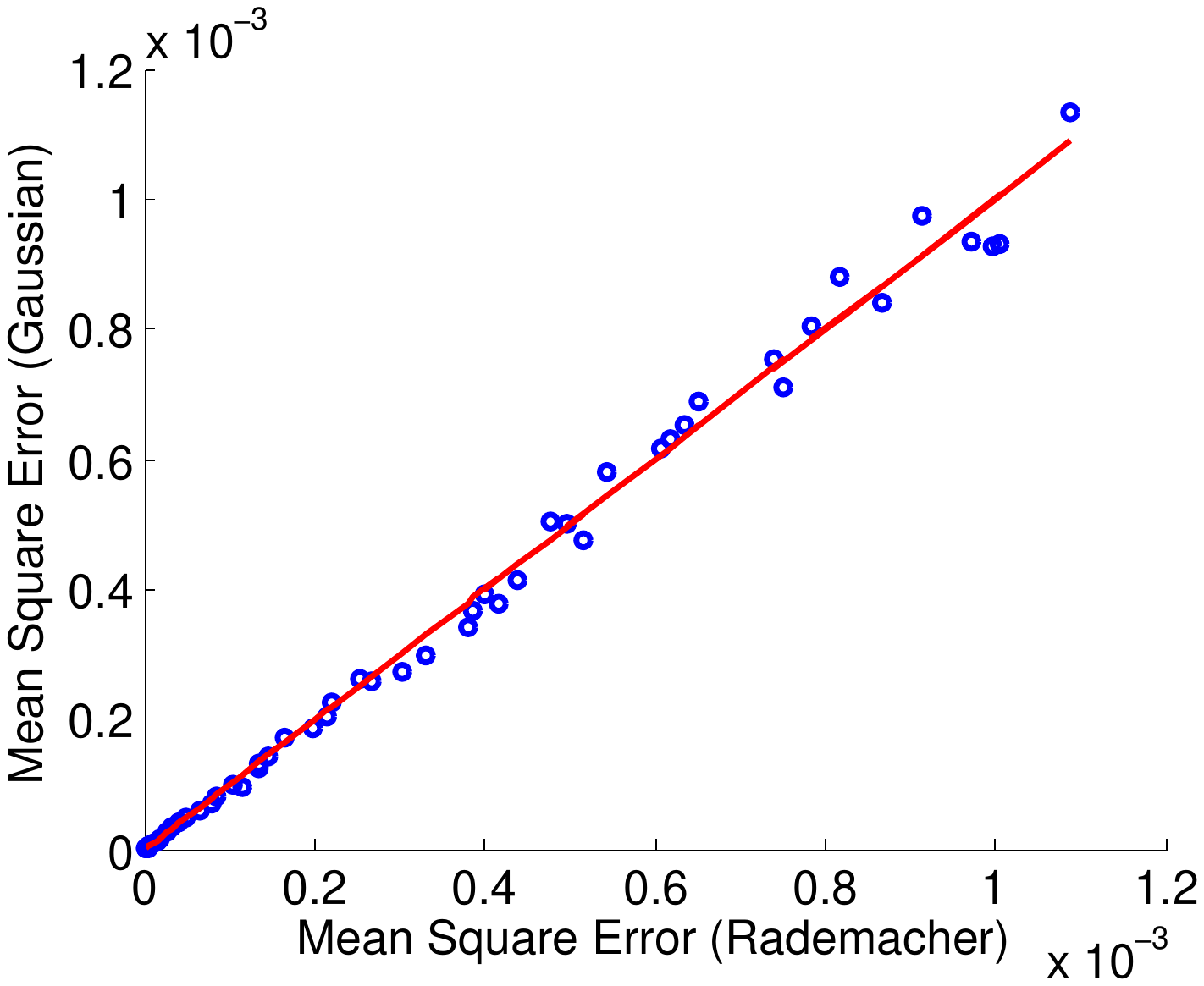}}
\end{center}
\begin{center}
\subfigure{  \includegraphics[width=5.7cm, height= 5.7cm]{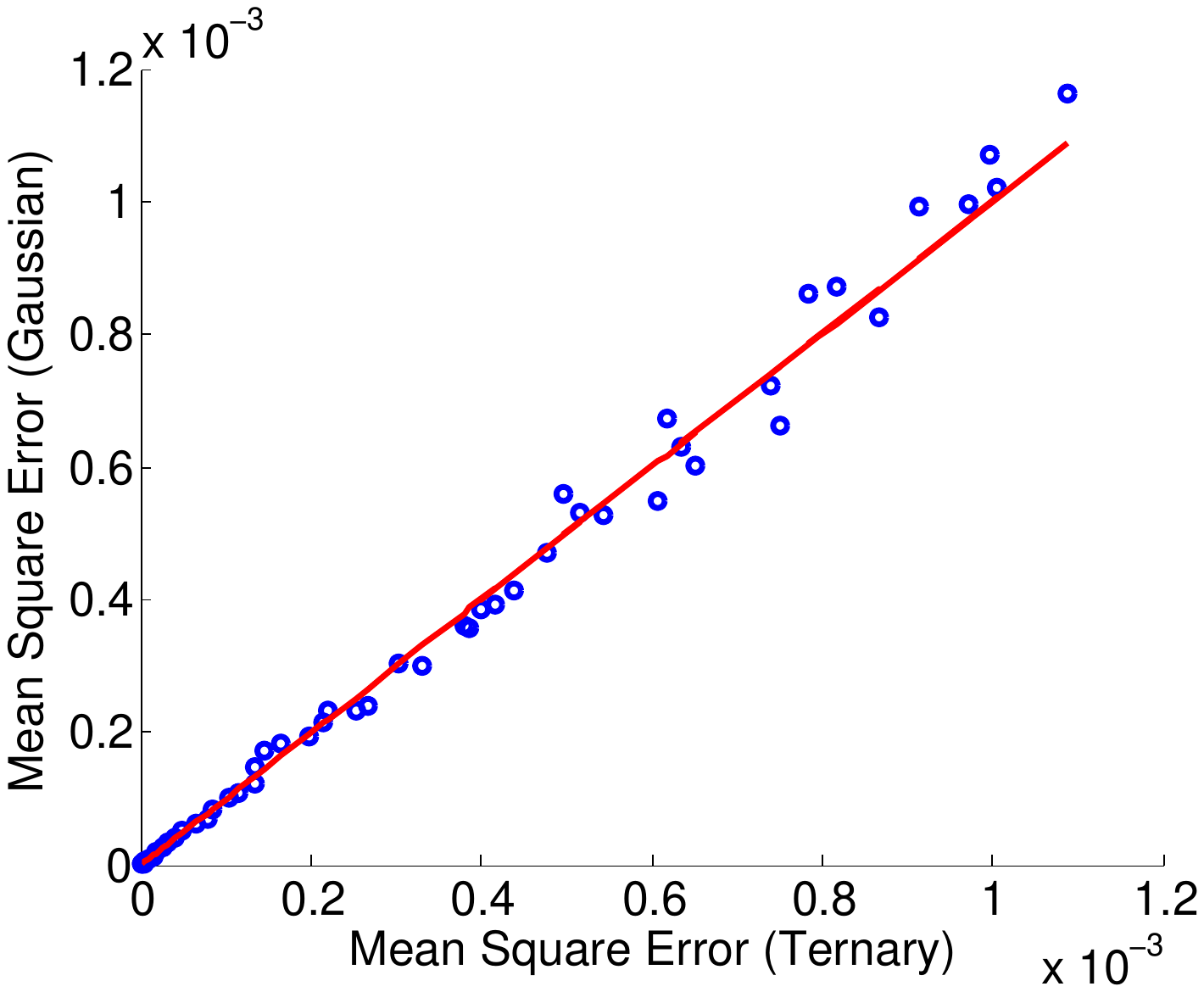}}
\end{center}

\caption{Comparison of the MSE of CAMP for Gaussian and
Rademacher matrix ensembles (top), and Gaussian and Ternary ensemble (bottom). The norms of residuals are $2\times10^{-4}$ and $1.8\times10^{-4}$, respectively. Comparison with Figure \ref{fig:matuniv_noisyCAMP} confirms that as $N$ grows the data points concentrate more around $y=x$ line.}
\label{fig:matuniv_noisyCAMP2}
\end{figure}

\subsection{Coefficient ensemble simulations}\label{sec:coeffuniv}
According to Proposition \ref{prop:distributionindep},
$\rho_{SE}(\delta, \tau)$ is independent of the distribution $G$ of
non-zero coefficients of $s_0$. We test the accuracy of this result
on medium problem sizes. We fix $\delta$ to $0.1$ and we calculate
$\hat{p}^{S}_{\delta, \rho}$ for $60$ equispaced values of $\rho$
between $0.1$ and $0.5$. For each algorithm and each value of $\rho$
we run $100$ Monte Carlo trials and calculate the success rate for
the Gaussian matrix and the coefficient ensembles specified in Table
\ref{table:coeffdist}. Figure \ref{fig:coeffuniv} summarizes our
result. Simulations at other values of $\delta$ result in very
similar behavior. These results are consistent with Proposition
\ref{prop:distributionindep}. The small differences between the empirical phase transitions are due to two issues that
are not reflected in Proposition \ref{prop:distributionindep}: (i) $N$ is finite, while Proposition \ref{prop:distributionindep} considers the asymptotic setting. (ii) The number of algorithm iterations is finite, while  Proposition \ref{prop:distributionindep} assumes that we run CAMP for an infinite number of iterations.

\begin{figure}
\begin{center}
\subfigure{  \includegraphics[width=7cm, height = 6.2cm]{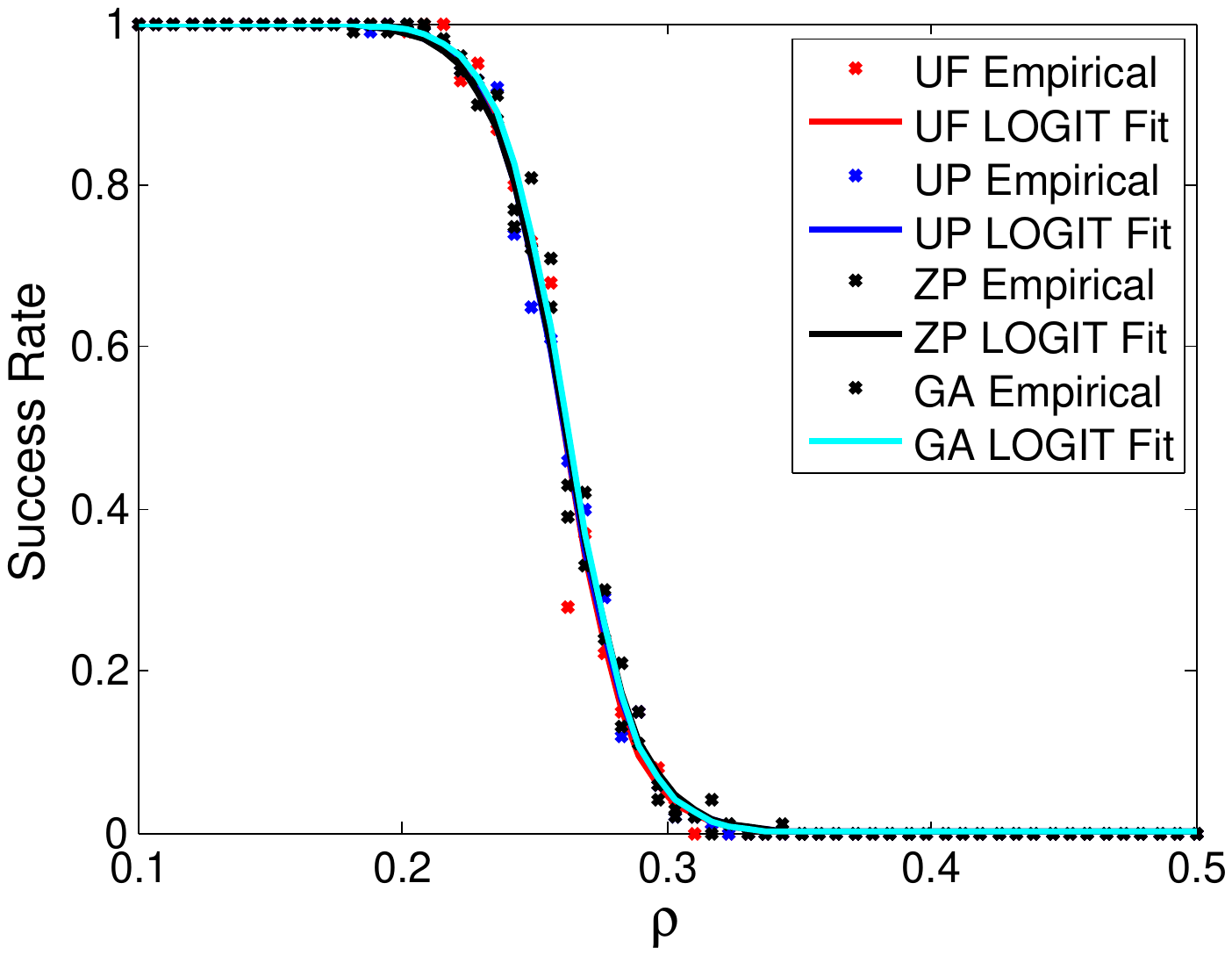}}
\end{center}
\begin{center}
\subfigure{  \includegraphics[width=7cm, height= 6.2cm]{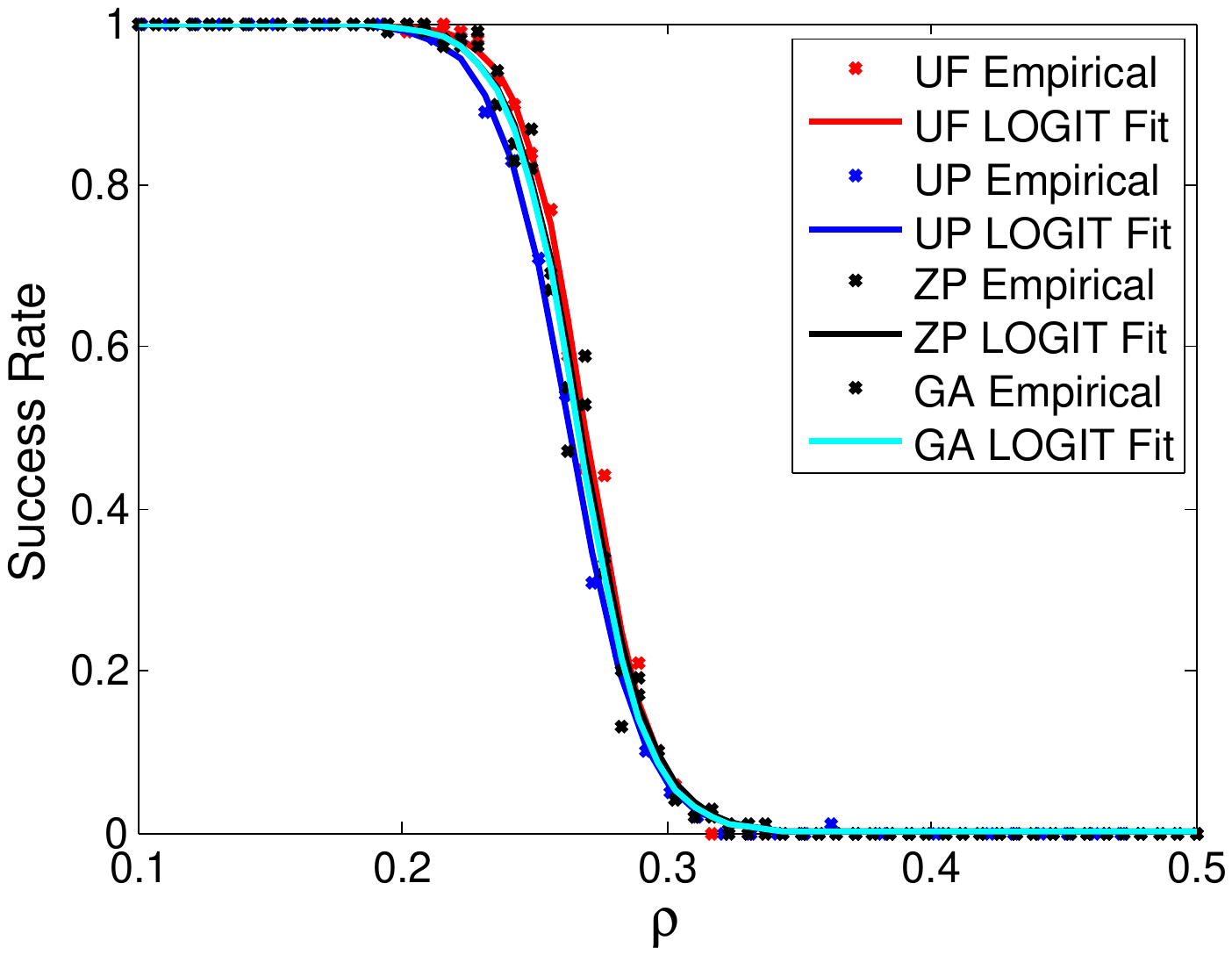}}
\end{center}
\caption{Comparison of the phase transition of c-LASSO (top) and
CAMP (bottom) for different coefficient ensembles specified in Table
\ref{table:coeffdist}. $\delta =0.1$ in this figure. These figures are in agreement with Proposition \ref{prop:distributionindep} that claims the phase transition of CAMP and c-LASSO are independent of the distribution of the non-zero coefficients. Simulations at other values of $\delta$ result in similar behavior.}
\label{fig:coeffuniv}
\end{figure}

\section{Proofs of the main results}
\subsection{Proximity operator}\label{appsec:prox}
For a given convex function $f: \mathds{C}^n\rightarrow \mathds{R}$ the proximity operator at point $x$ is defined as
\begin{equation}\label{eqn:proximity}
{\rm Prox}_{f}(x)\triangleq \arg \min_{y \in\mathds{C}^n} \frac{1}{2} \| y-x\|_2^2 + f(y).
\end{equation}
The proximity operator plays an important role in optimization
theory. For further information refer to \cite{CoWa05} or Chapter 7 of
\cite{MalekiThesis}. The following lemma characterizes the proximity
operator for the complex $\ell_1$-norm. This proximity operator has
been used in several other papers \cite{BeFr08, WrNoFi08, FiNoWr07,
Boyd,YaZh11}.
\begin{lemma}
Let $f$ denote the complex $\ell_1$-norm function, i.e., $f(x) = \sum_i \sqrt{(x_i^R)^2 + (x_i^I)^2}$. Then the proximity
operator is given by
\[
{\rm Prox}_{\tau f} (x) = \eta(x; \tau),
\]
where  $\eta (u+ iv; \tau) = \left(u+iv- \frac{\tau(u+iv)}{\sqrt{u^2+v^2}}\right)_+\mathbb{1}_{\{u^2+v^2> \tau^2\}}$ is applied component-wise
to the vector $x$.
\end{lemma}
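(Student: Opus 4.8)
The plan is to exploit the separability of the objective in \eqref{eqn:proximity} over coordinates: since both $\frac12\|y-x\|_2^2=\sum_i\frac12|y_i-x_i|^2$ and $\tau f(y)=\sum_i\tau|y_i|$ split as sums over the index $i$, it suffices to solve the scalar problem $\min_{y\in\mathds{C}}\,\frac12|y-x|^2+\tau|y|$ for a single $x\in\mathds{C}$ and to show that its minimizer equals $\eta(x;\tau)$. Write $g(y)\triangleq\frac12|y-x|^2+\tau|y|$; this is a strictly convex, coercive function on $\mathds{C}\cong\reals^2$, hence has a unique minimizer $y^\star$.

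The first step is to reduce to a one-dimensional problem by phase alignment. If $x=0$ then $g(y)=\frac12|y|^2+\tau|y|\ge 0$ with equality only at $y=0$, so $y^\star=0=\eta(0;\tau)$. If $x\neq 0$, write $x=|x|e^{i\psi}$ and $y=re^{i\phi}$ with $r\ge 0$; then $|y-x|^2=r^2+|x|^2-2r|x|\cos(\phi-\psi)$, which for fixed $r$ is minimized by $\phi=\psi$, while $|y|=r$ does not depend on $\phi$. Hence $y^\star=r^\star e^{i\psi}$ where $r^\star$ minimizes $h(r)\triangleq\frac12(r-|x|)^2+\tau r$ over $r\ge 0$. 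Since $h$ is convex with $h'(r)=r-|x|+\tau$, we get $r^\star=(|x|-\tau)_+$, and therefore $y^\star=(|x|-\tau)_+\,x/|x|$, with the convention that this is $0$ when $|x|\le\tau$.

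The last step is to match this with the stated formula for $\eta$. Writing $x=u+iv$, one has $u+iv-\tau(u+iv)/\sqrt{u^2+v^2}=(u+iv)\big(\sqrt{u^2+v^2}-\tau\big)/\sqrt{u^2+v^2}$, which has the phase of $x$ and modulus $\big|\sqrt{u^2+v^2}-\tau\big|$; multiplied by the indicator $\1_{\{u^2+v^2>\tau^2\}}$ this is exactly $(|x|-\tau)_+\,x/|x|$. Applying the scalar result componentwise yields ${\rm Prox}_{\tau f}(x)=\eta(x;\tau)$. (Equivalently, one may argue directly from the optimality condition $0\in y^\star-x+\tau\,\partial(|\cdot|)(y^\star)$, where $\partial(|\cdot|)(y)=\{y/|y|\}$ for $y\neq 0$ and $\partial(|\cdot|)(0)=\{z\in\mathds{C}:|z|\le 1\}$; solving the inclusion recovers the same formula.)

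There is no genuine obstacle here; the computation is routine. The only points deserving a little care are the degenerate case $x=0$, the correct reading of the $(\cdot)_+$ and indicator notation in the definition of $\eta$, and the observation that strict convexity makes the phase-alignment reduction lossless (so that restricting to $y$ with the phase of $x$ does not miss the minimizer).
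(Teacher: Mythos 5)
Your proof is correct, and it reaches the scalar soft-thresholding formula by a noticeably different route than the paper. Both arguments begin by decoupling the objective over coordinates, but the paper then works in Cartesian coordinates on the scalar problem: it assumes the minimizer $y_*$ is nonzero, writes the two stationarity equations for $y_*^R$ and $y_*^I$, deduces the phase-alignment identity $y_*^R x^I = x^R y_*^I$ from them, substitutes back to get the closed form, and finally rules out $|x|<\tau$ by a sign contradiction plus a separate subgradient check at the origin. You instead pass to polar coordinates and minimize over the phase \emph{exactly} (not via first-order conditions), which collapses the problem to the one-dimensional real minimization of $h(r)=\frac12(r-|x|)^2+\tau r$ over $r\ge 0$; the threshold then falls out of $r^\star=(|x|-\tau)_+$ with no case analysis on differentiability. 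Your route is arguably cleaner: it handles the nonsmooth point $y=0$ and the degenerate input $x=0$ uniformly, whereas the paper's argument has to patch the origin case separately, and its derivation of $y_*^R=x^R-\tau|x^R|/\sqrt{(x^R)^2+(x^I)^2}$ is stated with absolute values that only match the intended formula after the sign discussion. The paper's Cartesian computation, on the other hand, produces along the way the explicit partial-derivative expressions for $\eta^R$ and $\eta^I$ that are reused later in the state-evolution calculations, which is a side benefit your polar argument does not provide. Either way the lemma is established; your parenthetical remark that the subgradient inclusion $0\in y^\star-x+\tau\,\partial(|\cdot|)(y^\star)$ gives a third, equally short derivation is also accurate.
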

\begin{proof}
Since \eqref{eqn:proximity} can be decoupled into the elements of the $x,y$, we can obtain the optimal value of $y$, by optimizing over its individual components. In other words, we solve the optimization in \eqref{eqn:proximity} for $x,y \in \mathds{C}$. In this case the optimization reduces to 
\[
{\rm Prox}_{\tau f} (x) = \arg \min_{y \in \mathds{C}} \frac{1}{2}|y-x|^2+ \tau |y|.
\]
Suppose that the optimal $y_*$ satisfies $(y_*^R)^2+ (y_*^I)^2 >0$. Then the function $\sqrt{(y^R)^2+ (y^I)^2}$ is differentiable and the optimal solution satisfies
\begin{eqnarray} \label{eq:prox2}
x^R-y_*^R  & =& \frac{\tau y_*^R}{\sqrt{(y_*^R)^2+ (y_*^I)^2}} , \nonumber \\
 x^I -y_*^I &= &\frac{\tau y_*^I}{ \sqrt{(y_*^R)^2+ (y_*^I)^2}}.
\end{eqnarray}
Combining the two equations in \ref{eq:prox2} we obtain $y_*^R x^I = x^R y_*^I$. Replacing this in \eqref{eq:prox2} we have $y_*^R = x^R - \frac{\tau |x^R|}{\sqrt{(x^R)^2+ (x^I)^2}}$ and $y_*^I = x^I -\frac{\tau |x^I|}{\sqrt{(x^R)^2+ (x^I)^2}}$.
It is clear that if $\sqrt{(x^R)^2+ (x^I)^2} < \tau$, then the signs of $y_*^R$ and $x^R$ will be opposite, which is in contradiction with \eqref{eq:prox2}. Therefore, if $\sqrt{(x^R)^2+ (x^I)^2} < \tau$, both $y_*^R$ and $y_*^I$ are zero. It is straightforward to check that $(0,0)$ satisfies the subgradient optimality condition.
\end{proof}

\subsection{Proof of Proposition \ref{prop:ampderivation}} \label{appsec:CAMP}
Let
\begin{eqnarray}\label{eq:derivativenotation}
\eta^{R}(x+iy) &\triangleq& \mathcal{R}({\eta(x + iy; \lambda))}, \nonumber \\
\eta^{I}(x+iy) & \triangleq& \mathcal{I}(\eta(x+iy; \lambda))
\end{eqnarray}
denote the real and imaginary parts of the complex soft thresholding function. Define
\begin{eqnarray}\label{eqn:partialderivative}
& & \partial_1\eta^{R}  \triangleq \frac{\partial \eta^{R}(x+iy)}{\partial x}, \nonumber \\
&&\partial_2\eta^{R} \triangleq \frac{\partial \eta^{R}(x+iy)}{\partial y}, \nonumber \\
& &\partial_1 \eta^{I} \triangleq \frac{\partial \eta^{I}(x+iy)}{\partial x},  \nonumber \\
&&\partial_2\eta^{I} \triangleq \frac{\partial \eta^{I}(x+iy)}{\partial y}.
\end{eqnarray}
 We first simplify the expression for $z_{a \rightarrow \ell}^t$:
\begin{eqnarray} \label{eq:zalt}
z_{a\rightarrow \ell}^t &=& \underbrace{y_a - \sum_{j \in [N]} A_{aj} x_j^t - \sum_{j \in [N]} A_{aj} \Delta x_{j \rightarrow a}^t }_{z_a^t \triangleq} \nonumber \\
&+& \underbrace{A_{a\ell} x_\ell^t}_{\Delta z_{a \rightarrow \ell}^t \triangleq} +O(1/N).
\end{eqnarray}
We also use the first-order expansion of the soft thresholding function to obtain
\begin{eqnarray} \label{eq:deltaxia}
\lefteqn{x_{\ell \rightarrow a}^{t+1}} \nonumber \\
&=& \! \! \! \! \eta \Big(\sum_{b \in [n]} A^*_{b\ell} z_{b}^t + \sum_{b \in [n]} A^*_{b\ell} \Delta z_{b \rightarrow \ell}^t -A^*_{a\ell} z_a^t ; \tau_t \Big) \nonumber \\
&& \! \! \! + O\left(\frac{1}{N}\right) \nonumber \\
&=& \! \! \! \!  \underbrace{ \eta \Big(\sum_{b \in [n]} A^*_{b \ell} z_{b}^t + \sum_{b \in [n]} A^*_{b\ell} \Delta z_{b \rightarrow \ell}^t ;\tau_t \Big)}_{x_\ell^t \triangleq}  \nonumber \\
 &&\! \! \! \!  - \, \mathcal{R} (A^*_{a \ell} z_a^t) \partial_1\eta^{R} \Big(\sum_{b \in [n]} A^*_{b\ell} z_{b}^t + \sum_{b \in [n]} A^*_{b\ell} \Delta z_{b \rightarrow \ell}^t \Big) \nonumber \\
 && \! \! \! \! - \, \mathcal{I}(A^*_{a\ell} z_a^t) \partial_2 \eta^{R} \Big(\sum_{b \in [n]} A^*_{b\ell} z_{b}^t + \sum_{b \in [n]} A^*_{b\ell} \Delta z_{b \rightarrow \ell}^t \Big) \nonumber \\
 && \! \! \! \! - \, \mathcal{R}(A^*_{a\ell} z_a^t) \partial_1\eta^{I} \Big(\sum_{b \in [n]} A^*_{b\ell} z_{b}^t + \sum_{b \in [n]} A^*_{b\ell} \Delta z_{b \rightarrow \ell}^t \Big) \nonumber \\
 && \! \! \! \! - \, \mathcal{I} (A^*_{a\ell} z_a^t) \partial_2 \eta^{I} \Big(\sum_{b \in [n]} A^*_{b\ell} z_{b}^t + \sum_{b \in [n]} A^*_{b\ell} \Delta z_{b \rightarrow \ell}^t \Big)  \nonumber \\
 && \! \! \! +O\left(\frac{1}{N}\right). 
\end{eqnarray}
According to \eqref{eq:zalt} $\Delta z_{b \rightarrow \ell}^t = A_{b\ell} x_\ell^t$. Furthermore, we assume that the columns of the matrix are normalized. Therefore, $\sum_b A_{b\ell}^* \delta z_{b \rightarrow \ell}^t = x_\ell^t$. It is also clear that
 \begin{eqnarray}\label{eq:mpproof1}
\lefteqn{ \Delta x_{\ell \rightarrow a}^t} \nonumber \\
 &\triangleq& \! \! \! \!  - \, \mathcal{R}(A^*_{a\ell} z_a^t) \partial_1\eta^{R} \Big(\sum_{b \in [n]} A^*_{b\ell} z_{b}^t + \sum_{b \in [n]} A^*_{b\ell} \Delta z_{b \rightarrow \ell}^t \Big) \nonumber \\
 &&  \! \! \! \!- \, \mathcal{I} (A^*_{a\ell} z_a^t) \partial_2 \eta^{R} \Big(\sum_{b \in [n]} A^*_{bi} z_{b}^t + \sum_{b \in [n]} A^*_{b\ell} \Delta z_{b \rightarrow \ell}^t \Big) \nonumber \\
 && \! \! \! \! - \, \mathcal{R}(A^*_{a\ell} z_a^t) \partial_1\eta^{I} \Big(\sum_{b \in [n]} A^*_{b\ell} z_{b}^t + \sum_{b \in [n]} A^*_{b\ell} \Delta z_{b \rightarrow \ell}^t \Big) \nonumber \\
 && \! \! \! \!- \, \mathcal{I} (A^*_{a\ell} z_a^t) \partial_2 \eta^{I} \Big(\sum_{b \in [n]} A^*_{b\ell} z_{b}^t + \sum_{b \in [n]} A^*_{b\ell} \Delta z_{b \rightarrow \ell}^t \Big). \nonumber\\
 \end{eqnarray}
 Also, according to \eqref{eq:zalt}
 \begin{equation}\label{eq:mpproof2}
 z_a^t = y_a - \sum_j A_{aj} x_j^t - \sum_j A_{aj} \Delta x_{j \rightarrow a}^t.
\end{equation}
By plugging \eqref{eq:mpproof1} into \eqref{eq:mpproof2}, we obtain
\begin{eqnarray*}
\lefteqn{- \sum_{j} A_{aj} \Delta x_{j \rightarrow a}^t} \nonumber \\
&=& \sum_j A_{aj} \mathcal{R}(A^*_{aj} z_a^t) \partial_1\eta^{R}\Big(\sum_b A^*_{bj} z_b^t + x_j^t\Big) \\
&+& \,  \sum_j A_{aj} \mathcal{I}(A^*_{aj} z_a^t) \partial_2 \eta^{R} \Big(\sum_b A^*_{bj} z_b^t + x_j^t \Big) \\
&+&i \,  \sum_j A_{aj} \mathcal{R}(A^*_{aj} z_a^t) \partial_1\eta^{I} \Big(\sum_b A^*_{bj} z_b^t + x_j^t\Big) \\
&+& i \,  \sum_j A_{aj} \mathcal{I}(A^*_{aj} z_a^t) \partial_2 \eta^{I} \Big(\sum_b A^*_{bj} z_b^t + x_j^t\Big) ,
\end{eqnarray*}
which completes the proof. $\hfill \Box$

\subsection{Proof of Lemma \ref{lem:phaseindep}} \label{appsec:proofphaseindep}
Let $\mu$ and $\theta$ denote the amplitude and phase of the random variable $X$. Define $\nu \triangleq \sqrt{\rm npi} = \sqrt{\sigma^2 + \frac{m}{\delta}}$ and $\zeta \triangleq \frac{\mu}{{\nu}}$. Then
\begin{eqnarray}\label{eqn:riskforphase1}
\Psi(m) \! \! \! \! &=& \! \! \! \!  \E|\eta(X + \sqrt{\rm npi} Z_1 + i \sqrt{\rm npi}Z_2; \tau \sqrt{\rm npi})- X|^2  \nonumber \\
             &=&\! \! \! \! \nu^2\E\left|\eta \left(\frac{X}{\nu} + Z_1 + i Z_2; \tau\right)- \frac{X}{\nu} \right|^2  \nonumber \\
           &=& \! \! \! \! (1- \epsilon) \nu^2 \E |\eta( Z_1 + iZ_2; \tau)|^2 \nonumber \\
           &+&\! \! \! \!  {\epsilon} \nu^2 \E \left( \E_{\zeta, \theta} |\eta( \zeta {\rm e}^{i \theta}+Z_1 + iZ_2; \tau)- \zeta {\rm e}^{i\theta}|^2 \right),
\end{eqnarray}
where $\E_{\zeta, \theta}$ denotes the conditional expectation given the variables $\zeta, \theta$.  Note that the marginal distribution of $\zeta$ depends only  on the marginal distribution of $\mu$. The first term in \eqref{eqn:riskforphase1} is independent of the phase $\theta$, and therefore we should
prove that the second term is also independent of $\theta$.  Define
\begin{eqnarray}\label{eqn:riskforphase2}
\Phi(\zeta, \theta) \triangleq \E_{\zeta, \theta}( | \eta(\zeta {\rm e}^{i \theta} + Z_1 + iZ_2  ; \tau) - \zeta {\rm e}^{i \theta}|^2).
\end{eqnarray}
We prove that $\Phi$ is independent of $\theta$. For two real-valued variables $z_r$ and $z_c$, define $\mathbf{z}
\triangleq (z_r, z_c)$, $d \mathbf{z} \triangleq dz_r dz_c$, and
\begin{eqnarray*}
\alpha_z &\triangleq& \sqrt{(\zeta \cos \theta +z_r)^2 + (\zeta \sin \theta +z_c)^2}, \nonumber \\
\chi_z &\triangleq& \arctan\left(\frac{\zeta \sin \theta + z_c}{ \zeta \cos \theta+ z_r} \right), \nonumber\\
c_r &\triangleq& \frac{\zeta \cos \theta +z_r}{\alpha_z}, \nonumber \\
c_i &\triangleq& \frac{\zeta \sin \theta+z_c}{\alpha_z}.
\end{eqnarray*}
Define the two sets $S_{\tau} \triangleq \{(z_r,z_c) \ | \ \alpha_z < \tau \}$  and $S_\tau^c \triangleq \mathds{R}^2 \backslash S_\tau$, where ``$ \backslash$'' is the set subtraction operator. We have
\begin{eqnarray} \label{eq:softthresholdrisk}
\lefteqn{\Phi(\zeta, \theta)} \nonumber \\
&=&\int \limits_{\mathbf{z} \in S_{\tau}} \zeta^2 \frac{1}{\pi} {\rm e}^{-(z_r^2+ z_c^2)} d\mathbf{z} \nonumber \\
&& \! \! \! \! \! \! \! \!+  \int\limits _{\mathbf{z} \in S_{\tau}^c}  \left| (\alpha_z - \tau) {\rm e}^{i \chi_z}- \zeta \cos \theta - i \zeta \sin \theta \right| ^2 \frac{1}{\pi} {\rm e}^{-(z_r^2+ z_c^2)} d\mathbf{z} \nonumber \\
&=&\int \limits_{\mathbf{z} \in S_{\tau}} \zeta^2 \frac{1}{\pi} {\rm e}^{-(z_r^2+ z_c^2)} d\mathbf{z} \nonumber\\
&&\! \! \! \! \! \! \! \!+  \int\limits _{\mathbf{z} \in S_{\tau}^c}  |z_r + iz_c- \tau c_r-i \tau c_i | ^2 \frac{1}{\pi} {\rm e}^{-(z_r^2+ z_c^2)} d\mathbf{z}.
\end{eqnarray}
The first integral in \eqref{eq:softthresholdrisk} corresponds to the case $|\zeta {\rm e}^{i \theta}+z_r + iz_c|< \tau$. The second integral is over the values of $z_r$ and $z_c$ for which $|\zeta {\rm e}^{i \theta}+z_r + iz_c|\geq \tau$. Define $\beta \triangleq \zeta \cos \theta+z_r$ and $\gamma \triangleq \zeta \sin \theta+z_c$. We then obtain
\begin{eqnarray*}
\lefteqn{ \int\limits _{\mathbf{z} \in S_{\tau}^c}  |z_r + iz_c- \tau c_r-i  \tau c_i | ^2 \frac{1}{\pi} {\rm e}^{-(z_r^2+ z_c^2)} dz_r dz_c }\\
& = & \int\limits_{\sqrt{\beta^2+\gamma^2}> \tau} \Big|\beta - \zeta \cos \theta +i(\gamma - \zeta \sin \theta) \nonumber \\
&& \hspace{1.5cm}-\frac{\tau \beta}{\sqrt{\beta^2+\gamma^2}} -i \frac{\tau \gamma}{\sqrt{\beta^2+\gamma^2}} \Big|^2 \nonumber \\
&& \hspace{1.5cm} \frac{1}{\pi} {\rm e}^{-(\beta - \zeta \cos \theta)^2- (\gamma - \zeta \sin \theta)^2} d\beta d\gamma \\
\! \! \! \! &\overset{(a)}{=}&  \int\limits_{\phi =0}^{2 \pi} \int\limits_{r>\tau}  |(r-\tau) \cos \phi - \zeta \cos \theta  \nonumber \\
&&\hspace{1.4cm}+i((r- \tau)\sin \phi - \zeta \sin \theta) | ^2  \nonumber \\
&&\hspace{1.4cm} \frac{1}{\pi} {\rm e}^{-(r \cos \phi - \zeta \cos \theta)^2- (r \sin \phi - \zeta \sin \theta)^2} r dr d\phi \\
&=&   \int\limits_{\phi =0}^{2 \pi} \int\limits_{r>\tau} [(r -
\tau)^2 + \zeta^2 - 2 \zeta (r - \tau) \cos(\theta- \phi)]  \nonumber \\
&&\hspace{1.6cm} {\rm e}^{-
r^2 - \zeta^2 + 2r \zeta \cos(\theta - \phi)}r dr d \phi.
\end{eqnarray*}
Equality (a) is the result of the change of integration variables from $\gamma$ and $\beta$ to $r \triangleq \sqrt{\beta^2+ \gamma^2}$ and $\phi \triangleq \arctan \Big(\frac{\gamma}{\beta} \Big)$. 
The periodicity of the cosine function proves that the last integration is independent of the phase $\theta$. We can similarly prove that $\int_{\mathbf{z} \in S_\tau} \zeta^2 \frac{1}{\pi} {\rm e}^{- z_r^2 +z_c^2} d\mathbf{z}$ is independent of $\theta$. This completes the proof.  $\hfill \Box$

\subsection{Proof of Theorem \ref{thm:phasetranscurve}}\label{appsec:ptformula}

We first prove the following lemma that simplifies the proof of Theorem \ref{thm:phasetranscurve}. 

\begin{lemma}\label{lem:concavity}
The function $\Psi(m)$ is concave with respect to $m$.
\end{lemma}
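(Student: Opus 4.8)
The plan is to rewrite $\Psi$ as a convex combination of an affine function of $m$ and an expectation of perspective functions of the soft-thresholding risk, and then use the fact that the perspective of a concave function is concave (equivalently, a one-line second-derivative check).

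First I would invoke Lemma~\ref{lem:phaseindep} to replace $G$ by its amplitude marginal $G(\mu)$, so that everything goes through the phase-independent risk $r(\mu,\tau)$ of Section~\ref{sec:softrisk}. Then, exactly as in the scaling step in the proof of Lemma~\ref{lem:phaseindep} (using the positive homogeneity $\eta(cw;c\tau)=c\,\eta(w;\tau)$ for $c>0$ and writing $\nu\triangleq\sqrt{\npi(m,\sigma,\delta)}=\sqrt{\sigma^2+m/\delta}$), separating out the atom at $X=0$ gives
\[
\Psi(m) = (1-\rho\delta)\,\nu^2\,\E|\eta(Z_1+iZ_2;\tau)|^2 \;+\; \rho\delta\,\E_{\mu\sim G}\!\left[\,\nu^2\, r(\mu/\nu,\tau)\,\right].
\]
Since $\nu^2=\sigma^2+m/\delta$ is affine in $m$, the first summand is affine, hence concave; and because expectation over $\mu$ and finite sums both preserve concavity, the whole statement reduces to: for each fixed amplitude $\mu\ge 0$, the map $m\mapsto \nu^2\,r(\mu/\nu,\tau)$ is concave in $m$.

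For that step I would change variables to $u\triangleq\nu^2=\sigma^2+m/\delta$, which is affine and \emph{increasing} in $m$, so it is enough to show $u\mapsto u\,r(\mu/\sqrt u,\tau)$ is concave on $u>0$. Now Lemma~\ref{lem:softthreshincconcave} says $r(\cdot,\tau)$ is a concave function of the square of its argument; writing $r(a,\tau)=R_\tau(a^2)$ with $R_\tau:[0,\infty)\to[0,\infty)$ concave, the target becomes $u\,R_\tau(\mu^2/u)$, which is precisely the perspective of $R_\tau$ evaluated at $(\mu^2,u)$. The perspective of a concave function is jointly concave on $\{s\ge 0,\ u>0\}$, so its restriction to $s=\mu^2$ fixed is concave in $u$; if one prefers not to cite the perspective fact, the same conclusion follows from $\frac{d^2}{du^2}\bigl[u\,R_\tau(\mu^2/u)\bigr] = \frac{\mu^4}{u^3}\,R_\tau''(\mu^2/u)\le 0$, using the smoothness of $r$ (established while proving Lemma~\ref{lem:softthreshincconcave} in Appendix~\ref{app:pf_concavityrisk}). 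Composing with the affine increasing map $m\mapsto u$, then taking $\E_{\mu\sim G}$ and adding back the affine term, finishes the argument; the degenerate point $m=\sigma=0$, where $\Psi(0)=0$, is handled by continuity.

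The substantive input is Lemma~\ref{lem:softthreshincconcave} (concavity of the risk in $\mu^2$), which is already available here; given that, the only genuine idea is to recognize the perspective-function structure $u\,R_\tau(\mu^2/u)$, and everything else — homogeneity of $\eta$, affineness of $\npi$ in $m$, stability of concavity under expectation and affine precomposition — is routine. So I do not expect a real obstacle; the one place to be slightly careful is applying the scaling identity for $\Psi$ with $\npi=\nu^2$ and peeling off the $X=0$ atom cleanly before invoking the perspective argument.
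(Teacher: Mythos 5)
Your proof is correct, and it takes a genuinely different route from the paper's. The paper proves Lemma~\ref{lem:concavity} by brute force: it conditions on $X$, computes $\frac{d^2\Psi_X(\nu^2)}{d(\nu^2)^2}$ explicitly through the partial derivatives $\partial_1\eta^R,\partial_1\eta^I,\partial_1^2\eta^R,\partial_1^2\eta^I$ in \eqref{eq:defsoft_der} (including the Dirac terms $\delta(A_\nu-\tau)$ arising from the kink of $\eta$ at $|w|=\tau$), and then verifies by polar-coordinate integration that the two surviving terms in \eqref{eq:secderivfinal} are nonpositive. You instead peel off the atom at $X=0$ (affine in $m$), write the remaining contribution for fixed amplitude $\mu$ as $u\,R_\tau(\mu^2/u)$ with $u=\nu^2=\sigma^2+m/\delta$ and $R_\tau$ the risk as a function of $\mu^2$, and invoke Lemma~\ref{lem:softthreshincconcave} together with the perspective-of-a-concave-function fact. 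This is shorter, avoids the delicate delta-function bookkeeping entirely, and exposes the structural reason concavity holds; the perspective form of the argument also does not require the second derivative $R_\tau''$ to exist pointwise, which is a mild gain in rigor over the paper's computation. The price is a dependence on Lemma~\ref{lem:softthreshincconcave}; since the paper's proof of that lemma (Appendix~\ref{app:pf_concavityrisk}) relies only on Lemma~\ref{lem:phaseindep} and not on Lemma~\ref{lem:concavity}, there is no circularity, but you should state this explicitly because in the paper's exposition Lemma~\ref{lem:concavity} is presented before Lemma~\ref{lem:softthreshincconcave}. Your scaling identity $\Psi(m)=\nu^2\,\E\left|\eta(X/\nu+Z_1+iZ_2;\tau)-X/\nu\right|^2$ is exactly the one the paper uses in \eqref{eqn:riskforphase1}, and the second-derivative check $\frac{d^2}{du^2}\bigl[u\,R_\tau(\mu^2/u)\bigr]=\frac{\mu^4}{u^3}R_\tau''(\mu^2/u)$ is correct.
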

\begin{proof}
For the notational simplicity define $\nu \triangleq \sqrt{\sigma^2 + \frac{m}{\delta}}$, $X_{\nu} \triangleq \frac{X}{\nu}$, and $A_{\nu} \triangleq |X_{\nu} - Z_1+ i Z_2|$. We note that 
\begin{eqnarray*}
\frac{d^2 \Psi}{dm^2} &=& \frac{d}{dm} \left( \frac{d \Psi}{dm} \right) = \frac{d}{dm} \left( \frac{d\Psi}{d (\nu^2)} \frac{d \nu^2}{dm}\right) \nonumber \\
& =& \frac{1}{\delta}\frac{d}{dm} \left( \frac{d \Psi}{d \nu^2} \right) = \frac{1}{\delta^2} \frac{d^2 \Psi}{ d (\nu^2)^2}.
\end{eqnarray*}
Therefore, $\Psi$ is concave with respect to $m$ if and only if it is concave with respect to $\nu^2$. According to Lemma \ref{lem:phaseindep} the phase distribution of $X$ does not affect the $\Psi$ function. Therefore, we set the phase of $X$ to zero and assume that it is a positive-valued random variable (representing the amplitude). This assumption substantially simplifies the calculations. We have
\begin{eqnarray*}
\Psi(\nu^2) &=& \nu^2 \E \left(\left| \eta(X_{\nu} + Z_1 + iZ_2; \tau) - X_{\nu}\right|^2 \right)  \nonumber \\
                         & = & \nu^2 \E \left(\E_X  \left(\left| \eta(X_{\nu} + Z_1 + iZ_2; \tau) - X_{\nu}\right|^2 \right) \right),
\end{eqnarray*}
 where $\E_X$ denotes the expected value conditioned on the random variable $X$. We first prove that $\Psi_X(\nu^2) \triangleq \nu^2 \E_X  \left(\left| \eta(X_{\nu} + Z_1 + iZ_2; \tau) - X_{\nu}\right|^2 \right)$ is concave with respect to $\nu^2$ by proving $\frac{d\Psi_X}{d( \nu^2)^2} \leq0$. Then, since $\Psi(\nu^2)$ is a convex combination of $\Psi_X(\nu^2)$, we conclude that $\Psi(\nu^2)$ is a concave function of $\nu^2$ as well. The rest of the proof details the algebra required for calculating and simplifying $\frac{d^2 \Psi_X(\nu^2)}{d^2 \nu^2}$. 
 \vspace{.1cm}
 
 Using the real and imaginary parts of the soft thresholding function and its partial derivatives introduced in \eqref{eq:derivativenotation} and \eqref{eqn:partialderivative} we have
\begin{eqnarray*}
\lefteqn{\frac{d\Psi_X(\nu^2)}{d\nu^2}} \nonumber \\
&=& \E_X \left|\eta\left(X_{\nu}+ Z_1+ i Z_2; \tau\right) - X_{\nu}\right|^2 \nonumber \\
&& \! \! \! \!  \! \! \!+ \ \nu^2\frac{d}{d \nu} \E_X \left| \eta\left(X_{\nu}+ Z_1 +iZ_2; \tau \right)- X_{\nu} \right|^2 \frac{d \nu}{d (\nu^2)}\\
&=&  \E_X \left|\eta\left(X_{\nu}+ Z_1+ i Z_2; \tau\right) - X_{\nu}\right|^2 \nonumber \\
&& \! \! \! \! \! \! \! \!+ \frac{\nu}{2}\frac{d}{d \nu} \E_X \left| \eta\left(X_{\nu}+ Z_1 +iZ_2; \tau \right)- X_{\nu} \right|^2 \\
&=&  \E_X \left|\eta\left(X_{\nu}+ Z_1+ i Z_2; \tau\right) - X_{\nu}\right|^2 \nonumber \\
&& \! \! \! \! \! \! \! \!- X_{\nu} \E_X \Big[ \left(\partial_1 \eta^R(X_{\nu} + Z_1 +iZ_2; \tau)-1 \right) \nonumber \\
 &&  \hspace{1cm}\left(\eta^R(X_{\nu} +Z_1+ iZ_2; \tau)- X_{\nu} \right)\Big]\nonumber \\
&& \! \! \! \! \! \! \! \!- X_{\nu} \E_X \Big[ \left(\partial_1 \eta^I(X_{\nu} + Z_1 +iZ_2; \tau) \right)\nonumber \\
 && \hspace{1cm} \left(\eta^I(X_{\nu} +Z_1+ iZ_2; \tau) \right) \Big],
\end{eqnarray*}
where $\partial_1^R$, $\partial_2^R$, $\partial_1^I$, and $\partial_2^I$ are defined in \eqref{eqn:partialderivative}. Note that in the above calculations, $\partial_2 \eta^R$ and $\partial_2 \eta^I$ did not appear, since we assumed that $X$ is a real-valued random variable.  Define $A_\nu \triangleq \sqrt{(X_\nu+ Z_1)^2+ Z_2^2}$. It is straightforward to show that
\begin{eqnarray} \label{eq:defsoft_der}
\partial_1 \eta^R(X_{\nu} + Z_1+ iZ_2 ; \tau) \! \!  \! \! \!&=& \!\! \! \! \! \left(1- \frac{\tau Z_2^2 }{A_\nu^3} \right)\mathds{I}(A_\nu> \tau), \nonumber \\
 \eta^R(X_{\nu} + Z_1 +iZ_2; \tau)\! \! \!  \! \! &=& \! \! \!  \! \! (X_{\nu}+Z_1)\left(\!  1 - \frac{\tau}{A_\nu} \! \right) \mathds{I}(A_\nu \geq \tau), \nonumber \\
 \partial_1 \eta^I(X_{\nu}+Z_1 + iZ_2; \tau)\! \! \! \! \! &=&\! \! \! \! \!  \frac{\tau (X_{\nu} + Z_1)Z_2}{A_\nu^3} \mathds{I}(A_\nu \geq \tau), \nonumber \\
  \eta^I (X_{\nu}+ Z_1 + iZ_2; \tau)\! \! \! \! \!   & = &\! \! \!  \! \!  \left( Z_2 - \frac{\tau Z_2}{A_\nu} \right) \mathds{I} (A_\nu \geq \tau).
\end{eqnarray}

For $f:\mathds{C}\rightarrow \mathds{R}$ we define $\partial^2_1 f(x+iy) \triangleq \frac{\partial^2 f(x+iy)}{\partial x^2}$. It is straightforward to show that
\begin{eqnarray}
\lefteqn{\frac{d^2 \Psi_X(\nu^2)}{d^2 \nu^2} } \nonumber \\
&=& -\frac{X}{\nu^3}\E_X \Big[\left(\partial_1 \eta^R(X_{\nu} + Z_1+ iZ_2 ; \tau)-1 \right) \nonumber \\
&& \hspace{1.5cm} \left( \eta^R(X_{\nu} + Z_1 +iZ_2; \tau)- X_{\nu} \right)\Big] \nonumber\\
&-&  \, \frac{X}{\nu^3} \E_X \Big[ \left(\partial_1 \eta^I(X_{\nu}+Z_1 + iZ_2; \tau) \right) \nonumber \\
&&  \hspace{1.3cm} \left( \eta^I (X_{\nu}+ Z_1 + iZ_2; \tau) \right) \Big]\nonumber \\
&+& \frac{X}{2\nu^3}\E_X \Big[\left(\partial_1 \eta^R(X_{\nu} + Z_1+ iZ_2 ; \tau)-1 \right) \nonumber \\
 &&  \hspace{1.4cm} \left( \eta^R(X_{\nu} + Z_1 +iZ_2; \tau)- X_{\nu} \right) \Big]\nonumber
\end{eqnarray}
\begin{eqnarray}\label{eq:concavity1}
&+&  \, \frac{X}{2\nu^3} \E_X \Big[ \left(\partial_1 \eta^I(X_{\nu}+Z_1 + iZ_2; \tau) \right) \nonumber \\ 
&&  \hspace{1.5cm}  \left( \eta^I (X_{\nu}+ Z_1 + iZ_2; \tau) \right) \Big] \nonumber \\
&+&  \, \frac{X^2}{2\nu^4} \E_X \left( \partial_1 \eta^R(X_{\nu}+ Z_1+ iZ_2;\tau) -1 \right)^2 \nonumber \\
&+& \, \frac{X^2}{2\nu^4} \E_X \left(\partial_1 \eta^I(X_{\nu}+ Z_1+ iZ_2 ; \tau)\right)^2 \nonumber \\
&+&\frac{X^2}{2\nu^4}\E_X \Big[\left(\partial_1^2 \eta^R(X_{\nu} + Z_1+ iZ_2 ; \tau) \right) \nonumber \\ 
&& \hspace{1.4cm} \left( \eta^R(X_{\nu} + Z_1 +iZ_2; \tau)- X_{\nu} \right) \Big]\nonumber\\
&+& \, \frac{X^2}{2\nu^4} \E_X \Big[ \left(\partial_1^2 \eta^I(X_{\nu}+Z_1 + iZ_2; \tau) \right) \nonumber \\
&&  \hspace{1.4cm} \left( \eta^I (X_{\nu}+ Z_1 + iZ_2; \tau) \right) \Big].
\end{eqnarray}
Our next objective is to simplify the terms in \eqref{eq:concavity1}. We start with
\begin{eqnarray}\label{eq:concav2}
&&\E_X \Big[ \left(\partial_1 \eta^R(X_{\nu} + Z_1+i Z_2 ; \tau)-1 \right) \nonumber \\
 && \ \ \ \ \ \ \left( \eta^R(X_{\nu} + Z_1 +iZ_2; \tau)- X_{\nu} \right) \Big]\nonumber \\
&&+ \E_X  \Big[ \left(\partial_1 \eta^I(X_{\nu}+Z_1 + iZ_2; \tau) \right) \nonumber \\
 && \ \ \ \ \ \ \ \ \left( \eta^I (X_{\nu}+ Z_1 + iZ_2; \tau) \right)  \Big]\nonumber \\
&&=   \frac{X}{\nu} \E_X \left( \mathds{I}(A_\nu \leq \tau) +\frac{\tau Z_2^2}{ A_\nu^3} \mathds{I}(A_\nu \geq \tau) \right).
\end{eqnarray}
Similarly,
\begin{eqnarray}\label{eq:concav3}
\lefteqn{ \E_X \left(\partial_1 \eta^R(X_\nu+ Z_1+ iZ_2 ; \tau)-1 \right)^2} \nonumber \\
 &+& \E_X \left(\partial_1 \eta^I(X_\nu + Z_1 + iZ_2; \tau) \right)^2 \nonumber \\
 & = &\E_X \left(\Big(1- \frac{\tau Z_2^2}{A_{\nu}^3}\Big) \mathds{I}(A_{\nu} \geq \tau) -1 \right)^2 \nonumber \\
 &+& \E \left(\frac{\tau(X_\nu +Z_1) Z_2}{A_{\nu}^3} \mathds{I}(A_{\nu} \geq \tau) \right)^2 \nonumber \\
 & = &\E_X \left( \mathds{I}(A_{\nu} \leq \tau) + \frac{\tau^2 Z_2^4}{A_{\nu}^6} \mathds{I} (A_{\nu} \geq \tau) \right) \nonumber \\
 &+& \E_X \left( \frac{\tau^2 (X_\nu + Z_1)^2 Z_2^2}{A_{\nu}^6} \mathds{I}(A_{\nu} \geq \tau) \right) \nonumber \\
 &= & \E_X \left( \mathds{I}(A_{\nu} \leq \tau) + \frac{\tau^2 Z_2^2}{A_{\nu}^4} \mathds{I} (A_{\nu} \geq \tau)  \right). 
 \end{eqnarray}
 We also have
 \begin{eqnarray*}
 \lefteqn{\partial^2_1 \eta^R(X_\nu+ Z_1+ iZ_2 ; \tau)} \nonumber \\
  &=& \frac{3 \tau Z_2^2 (X_\nu+ Z_1)}{A_{\nu}^5} \mathds{I}(A_\nu \geq \tau) \nonumber \\
 &+& \left(1- \frac{\tau Z_2^2}{A_{\nu}^3}\right) \left(\frac{X_\nu + Z_1}{A_{\nu}}\right)\delta(A_\nu - \tau) \nonumber
 \end{eqnarray*}
 and
 \begin{eqnarray*}
 \lefteqn{\partial^2_1 \eta^I(X_\nu+ Z_1+ iZ_2 ; \tau)} \nonumber \\
  &=& \frac{\tau Z_2}{A_\nu^3} \mathds{I}(A_{\nu}\geq \tau) \nonumber \\
 &-& 3\tau\frac{(X_\nu+Z_1)^2Z_2}{A_{\nu}^5} \mathds{I}(A_{\nu}\geq \tau) \nonumber \\
 &+& \frac{\tau (X_\nu+Z_1)^2Z_2}{A_{\nu}^4} \delta(A_\nu- \tau).
 \end{eqnarray*}
 Define
 \begin{eqnarray*}
  S &\triangleq& \E_X \Big[ \left(\partial_1^2 \eta^R(X_{\nu} + Z_1+ iZ_2 ; \tau) \right) \nonumber \\
  &&\ \ \ \ \ \ \left( \eta^R(X_{\nu} + Z_1 +iZ_2; \tau)- X_{\nu} \right) \Big] \nonumber\\ 
 &+& \,  \E_X \Big[ \partial_1^2 \eta^I(X_{\nu}+Z_1 + iZ_2; \tau) \nonumber \\
 &&\ \ \ \ \ \ \eta^I (X_{\nu}+ Z_1 + iZ_2; \tau) \Big]. \nonumber 
 \end{eqnarray*}
 We then have
 \begin{eqnarray}
S&=& \E_X  \left( \frac{3 \tau Z_1 Z_2^2(X_\nu+Z_1)}{A_\nu^5} \mathds{I}(A_\nu\geq \tau) \right)  \nonumber \\ 
&-&\E_X \left( \frac{3 \tau^2 (X_\nu+Z_1)^2Z_2^2}{A_\nu^6} \mathds{I}(A_\nu \geq \tau) \right) \nonumber \\
&-& \E_X\left(  \frac{X_{\nu} (X_{\nu} + Z_1)}{A_{\nu}} \left(1 - \frac{Z_2^2}{A_\nu^2}\right) \delta(A_\nu - \tau) \right) \nonumber \\
&+& \E_X \Big( \Big( \frac{\tau Z_2^2}{A_\nu^3}  - \frac{\tau^2 Z_2^2}{A_{\nu}^4} \Big)\mathds{I}(A_\nu \geq \tau) \Big) \nonumber \\
&-&  \E_X \Big(\frac{3\tau(X_\nu+Z_1)^2 Z_2^2}{A_\nu^5} \Big)\mathds{I}(A_\nu \geq \tau) \Big) \nonumber \\
&-&\E_X \Big( \frac{3 \tau^2(X_\nu+Z_1)^2Z_2^2}{A_\nu^6} \Big)\mathds{I}(A_\nu \geq \tau) \Big). \nonumber
\end{eqnarray}
Note that in the above expression we have replaced $\left(1- \frac{\tau Z_2^2}{A_{\nu}^3}\right)  \delta(A_\nu - \tau)$ with $\left(1 - \frac{Z_2^2}{A_\nu^2}\right) \delta(A_\nu - \tau)$ for an obvious reason.  It is straightforward to simplify this expression to obtain
\begin{eqnarray}\label{eq:secondderivcalc}
S &=& \E_X \left( \left( \frac{\tau Z_2^2}{A_\nu^3} - \frac{\tau^2Z_2^2}{A_\nu^4}  \right)\mathds{I}(A_\nu \geq \tau) \right) \nonumber \\
&-& \E_X \left( \left(\frac{3 \tau (X_\nu+Z_1)Z_2^2 X_\nu}{A_v^5}  \right)  \mathds{I}(A_\nu \geq \tau)  \right) \nonumber \\
&-&  \E_X \left( \frac{X_\nu(X_\nu+ Z_1)}{A_\nu} \left(1- \frac{Z_2^2}{A_\nu^2}  \right) \delta(A_\nu- \tau )\right).
 \end{eqnarray}
 
By plugging \eqref{eq:concav2}, \eqref{eq:concav3}, and \eqref{eq:secondderivcalc} into \eqref{eq:concavity1}, we obtain
\begin{eqnarray}\label{eq:secderivfinal}
\frac{d^2 \Psi_X(\nu^2)}{ d^2 \nu^2} &=& -\E \frac{3 \tau X^3(X_\nu+Z_1)Z_2^2}{2\nu^5 A_\nu^5} \mathds{I}(A_\nu \geq \tau) \nonumber \\
&-&  \E \left(  \frac{X_\nu(X_\nu+ Z_1)}{A_\nu} \right) \left(1 - \frac{Z_2^2}{ A_\nu^2} \right)\delta (A_\nu - \tau). \nonumber \\
\end{eqnarray}
We claim that both terms on the right hand side of \eqref{eq:secderivfinal} are negative. To prove this claim, we first focus on the first term:
\[
\E\left( \frac{(X_\nu+Z_1)Z_2^2}{A_\nu^5} \mathds{I}(A_\nu\geq \tau) \right) \geq 0. 
\]
Define $S_\tau \triangleq \{(Z_1,Z_2) \ | \ A_\nu \geq \tau  \}$. We have
\begin{eqnarray}
\lefteqn{\E\left( \frac{(X_\nu+Z_1)Z_2^2}{A_\nu^5} \mathds{I}(A_\nu\geq \tau) \right)} \nonumber \\
&=& \int \int_{(z_1,z_2) \in S_{\tau}} \frac{(X_\nu+z_1)z_2^2}{A_\nu^5} \frac{1}{\pi} {\rm e}^{-z_1^2- z_2^2} dz_1 dz_2 \nonumber \\
&\overset{(a)}{=}& \int_{\tau}^{\infty} \int_{0}^{2\pi} \frac{r \cos \phi r^2 \sin^2 \phi {\rm e}^{- r^2 - X_\nu^2+2 r X_\nu \cos \phi}}{r^5} r d\phi dr \nonumber \\
&=& \int_\tau^{\infty} \int_0^{2 \pi} \frac{\sin^2 \phi {\rm e}^{- r^2 - X_\nu^2+2 r X_\nu \cos \phi}}{r} d\sin(\phi) dr \geq 0. \nonumber \\
\end{eqnarray}
Equality (a) is the result of the change of integration variables from $z_1$, $z_2$ to $r \triangleq A_\nu$ and $\phi \triangleq \arctan\left( \frac{z_2}{z_1+ X_\nu} \right)$. With exactly similar approach we can prove that the second term of \eqref{eq:secderivfinal} is also negative. 
 
So far we have proved that $\Psi_X(m)$ is concave with respect to $m$. But this implies that $\Psi(m)$ is also concave, since it is a convex combination of concave functions. 
\end{proof}
\vspace{.2cm}

\begin{proof}[Proof of Theorem \ref{thm:phasetranscurve}]
As proved in Lemma \ref{lem:concavity}, $\Psi(m)$ is a concave function. Furthermore $\Psi(0) = 0$.  Therefore a given value of $\rho$ is below the phase transition, i.e., $\rho < \rho_{SE}(\delta)$ if and only if  $\left. \frac{d\Psi}{dm} \right|_m <1$. It is straightforward to calculate the derivative at zero and confirm that
\begin{eqnarray}\label{eq:derivative}
\left. \frac{d\Psi}{d m} \right|_{m = 0} = \frac{\rho \delta(1+ \tau^2)}{ \delta} + \frac{1 -\rho \delta}{ \delta} \E |\eta(Z_1 + iZ_2; \tau) |^2.
\end{eqnarray}
Since $Z_1, Z_2 \sim N(0, 1/2)$ and are independent, the phase of $Z_1+ iZ_2$ has a uniform distribution, while its amplitude has Rayleigh distribution. Therefore, we
have
\begin{eqnarray}\label{eq:risk:nosignal}
\E |\eta(Z_1 + iZ_2; \tau) |^2 = 2\int_{\tau}^{\infty} \omega (\omega - \tau)^2 {\rm e}^{- \omega^2}d\omega.
\end{eqnarray}
We plug \eqref{eq:risk:nosignal} into \eqref{eq:derivative} and set the derivative $\left. \frac{d\Psi}{dm} \right|_m=1$ to obtain the value of $\rho$ at which the phase transition occurs. This value is given by
\begin{eqnarray*}
\rho = \frac{\delta- 2\int_{\tau}^{\infty} \omega (\omega - \tau)^2 {\rm e}^{- \omega^2}d\omega}{\delta(1+ \tau^2 -  2\int_{\tau}^{\infty} \omega (\omega - \tau)^2 {\rm e}^{- \omega^2}d\omega)}.
\end{eqnarray*}
Clearly the phase transition depends on $\tau$. Hence according to the framework we introduced in Section \ref{ssec:phasetran}, we search for the value of $\tau$ that maximizes the phase transition $\rho$. Define $\chi_1(\tau) \triangleq \int_{\tau_*}^{\infty} \omega (\tau_*- \omega){\rm e}^{- \omega^2} d \omega$ and $\chi_2(\tau) \triangleq \int_{\tau_*}^{\infty} \omega(\omega- \tau_*)^2 {\rm e} ^{-\omega^2}d\omega$. This optimal $\tau$ satisfies
\begin{eqnarray*}
\lefteqn{4 \chi_1(\tau^*) \left(1+ \tau_*^2 - 2 \chi_2(\tau^*) \right)} \nonumber \\
& =& \left(4 \chi_1(\tau^*) -2 \tau_{*} \right) \left( \delta - 2\chi_2(\tau^*)\right),
\end{eqnarray*}
which in turn results in $\delta  =  \frac{4(1+ \tau_*^2)
\chi_1(\tau^*)- 4 \tau_* \chi_2(\tau^*)}{-2\tau_* +
4\chi_1(\tau^*) }$. Plugging $\delta$ into the formula for $\rho$,
we obtain the formula in Theorem \ref{thm:phasetranscurve}.
\end{proof}

\subsection{Proof of Theorem \ref{thm:ptasumptote}}\label{appsec:laplace}
We first show that the value of $\delta$ in Theorem \ref{thm:phasetranscurve} goes to zero as $\tau \rightarrow \infty$. By changing the variable of integration from $\omega$ to $\gamma = \omega- \tau$, we obtain
\begin{eqnarray}\label{eq:term11}
\left|\int_{\omega  \geq \tau} \omega (\omega- \tau) {\rm e}^{- \omega^2} d \omega\right| \! &=& \! \left| \int_{\gamma \geq 0} (\gamma+ \tau) \gamma {\rm e}^{- (\gamma+ \tau)^2} d\gamma \right| \nonumber \\ 
\! &\leq& \!  \left|{\rm e}^{- \tau^2} \int_{\gamma \geq 0} (\gamma+ \tau) \gamma {\rm e}^{- \gamma^2} d\gamma \right|  \nonumber \\
\! &=& \! {\rm e}^{- \tau^2} \left(\frac{1}{4} + \frac{\tau}{2 \sqrt{\pi}} \right).
\end{eqnarray}
Again by changing integration variables we have
 \begin{eqnarray}\label{eq:term21}
\left|\int_{\omega \geq \tau} \omega (\omega- \tau)^2 {\rm e}^{- \omega^2} \right| \! &=& \!  \left| \int_{\gamma \geq 0 } (\gamma+ \tau) \gamma^2 {\rm e}^{- (\gamma+\tau)^2 } \right| \nonumber \\
\!&\leq& \! {\rm e}^{- \tau^2} \int_{\gamma> 0} (\gamma+ \tau) \gamma^2 {\rm e}^{- \gamma^2} \nonumber \\
\!&=&\! {\rm e}^{- \tau^2} \left(\frac{\tau}{4} + \frac{1}{2 \sqrt{\pi}}\right).
\end{eqnarray}
Using \eqref{eq:term11} and \eqref{eq:term21} in the formula for $\delta$ in Theorem \ref{thm:phasetranscurve} establishes that $\delta \rightarrow 0$ as $\tau \rightarrow \infty$. Therefore, in order to find the asymptotic behavior of the phase transition as $\delta \rightarrow 0$, we can calculate the asymptotic behavior of $\delta$ and $\rho$ as $\tau \rightarrow \infty$. This is a standard application of Laplace's method. Using this method we calculate the leading terms of $\rho$ and $\delta$:
\begin{eqnarray}\label{eq:asymp_anal1}
\int_{\tau}^{\infty} \omega (\tau - \omega) {\rm e}^{- \omega^2} d\omega \sim \frac{{\rm e}^{- \lambda^2}}{8 \tau^3}, \tau \rightarrow \infty,
\end{eqnarray}
\begin{eqnarray}\label{eq:asymp_anal2}
\int_{\tau}^{\infty} \omega (\tau - \omega)^2 {\rm e}^{- \omega^2} d\omega \sim \frac{{\rm e}^{- \tau^2}}{4 \tau^2}, \tau \rightarrow \infty.
\end{eqnarray}
Plugging \eqref{eq:asymp_anal1} and \eqref{eq:asymp_anal2} into the formula we have for $\rho$ and $\delta$ in Theorem  \ref{thm:phasetranscurve}, we obtain
\begin{eqnarray}
&&\delta \sim \frac{{\rm e}^{- \tau^2}}{2},   \tau \rightarrow \infty, \nonumber \\
&&\rho \sim \frac{1}{\tau^2}, \tau \rightarrow \infty, \nonumber 
\end{eqnarray}
which completes the proof. $\hfill \Box$

\subsection{Proof of Lemma \ref{lem:softthreshincconcave}} \label{app:pf_concavityrisk}
According to Lemma \ref{lem:phaseindep}, the phase $\theta$ does not affect the risk function, and therefore we set it to zero. We have
\begin{eqnarray*}
r(\mu, \tau) &=& \E\left|\eta(\mu + Z_1+ i Z_2; \tau) -\mu \right|^2 \nonumber \\
&=& \E(\eta^R(\mu + Z_1+ iZ_2;\tau) - \mu)^2 \nonumber \\
&& \! \! \! \! \! \! + \ \E(\eta^I(\mu+Z_1+ iZ_2;\tau ))^2, \nonumber
\end{eqnarray*}
where $\eta^R(\mu + Z_1+ iZ_2;\tau) = \left(\mu+ Z_1 -\frac{\tau(\mu+Z_1)}{A} \right)\mathds{I}(A \geq \tau)$,  $\eta^I(\mu + Z_1+ iZ_2;\tau) = \left(z_2 - \frac{\tau Z_2}{A} \right) \mathds{I}(A \geq \tau)$ and
$A \triangleq \sqrt{(\mu+ Z_1)^2 + Z_2^2}$.
If we calculate the derivative of the risk function with respect to $\mu$, then we have
\begin{eqnarray*}
\lefteqn{\frac{d r(\mu,\tau)}{d\mu}} \nonumber \\
&=& 2\E(\eta^R(\mu + Z_1+ iZ_2;\tau) - \mu)\Big(\frac{d \eta^R}{d\mu}-1\Big) \nonumber \\
&+& 2 \E \eta^I(\mu+Z_1 +iZ_2; \tau) \frac{d\eta^I}{d\mu}.\\
\end{eqnarray*}
It is straightforward to show that
\begin{eqnarray*}
\lefteqn{\frac{d r(\mu,\tau)}{d\mu} =  \E \Big[(\eta^R(\mu + Z_1+ iZ_2;\tau) - \mu)} \nonumber \\
&& \hspace{1.2cm} \Big(\Big(1- \frac{\tau Z_2^2}{A^3}\Big)\mathds{I}(A\geq \tau)-1 \Big) \Big] \nonumber \\
&+& \! \! \! \E \Big[\eta^I(\mu+Z_1 +iZ_2; \tau)\Big(\frac{\tau(\mu+Z_1)Z_2}{A^3}\Big) \mathds{I}(A \geq \tau) \Big]\\
&= &\! \! \! \mu \E[\mathds{I}(A \leq \tau)\ \nonumber \\
& -&\! \! \! \E\left[ \left(Z_1 - \frac{\tau(\mu+Z_1)}{A}\right)\left(\frac{\tau Z_2^2}{A^3}\right) \mathds{I}(A\geq \tau)\right] \nonumber \\
&+&\! \! \! \E\left[\left(Z_2- \frac{\tau Z_2}{A}\right)\left( \frac{\tau (\mu+Z_1)Z_2}{A^3}\right) \mathds{I}(A\geq \tau) \right]\\
&= &\! \! \! \mu \E[\mathds{I}(A \leq \tau)] \nonumber \\
&-&\! \! \! \E \left[ \left( \frac{\tau Z_1 Z_2^2}{A^3} + \frac{\tau^2 \mu Z_2^2}{A^4} + \frac{\tau^2 Z_1 Z_2^2}{A^4} \right)  \mathds{I}(A \geq \tau) \right]\nonumber \\
&+& \! \! \! \E \Big[ \Big( \frac{\tau \mu Z_2^2}{A^3} + \frac{\tau Z_2^2 Z_1}{A^3} - \frac{\tau^2 \mu Z_2^2}{A^4} - \frac{\tau^2 Z_1 Z_2}{A^4}\Big) \mathds{I}(A \geq \tau) \Big] \\
 & =&\! \! \! \mu \E[\mathds{I}(A \leq \tau)]+ \mu \E\left[ \frac{\tau Z_2^2}{A^3} \right] \geq 0.
\end{eqnarray*}
Therefore, the risk of the complex soft thresholding is an
increasing function of $\mu$. Furthermore,
\[
2\frac{d r(\mu,\tau)}{d\mu^2} = \frac{1}{\mu}\frac{d r(\mu,\tau)}{d\mu}  = \E(\mathds{I}(A \leq \tau)) +  \E\left( \frac{\tau Z_2^2}{A^3} \right).
\]
It is clear that the next derivative with respect to $\mu^2$ is negative, and therefore the function is concave. $\hfill \Box$

\subsection{Proof of Proposition \ref{prop:minimaxsoft}}\label{app:proofprop}
As is clear from the statement of the theorem, the main challenge here is to characterize
\[
\sup_{q \in F_{\epsilon}} \E |\eta(X+Z_1+iZ_2; \tau)- X |^2.
\]
Let $X = \mu {\rm e}^{i \theta}$, where $\mu$  and $\theta$ are the phase and amplitude of $X$ respectively. According to Lemma \ref{lem:phaseindep}, the risk function is independent of $\theta$. Furthermore, since $q \in F_{\epsilon}$, we can write it as $q (\mu)= (1- \epsilon) \delta_0(\mu) +(1- \epsilon) G(\mu)$, where $G$ is absolutely continuous with respect to Lebesgue measure. We then have
\begin{eqnarray} \label{eq:softmmrisk1}
\lefteqn{\E |\eta(\mu+ Z_1 + i Z_2; \tau)- X |^2} \nonumber \\
& =& (1- \epsilon ) \E|\eta( Z_1 + i Z_2; \tau) |^2 \nonumber \\
&& \! \! \! + \ \epsilon \E_{\mu \sim G}\E_X |\eta(\mu+ Z_1 + i Z_2; \tau)- \mu |^2 \nonumber\\
&=&  2 (1-\epsilon)  \int_{w= \tau}^{\infty} w(w- \tau)^2 {\rm e}^{- w^2}dw \nonumber \\
&& \! \! \! + \ \epsilon \E_{\mu \sim G}\E_\mu |\eta(\mu+ Z_1 + i Z_2; \tau)- \mu |^2.
\end{eqnarray}
The notation $\E_{\mu \tilde G_m}$ means that we are taking the expectation with respect to $\mu$, whose distribution is $G$. Also $\E_\mu$ represents the conditional expectation given the random variable $\mu$. Define $\delta_m(\mu) \triangleq \delta(\mu-m)$. Using Lemma \ref{lem:softthreshincconcave} and the Jensen inequality we prove that $\{G_m(\mu)\}_{m=1}^{\infty}$, $G_m(\mu)=  \delta_m(\mu)$ is the least favorable sequence of distributions, i.e., for any distribution $G$
\begin{eqnarray*}
\lefteqn{ \E_{\mu \sim G}\E_\mu |\eta(\mu+ Z_1 + i Z_2; \tau)- \mu |^2} \nonumber \\
&\leq& \lim_{m \rightarrow \infty} \E_{\mu \sim G_m}\E_\mu |\eta(\mu+ Z_1 + i Z_2; \tau)- \mu |^2.
\end{eqnarray*}
Toward this end we define $\tilde{G}(\mu)$ as $\delta_{\mu_0}(\mu)$ such that $\mu_0^2 = E_G(\mu^2)$. In other words, $\tilde{G}$ and $G$ have the same second moments.
From the Jensen inequality we have
\begin{eqnarray*}
 \lefteqn{\E_{\mu \sim G}\E_\mu |\eta(\mu+ Z_1 + i Z_2; \tau)- \mu |^2 }  \nonumber \\
 &\leq& \! \! \! \E_{\mu \sim \tilde{G}}\E_\mu |\eta(\mu+ Z_1 + i Z_2; \tau)- \mu |^2.
\end{eqnarray*}
Furthermore, from the monotonicity of the risk function proved in Lemma \ref{lem:softthreshincconcave}, we have
\begin{eqnarray*}
\lefteqn{\E_{\mu \sim \tilde{G}}\E_\mu |\eta(\mu+ Z_1 + i Z_2; \tau)- \mu |^2 } \nonumber \\
&\leq& \! \! \!  \E_{\mu \sim G_m}\E_\mu |\eta(\mu+ Z_1 + i Z_2; \tau)- \mu |^2 \ \  \forall m> \mu_0.
\end{eqnarray*}
Again we can use the monotonicity of the risk function to prove that
\begin{eqnarray}\label{eq:softmmrisk2}
\lefteqn{ \E_{\mu \sim G_m}\E_\mu |\eta(\mu+ Z_1 + i Z_2; \tau)- \mu |^2} \nonumber \\
 &\leq& \! \! \!  \lim_{m \rightarrow \infty}    \E_{\mu \sim G_m}\E_\mu |\eta(\mu+ Z_1 + i Z_2; \tau)- \mu |^2 \nonumber \\
 &=&  1+ \tau^2. \hspace{.2 cm}
\end{eqnarray}
The last equality is the result of the monotone convergence theorem. Combining \eqref{eq:softmmrisk1} and \eqref{eq:softmmrisk2} completes the proof. $\hfill \Box$

\section{Conclusions}
We have considered the problem of recovering a complex-valued sparse signal from an undersampled set of complex-valued
measurements. We have accurately analyzed the asymptotic performance
of c-LASSO and CAMP algorithms. Using the state evolution framework, we have derived simple
expressions for the noise sensitivity and phase transition of these two algorithms.
 The results presented here show that
substantial improvements can be achieved when the real and imaginary
parts are considered jointly by the recovery algorithm. For instance,  Theorem \ref{thm:ptasumptote} shows that in the high undersampling regime the phase
transition of CAMP and c-BP is two times higher than the phase transition of r-LASSO.

\section*{Acknowledgements}
Thanks to David Donoho and Andrea Montanari for their encouragement and their valuable suggestions on an early draft of this paper. We would also like to thank the reviewers and the associate editor for the thoughtful comments that helped us improve the quality of the manuscript, and to thank Ali Mousavi for careful reading of our paper and suggesting improvements. This
work was supported by the grants NSF CCF-0431150, CCF-0926127,
and CCF-1117939; DARPA/ONR N66001-11-C-4092 and N66001-11-1-4090; 
ONR N00014-08-1-1112, N00014-10-1-0989, and N00014-11-1-0714;
AFOSR FA9550-09-1-0432; ARO MURI W911NF-07-1-0185 and W911NF-09-1-0383; 
and the TI Leadership University Program.

\bibliographystyle{unsrt}
\bibliography{suthesis}

\ifCLASSOPTIONcaptionsoff
  \newpage
\fi



%

%

\begin{IEEEbiographynophoto}{Arian Maleki} 
received his Ph.D. in electrical engineering
from Stanford University in 2010. After spending 2011-2012 in the DSP group at
Rice University, he joined Columbia University as an Assistant Professor of Statistics. His research interests
include compressed sensing, statistics, machine learning,
signal processing and optimization. He received his M.Sc. in
statistics from Stanford University, and B.Sc. and M.Sc. both in
electrical engineering from Sharif University of Technology.
\end{IEEEbiographynophoto}

\begin{IEEEbiographynophoto}{Laura Anitori}
(S'09) received the M.Sc. degree (summa cum
laude) in telecommunication engineering from the University of Pisa,
Pisa, Italy, in 2005. From 2005 to 2007, she worked as a research
assistant at the Telecommunication department, University of Twente,
Enschede, The Netherlands on importance sampling and its application
to STAP detectors. Since January 2007, she has been with the Radar
Technology department at TNO, The Hague, The Netherlands. Her
research interests include radar signal processing, FMCW and ground
penetrating radar, and airborne surveillance systems. Currently, she
is also working towards her Ph.D. degree in cooperation with the
department of Geoscience and Remote Sensing at Delft University of
Technology, Delft, The Netherlands on the topic of Compressive
Sensing and its applications to radar detection. In 2012, she
received the second prize at the IEEE Radar Conference for the best
student paper award.
\end{IEEEbiographynophoto}


\begin{IEEEbiographynophoto}{Zai Yang}
Zai Yang was born in Anhui Province, China, in 1986. He received the B.S. from mathematics and M.Sc. from applied mathematics in 2007 and 2009, respectively, from Sun Yat-sen (Zhongshan) University, China. He is currently pursuing the Ph.D. degree in electrical and electronic engineering at Nanyang Technological University, Singapore. His current research interests include compressed sensing and its applications to source localization and magnetic resonance imaging.
\end{IEEEbiographynophoto}

\begin{IEEEbiographynophoto}{Richard G. Baraniuk}
 received the B.Sc. degree in 1987 from
the University of Manitoba (Canada), the M.Sc. degree in 1988 from the
University of Wisconsin-Madison, and the Ph.D. degree in 1992 from the
University of Illinois at Urbana-Champaign, all in Electrical
Engineering. After spending 1992--1993 with the Signal Processing
Laboratory of Ecole Normale Sup\'{e}rieure, in Lyon, France, he
joined Rice University, where he is currently the Victor E.\ Cameron
Professor of Electrical and Computer Engineering.   His research
interests lie in the area of signal processing and machine learning.

Dr.\ Baraniuk received a NATO postdoctoral fellowship from NSERC in
1992, the National Young Investigator award from the National
Science Foundation in 1994, a Young Investigator Award from the
Office of Naval Research in 1995, the Rosenbaum Fellowship from the
Isaac Newton Institute of Cambridge University in 1998, the  C.\
Holmes MacDonald National Outstanding Teaching Award from Eta Kappa
Nu in 1999, the University of Illinois ECE Young Alumni Achievement
Award in 2000, the Tech Museum Laureate Award from the Tech Museum
of Innovation in 2006, the Wavelet Pioneer Award from SPIE in 2008,
the Internet Pioneer Award from the Berkman Center for Internet and
Society at Harvard Law School in 2008, the World Technology Network
Education Award and IEEE Signal Processing Society Magazine Column
Award in 2009, the IEEE-SPS Education Award in 2010, the WISE
Education Award in 2011, and the SPIE Compressive Sampling Pioneer
Award in 2012. In 2007, he was selected as one of Edutopia
Magazine's Daring Dozen educators, and the Rice single-pixel
compressive camera was selected by MIT Technology Review Magazine as
a TR10 Top 10 Emerging Technology. He was elected a Fellow of the
IEEE in 2001 and of AAAS in 2009.
\end{IEEEbiographynophoto}



\end{document}